\setlist{leftmargin=10pt}
  \providecommand\BibTeX{{%
    \normalfont B\kern-0.5em{\scshape i\kern-0.25em b}\kern-0.8em\TeX}}}
\def\BibTeX{{\rm B\kern-.05em{\sc i\kern-.025em b}\kern-.08em
    T\kern-.1667em\lower.7ex\hbox{E}\kern-.125emX}}
\DeclareMathOperator{\secsec}{sec}
\begin{document}
\setlength{\abovedisplayskip}{0pt plus 0pt minus 0pt}
\setlength{\belowdisplayskip}{0pt plus 0pt minus 0pt}
\setlength\abovedisplayshortskip{0pt plus 0pt minus 0pt}
\setlength\belowdisplayshortskip{0pt plus 0pt minus 0pt}
\title{Optimizing \mbox{Utility-Energy} Efficiency  for the Metaverse over Wireless Networks under Physical Layer Security\\[-1pt]} 

\newtheorem{condition}{Condition}
\newtheorem{remark}{Remark}
\newtheorem{definitionx}{Definition}
 \newtheorem{theoremx}{Theorem}

\newenvironment{talign}
 {\let\displaystyle\textstyle\align}
 {\endalign}
\newcommand{\junzhao}[1]{\iffalse\ding{110}\ding{43}\fi\textcolor{red}{Jun Zhao: #1}}


\author{Jun Zhao, Xinyu Zhou, Yang Li, Liangxin Qian}
\affiliation{%
  \institution{Nanyang Technological University, Singapore}
  \country{junzhao@ntu.edu.sg, \{xinyu003, yang048, qian0080\}@e.ntu.edu.sg\\[-5pt]}
  }
\renewcommand{\shortauthors}{}

\begin{abstract}
The Metaverse, an emerging digital space, is expected to offer various services mirroring the real world. Wireless communications for mobile Metaverse users should be tailored to meet the following user characteristics: 1) emphasizing application-specific perceptual utility instead of simply the transmission rate, 2) concerned with energy efficiency due to the limited device battery and energy intensiveness of some applications, and 3) caring about security as the applications may involve sensitive personal data. To this end, this paper incorporates application-specific utility, energy efficiency, and physical-layer security (PLS) into the studied optimization in a wireless network for the Metaverse. 
Specifically, after introducing utility-energy efficiency (UEE) to represent each Metaverse user's application-specific objective under PLS, we formulate an optimization to maximize the network's weighted sum-UEE by deciding users' transmission powers and communication bandwidths. The formulated problem belongs to the sum-of-ratios optimization, for which prior studies have demonstrated its difficulty. Nevertheless, our proposed algorithm 1) obtains the global optimum for the weighted sum-UEE optimization, via a transform to parametric convex optimization problems, 2) applies to any utility function which is concave,  increasing, and twice differentiable, and 3) achieves a linear time complexity in the number of users (the optimal complexity in the order sense). Simulations confirm the superiority of our algorithm over other approaches. We explain that our technique for solving the  sum-of-ratios optimization is applicable to other optimization problems in wireless networks and mobile computing.




\end{abstract}

\begin{CCSXML}
<ccs2012>
 <concept>
  <concept_id>10010520.10010553.10010562</concept_id>
  <concept_desc>Computer systems organization~Embedded systems</concept_desc>
  <concept_significance>500</concept_significance>
 </concept>
 <concept>
  <concept_id>10010520.10010575.10010755</concept_id>
  <concept_desc>Computer systems organization~Redundancy</concept_desc>
  <concept_significance>300</concept_significance>
 </concept>
 <concept>
  <concept_id>10010520.10010553.10010554</concept_id>
  <concept_desc>Computer systems organization~Robotics</concept_desc>
  <concept_significance>100</concept_significance>
 </concept>
 <concept>
  <concept_id>10003033.10003083.10003095</concept_id>
  <concept_desc>Networks~Network reliability</concept_desc>
  <concept_significance>100</concept_significance>
 </concept>
</ccs2012>
\end{CCSXML}


\keywords{Wireless networks, Metaverse, physical-layer security, resource allocation, utility-energy efficiency.}



\maketitle
\thispagestyle{plain}
\pagestyle{plain}

\section{Introduction}

The Metaverse~\cite{wang2022survey} is regarded as the next generation of the Internet, which consolidates technologies including extended reality (XR), digital twin, and wireless communications. 
In 2021, Facebook changed its name to Meta, raising public interest in the Metaverse.

Mobile users typically access the Metaverse via wireless communications. It is important to optimize wireless networks to meet the attributes of Metaverse users, which we present next. 
  

\textbf{Characteristics of Metaverse users}. We identify the following traits for mobile users of the Metaverse. 
\begin{itemize}[noitemsep,topsep=0pt]
\item[\ding{172}] Users aim to maximize \textbf{application-specific perceptual utility} rather than simply the transmission rate. Traditional network optimization considers the Quality of Service (QoS), such as the transmission rate, which quantifies the objective performance of the system. For the Metaverse, humans are the main players, so  the Quality of Experience (QoE) capturing the perceptual experience of users is a better metric than QoS. To this end, our utility model should be adjusted accordingly.
\item[\ding{173}] Users care about \textbf{energy efficiency} due to the limited battery of mobile devices and energy intensiveness of some  applications. For instance, Meta Quest 2 with a fully charged battery can last for just 2 hours for gaming or 3 hours for video watching~\cite{chung2022xr}.
\item[\ding{174}] Users are concerned with \textbf{security} since certain Metaverse applications may involve  personal (e.g., biometric and health) data. Researchers at UC Berkeley have shown in~\cite{nair2022exploring} that many existing Metaverse applications are vulnerable to privacy breaches by an attacker who tries to infer users' sensitive information. 
\end{itemize} 

\textbf{The Metaverse over wireless networks: Utility-energy efficiency optimization under physical-layer security.}
Since mobile users accessing the Metaverse are constrained by wireless communication resources, it is vital to tailor wireless networks to match the above characteristics of Metaverse users. We formalize an optimization problem about the utility-energy efficiency (UEE) under physical-layer security for the motivation discussed below, where UEE for each user is defined as the application-specific perceptual utility over energy consumption. 

Energy efficiency (EE) plays a vital role in both the economy and the environment. A faster transmission rate providing a higher quality of experience for users will also increase energy consumption. Therefore, it is essential to build an energy-efficient Metaverse system.
Nevertheless, it is not viable to emphasize energy saving overwhelmingly. The Metaverse will provide many digital services, and lower transmission speeds will affect users' access to profits and high-quality experiences.  Hence, how to allocate the resources (e.g., the transmission power and bandwidth) in the  network to maximize the weighted sum of all users' UEE is worth investigating, where each user's weight represents its priority in the optimization.
The weighted sum-UEE optimization aims to save energy and improve the utilities for users, addressing ``\ding{172}'' and ``\ding{173}'' above.

For ``\ding{174}'' above,
 the confidential data of Metaverse applications should be accessible to only the intended users instead of eavesdroppers. To this end, we aim to achieve physical-layer security to protect the information during transmission. Secrecy capacity is an important metric in physical-layer security.
It refers to the communication rate that does not leak information to an eavesdropper. 
In order to keep the information of users from the eavesdroppers, we extend our Metaverse energy efficiency problem to physical-layer security by considering the secrecy rate instead of the original rate.

Our \textbf{contributions} include problem formulation, a widely applicable optimization technique, and an optimal algorithm in terms of the solution quality and time-complexity order, as listed below.
\begin{itemize}
\item We \textbf{formulate the problem} of maximizing the weighted sum of users' utility-energy efficiency (PLS) under physical-layer security for the Metaverse, by deciding users' transmission powers and bandwidth allocation. To the best of our knowledge, this problem has not yet been studied in the literature, inside and outside of Metaverse research.
\item The formulated problem belongs to the sum-of-ratios optimization, which is \mbox{non-convex}. 
We explain that the problem is difficult to solve even using the  pseudoconcavity notion. 
\item Despite the challenges, we solve the problem and develop an algorithm, via the \textbf{technique} of transforming the sum-of-ratios to parametric convex optimization problems.
\item Our proposed \textbf{algorithm} 
\begin{itemize}
\item[\textbullet] obtains the \textbf{global optimum},
\item[\textbullet]  applies to \textbf{any} utility function which is concave,  increasing, and twice differentiable, and 
\item[\textbullet] allows \textbf{heterogeneous}  utility-function types among the users,
\item[\textbullet] runs in \textbf{linear} time with respect to the number of users, which means the \textbf{optimal complexity} in the order sense.
\end{itemize}
\item 
Simulations demonstrate the superiority of our algorithm over
 other approaches in terms of the solution quality and time complexity. The utility functions used in the simulations are based on real-world datasets.
\item We explain that our \textbf{technique} can go beyond our problem to handle functions of product or quotient terms in \textbf{general} mathematical optimization. We 
illustrate this by discussing example
problems in wireless networks and mobile computing. Researchers can use our technique to solve difficult problems.
\end{itemize}



\textbf{Roadmap.} The rest of the paper is organized as follows. Section~\ref{sec:literature_review} provides related studies. In Section \ref{sec:sys_model}, we formulate the studied optimization problem. Section \ref{sec:prob_formu} presents the challenges in solving the problem. Section~\ref{secAlgorithm} elaborates on our algorithm which finds~a global optimum of the problem. 
  In Section~\ref{Insights}, we discuss  the application of our optimization technique to other problems. Simulation results are reported in Section \ref{sec:experiments}. Section \ref{sec:conclusion} concludes the paper.\vspace{-1pt} 
 

\section{Related Work\vspace{-1pt}} \label{sec:literature_review}

We survey related research: energy efficiency and physical-layer security in Section~\ref{Related work:A}, and wireless Metaverse in Section \ref{Related work:B}.\vspace{-2pt}


\subsection{Energy efficiency optimization and \mbox{physical-layer security in wireless\vspace{-1pt} networks}}\label{Related work:A}


In wireless networks, the traditional notion of energy efficiency (EE) for a user
 is defined as the ratio of data rate over power consumption (i.e., the ratio of transmitted data size over energy consumption). Maximizing the weighted sum of EE (WSEE) is addressed in~\cite{zamani2020optimizing,wu2016user}. 
Different from WSEE, the system EE  in~\cite{du2022weighted} is defined as the ratio of all users' sum rates over all users' sum power consumption. 




EE, WSEE, and system EE above do not examine specific application requirements. Accommodating various applications requires the concept of utility-energy efficiency (UEE), which for a user is the ratio of the application-specific rate-dependent utility over power consumption. UEE in our paper has also been investigated in~\cite{meshkati2009energy}, which adopts game theory to model an interference-constrained wireless network, where each user maximizes its own UEE by deciding its transmission power. Different from UEE defined for individual users, the system UEE in~\cite{huang2018utility} results from dividing the sum of all users' rate-dependent utilities by the sum of all users' power consumption. This system UEE optimization in~\cite{huang2018utility} deals with just one ratio, which is much easier than the sum-of-ratios optimization in our paper.
Moreover, the optimization method of~\cite{huang2018utility} is applicable to only the specific utility function $\kappa_n \ln r_n$ for data rate $r_n$ and constant $\kappa_n$. Even just changing the utility function to $\kappa_n \ln (1+r_n)$ will make~\cite{huang2018utility}'s approach invalid; in particular, (18a) in~\cite{huang2018utility} will be \mbox{non-concave} and \mbox{non-convex} after the above change. In contrast, our work applies to any utility function that is \mbox{concave},  increasing, and twice differentiable. Besides the above major differences,~\cite{huang2018utility} considers interference-constrained wireless networks and optimizes only the transmission powers, while we adopt FDMA and jointly optimize the transmission powers and bandwidth allocation. 

Next, we discuss the incorporation of physical-layer security (PLS) into EE optimization.
Because the WSEE as the sum of ratios is more difficult to analyze  than the system EE, existing studies incorporating PLS into EE typically investigate the system EE instead of WSEE, after replacing the achievable rates with secrecy rates, as shown in~\cite{jiang2023secrecy,zappone2019secrecy}. Despite the above work on EE optimization under PLS, we are unable to find any prior work on UEE optimization under PLS and hence the problem of our paper is new. \vspace{-1pt}




\subsection{Metaverse over wireless networks\vspace{-1pt}}\label{Related work:B}


Calibrating wireless networks for mobile users accessing the Metaverse is an emerging research topic. 
Recently, a number of papers on the topic have appeared in different venues:~\cite{yu2022asynchronous} in JSAC co-authored by the first author of the current paper,~\cite{meng2022sampling} in JSAC,~\cite{wang2023semantic} in TWC,~\cite{jiang2022reliable,ren2022quantum} in TVT, and a survey paper~\cite{wang2022survey} in COMST, where the meanings of the abbreviations can be found in the references.

Among the technical papers above,~\cite{yu2022asynchronous,meng2022sampling,ren2022quantum} adopt reinforcement learning to optimize wireless performance for the Metaverse, while~\cite{wang2023semantic,jiang2022reliable} 
utilize economic theories to incentivize users for improving the usage of semantic-aware sensing and coded distributed computing for wireless Metaverse. The current paper's co-authors have recently optimized wireless federated learning in~\cite{zhou2022resource} for the Metaverse via alternating optimization, which achieves neither local nor global optimum. In contrast, our technique of the current paper goes beyond UEE optimization under PLS. Using it in~\cite{zhou2022resource} will obtain a global optimum, based on our Section~\ref{Insights} later. \vspace{-1pt}

\section{Problem Formulation\vspace{-1pt}} \label{sec:sys_model}

In this section, we will present the system model and formalize the optimization problem.\vspace{-1pt}

\subsection{System model and metrics\vspace{-1pt}}

In our studied system, a base station acts as the Metaverse server for $N$ legitimate users $U_n|_{n=1,\ldots,N}$. There are also $N$ eavesdroppers $E_n|_{n=1,\ldots,N}$, where $E_n$ tries to intercept the communication between $U_n$ and the server. Fig.~\ref{fig:sys_model} illustrates our system.


\begin{figure}[!t]
    \centering
    \includegraphics[width=8.5cm]{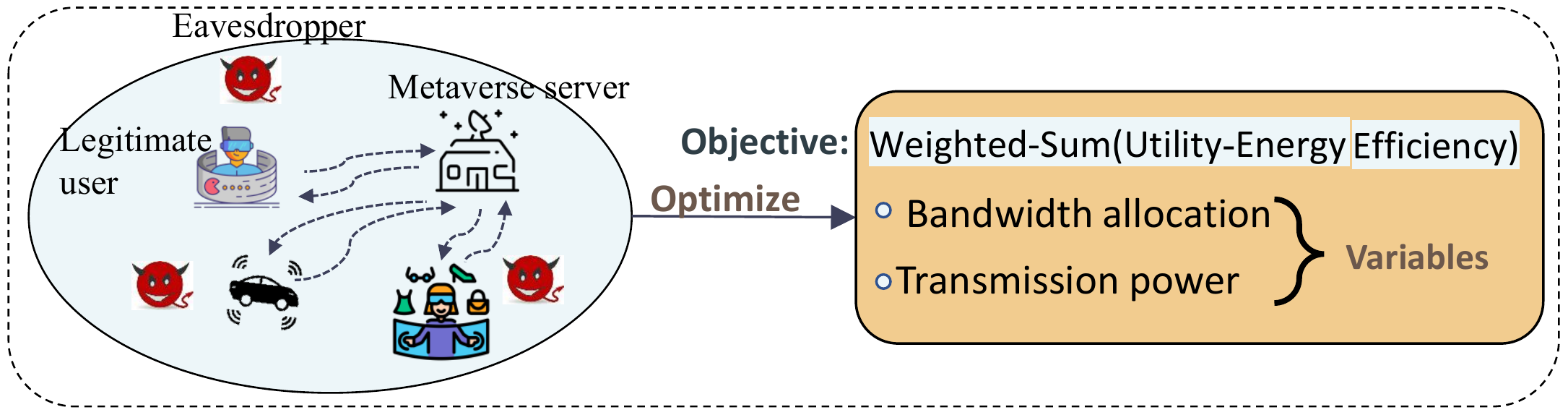}
    \vspace{-23pt}\caption{Our system: A server provides Metaverse services for $N$ legitimate users $U_n|_{n=1,\ldots,N}$, while the $n$th eavesdropper $E_n$ tries to intercept the communication between user $U_n$ and the server. The studied problem is to maximize the weighted sum of all users' utility-energy efficiency by deciding the bandwidth allocation and transmission powers. \vspace{-17pt}}
    \label{fig:sys_model}
\end{figure}

Our problem is applicable to  downlink and uplink communications between all legitimate users and the Metaverse server. \mbox{Suppose}  the communications follow frequency division multiple access (FDMA), where different legitimate users' signals will not interfere with each other. For each legitimate user $U_n$, let $B_n$ be the bandwidth, and $p_n$ be its transmission power in the case of uplink communication, or the transmission power of the server used to communicate with $U_n$ in the case of downlink communication. For simplicity, below we use uplink communication to introduce the problem.
Throughout this paper, the $n$th dimension of an $N$-dimensional vector $\bm{x}$ is denoted by $x_n$ (unless stated otherwise). 
Hence, we have $\boldsymbol p :=  [p_1, p_2, \ldots, p_N]$ and $\boldsymbol B : = [B_1, B_2, \ldots, B_N]$.


\textbf{Transmission rate.} 
According to the Shannon formula,  the transmission rate $r_n(p_n,B_n)$ of legitimate user $U_n$ is  
\begin{align}
    r_n(p_n,B_n) = B_n\log_2(1+\frac{g_n p_n}{{\sigma_n}^2B_n}),
\end{align}
where ${\sigma_n}^2$ is the power spectral density of Gaussian noise, $g_n$ is the channel attenuation from $U_n$ to the server. 
The function notation is used in this paper; e.g., $r_n(p_n,B_n)$ is a function of $p_n$ and $B_n$.






\textbf{Secrecy rate.} Eavesdropper $E_n$ aims to intercept the communication between legitimate user $U_n$ and the server. Let $r_{n,\textnormal{e}}$ be the eavesdropping rate of $E_n$. We consider  $r_{n,\textnormal{e}}$ as a constant depending on only $n$. Then the secrecy rate of $U_n$ is given by
\begin{align}
r_{n,\textnormal{s}}(p_n,B_n):=r_n(p_n,B_n) - r_{n,\textnormal{e}}. \label{defrns}
\end{align}

\textbf{Utility.}
We regard user $U_n$'s application-specific perceptual utility rate as a function of the secrecy rate $r_{n,\textnormal{s}}(p_n,B_n)$ to emphasize physical-layer security (PLS). Specifically, using\footnote{We require $f_n(x)$ to be defined for any $x >0$. We do not require $f_n(x)$ to be defined for $x=0$, but if $\lim_{x\to 0^+}f_n(x)$ exists and is finite, we can just use it to define $f_n(0)$. We also do not enforce any condition on whether $f_n(x)$  is \mbox{non-negative} or not. Additional conditions of $f_n(x)$ are discussed in Section~\ref{secfConditions}.\label{ftntsecfConditions}} $f_n(\cdot): (0, \infty) \to  (-\infty, \infty)$ to denote the utility rate function, user $U_n$'s utility rate is given by $f_n(r_{n,\textnormal{s}}(p_n, B_n))$. Consider a small time interval $[t, t+\Delta t]$, where $\Delta t$ is small enough such that $r_{n,\textnormal{s}}(p_n,B_n)$ can be seen as \mbox{invariant}  during  $[t, t+\Delta t]$. Then the utility of user $U_n$ over the time interval $[t, t+\Delta t]$ is $\mathcal{U}_n^{[t, t+\Delta t]}:=f_n(r_{n,\textnormal{s}}(p_n, B_n)) \Delta t$.

\textbf{Power \& energy 
consumption.} The same as~\cite{huang2018utility,xu2013throughput}, the power consumed by user $U_n$ includes not just the transmission power $p_n$, but also the circuit power $p_n^{\textnormal{cir}}$, which is  a constant given $n$. During the time interval $[t, t+\Delta t]$, user $U_n$'s energy consumption is given by $\mathcal{E}_n^{[t, t+\Delta t]}:=(p_n + p_n^{\textnormal{cir}}) \Delta t$.



\textbf{\mbox{Utility-energy} efficiency.} For user $U_n$, we define its utility-energy efficiency (UEE) $\varphi_n(p_n,B_n)$ under PLS as the ratio of $\frac{\mathcal{U}_n^{[t, t+\Delta t]}}{\mathcal{E}_n^{[t, t+\Delta t]}}$ for small enough $\Delta t$, which induces
\begin{align} 
\varphi_n(p_n,\hspace{-1pt}B_n)\hspace{-1pt}:= \hspace{-1pt} \frac{f_n(r_{n,\textnormal{s}}(p_n,\hspace{-1pt}B_n))}{p_n + p_n^{\textnormal{cir}}} \hspace{-1pt}=\hspace{-1pt}  \frac{f_n(r_n(p_n,\hspace{-1pt}B_n)\hspace{-1pt}-\hspace{-1pt}r_{n,\textnormal{e}})}{p_n + p_n^{\textnormal{cir}}} .\label{eqdefvarphi}
\end{align}
When $f_n(\cdot)$ becomes the identity function (i.e., $f_n(x)=x$), $\varphi_n(p_n,B_n)$ becomes $\frac{r_{n,\textnormal{s}}(p_n, B_n)}{p_n + p_n^{\textnormal{cir}}}$ (i.e., $\frac{\text{secrecy rate}}{\text{power consumption}}$), which is just the traditional notion of energy efficiency under PLS~\cite{jiang2018secrecy}.

 \subsection{Utility-energy efficiency (UEE) optimization} \label{optproblem}














Our goal is to maximize the weighted sum of all users' UEE under PLS. This optimization problem is formalized as follows:
\begin{subequations}\label{problem:1}
\begin{align} 
   \textnormal{Problem $\mathbb{P}_1$:~}  &\max_{\boldsymbol{p}, \boldsymbol{B}} ~
   \sum_{n \in \mathcal{N}}  c_n\varphi_n(p_n,B_n)
    \tag{\ref{problem:1}}\\ 
    \textnormal{subject to: }    & \sum_{n \in \mathcal{N}} B_n \le B_{\textnormal{total}},
    \label{constra:bandwidth} \\ 
   & r_n(p_n,B_n) \geq r_n^{\min}, \text{ for all } n \in \mathcal{N} := \{1,\cdots, N\}, \label{constra:rate}
\end{align}
\end{subequations}
where
$c_n > 0$ represents the priority of user $U_n$ in the optimization. Larger $c_n$ means higher priority. Constraints \text{(\ref{constra:bandwidth})} sets the total bandwidth for FDMA. Constraint (\ref{constra:rate}) ensures that the  transmission rate $r_n(p_n,B_n)$ of  user $U_n$ should be at least a constant $r_n^{\min}$ ($r_n^{\min}$ can vary for different $n$). Condition~\ref{conditionrn} below is about minimum legitimate rates $r_n^{\min}|_{n \in \mathcal{N}} $ and eavesdropping rates $ r_{n,\textnormal{e}}|_{n \in \mathcal{N}} $.\vspace{-1pt}



\begin{condition} \label{conditionrn}
\hspace{-3pt}For all $n \in \mathcal{N}$, we have \mbox{$r_n^{\min} \hspace{-1pt}\geq\hspace{-1pt} r_{n,\textnormal{e}}, r_n^{\min} \hspace{-1pt}>\hspace{-1pt}0, r_{n,\textnormal{e}} \hspace{-1pt}\geq \hspace{-1pt}0$}.  
\end{condition}

We have the following remarks about Condition~\ref{conditionrn}.

\begin{remark} \label{remarkrn0}
Condition~\ref{conditionrn} with~(\ref{constra:rate}) ensures \mbox{$r_n(p_n,B_n) \geq r_{n,\textnormal{e}}$}; i.e., each user $U_n$'s secrecy rate $r_{n,\textnormal{s}}(p_n,B_n)$ 
is \textit{non-negative}.
\end{remark}


\begin{remark} \label{remarkrn1}
Condition~\ref{conditionrn} covers the following special case where we do not impose physical-layer security but still enforce a minimum transmission rate for each user: $r_{n,\textnormal{e}} = 0$ and $r_n^{\min} >0$ for all $n \in \mathcal{N}$.
\end{remark}

\begin{remark} \label{remarkrn2}
We enforce $r_n^{\min} >0$ in Condition~\ref{conditionrn} so that each user $U_n$ will always be allocated with a strictly positive bandwidth; i.e., $B_n >0$ for all $n \in \mathcal{N}$. This avoids analyzing the degenerate case where only a subset of $N$ users share the total bandwidth \vspace{-1pt}$B_{\textnormal{total}}$. 

\end{remark}

 We also comment on how Problem $\mathbb{P}_{1}$ is optimized in\vspace{-1pt} practice.

\begin{remark}
Problem $\mathbb{P}_{1}$ will be solved using our Algorithm~\ref{algo:MN} in Section~\ref{secproveP1v2}. Then a  question  is which entity solves $\mathbb{P}_{1}$ in practice. We let the Metaverse server perform the task, assuming that it has obtained the values of $p_n^{\textnormal{cir}},r_n^{\min}, r_{n,\textnormal{e}}$ for all $n$ (e.g., these are shared with the server before the optimization stage). After the server solves $\mathbb{P}_{1}$, it will notify each legitimate user $U_n$ of the $p_n$ and $B_n$\vspace{-1pt} values.
\end{remark}
 




\section{Challenges in Solving Problem $\mathbb{P}_1$\vspace{-1pt}} \label{sec:prob_formu}

We first state the optimization preliminaries and conditions of the function $f_n(x)$, which are used to explain the difficulty in solving~$\mathbb{P}_1$.








\subsection{\mbox{Preliminaries of mathematical optimization}}

Let $f(\boldsymbol{x})$ be a function defined on a convex set $\mathcal{S}$, which is a subset of a real vector space. Then we have the following from Definitions 1.3.1, 2.2.1, and 3.2.1 of the book~\cite{cambini2008generalized}.
\begin{definitionx}[Convexity]
$f$ is convex in $\boldsymbol{x}$ if and only if for any $\boldsymbol{x}_1, \boldsymbol{x}_2 \in \mathcal{S}$ and $t \in [0,1]$, it holds that \\$f(t\boldsymbol{x}_1 + (1-t)\boldsymbol{x}_2) \leq t f(\boldsymbol{x}_1) + (1-t) f(\boldsymbol{x}_2)$.
\end{definitionx}

\begin{definitionx}[Pseudoconvexity]
$f$ is pseudoconvex in $\boldsymbol{x}$ if and only if for any $\boldsymbol{x}_1, \boldsymbol{x}_2 \in \mathcal{S}$, $f(\boldsymbol{x}_1) > f(\boldsymbol{x}_2)$ implies \\$\nabla f(\boldsymbol{x}_1) \cdot (\boldsymbol{x}_2-\boldsymbol{x}_1) < 0$, where $\nabla f$ denotes the gradient of $f$.
\end{definitionx}


\begin{definitionx}[Quasiconvexity]
$f$ is quasiconvex in $\boldsymbol{x}$ if and only if for any $\boldsymbol{x}_1, \boldsymbol{x}_2 \in \mathcal{S}$ and $t \in [0,1]$, it holds that\\ $f(t\boldsymbol{x}_1 + (1-t)\boldsymbol{x}_2) \leq \max\{ f(\boldsymbol{x}_1) , f(\boldsymbol{x}_2)\}$.
\end{definitionx}

With convexity above, Lemma~\ref{lemmaConvexityvsConcavity} helps us understand concavity.

\begin{lemma}[Convexity versus Concavity] \label{lemmaConvexityvsConcavity}
A function $f$ is said to be concave (resp., pseudoconcave, quasiconcave) if and only
if $-f$ is convex (resp., pseudoconvex, quasiconvex). 
\end{lemma}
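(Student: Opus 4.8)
The statement to prove is Lemma~\ref{lemmaConvexityvsConcavity}, which asserts the equivalence of three pairs of dual notions: $f$ is concave iff $-f$ is convex, $f$ is pseudoconcave iff $-f$ is pseudoconvex, and $f$ is quasiconcave iff $-f$ is quasiconvex. Note that this is really a definitional lemma — concavity, pseudoconcavity, and quasiconcavity have not been defined yet in the excerpt, so the lemma is simultaneously introducing those notions (via the "$-f$ is convex/pseudoconvex/quasiconvex" clause) and asserting that the resulting objects satisfy the natural "reversed-inequality" characterizations one would expect. The plan is therefore to take the right-hand condition as the definition in each case and verify that it unfolds to the expected concavity-type inequality, which amounts to three short algebraic manipulations.

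\textbf{Proof plan.} First, for the convex/concave pair: apply Definition~1 (Convexity) to the function $g := -f$. That definition says $g(t\boldsymbol{x}_1 + (1-t)\boldsymbol{x}_2) \leq t g(\boldsymbol{x}_1) + (1-t) g(\boldsymbol{x}_2)$ for all $\boldsymbol{x}_1,\boldsymbol{x}_2\in\mathcal{S}$ and $t\in[0,1]$. Substituting $g=-f$ and multiplying both sides by $-1$ (which reverses the inequality) yields $f(t\boldsymbol{x}_1 + (1-t)\boldsymbol{x}_2) \geq t f(\boldsymbol{x}_1) + (1-t) f(\boldsymbol{x}_2)$, i.e. the standard concavity inequality; since every step is an equivalence, $f$ concave $\iff$ $-f$ convex.

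\textbf{The other two pairs.} For pseudoconvexity, apply Definition~2 to $g=-f$: for any $\boldsymbol{x}_1,\boldsymbol{x}_2$, $g(\boldsymbol{x}_1) > g(\boldsymbol{x}_2)$ implies $\nabla g(\boldsymbol{x}_1)\cdot(\boldsymbol{x}_2-\boldsymbol{x}_1) < 0$. Using $g=-f$ and $\nabla g = -\nabla f$, the hypothesis $-f(\boldsymbol{x}_1) > -f(\boldsymbol{x}_2)$ is $f(\boldsymbol{x}_1) < f(\boldsymbol{x}_2)$, and the conclusion $-\nabla f(\boldsymbol{x}_1)\cdot(\boldsymbol{x}_2-\boldsymbol{x}_1) < 0$ is $\nabla f(\boldsymbol{x}_1)\cdot(\boldsymbol{x}_2-\boldsymbol{x}_1) > 0$; after swapping the roles of $\boldsymbol{x}_1$ and $\boldsymbol{x}_2$ this is exactly the expected pseudoconcavity statement ($f(\boldsymbol{x}_1) > f(\boldsymbol{x}_2)$ implies $\nabla f(\boldsymbol{x}_1)\cdot(\boldsymbol{x}_2-\boldsymbol{x}_1) > 0$). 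For quasiconvexity, apply Definition~3 to $g=-f$ and negate: $-f(t\boldsymbol{x}_1+(1-t)\boldsymbol{x}_2) \leq \max\{-f(\boldsymbol{x}_1),-f(\boldsymbol{x}_2)\} = -\min\{f(\boldsymbol{x}_1),f(\boldsymbol{x}_2)\}$ becomes $f(t\boldsymbol{x}_1+(1-t)\boldsymbol{x}_2) \geq \min\{f(\boldsymbol{x}_1),f(\boldsymbol{x}_2)\}$, the quasiconcavity inequality. Again each implication is reversible.

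\textbf{Expected main obstacle.} Honestly, there is no real mathematical obstacle here — the content is entirely bookkeeping with signs. The only thing to be careful about is presentational: one should state clearly that the "concave / pseudoconcave / quasiconcave" notions are \emph{defined} by this lemma (so the lemma is half definition, half sanity-check), and make sure the domain $\mathcal{S}$ and differentiability assumptions carried over from Definitions~1--3 are implicitly assumed (in particular the pseudoconcave case needs $f$ differentiable, inherited from Definition~2 applied to $-f$). I would keep the write-up to a few lines, doing the three substitutions $g=-f$ and noting the inequality/gradient flips, and cite that this matches Definitions~2.1.1 and the quasiconcavity/pseudoconcavity definitions in~\cite{cambini2008generalized} for consistency.
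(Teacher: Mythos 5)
Your proposal is essentially correct and matches the paper's treatment: the paper gives no proof at all, simply remarking that the lemma is covered by Section~3 of~\cite{cambini2008generalized}, because the concave/pseudoconcave/quasiconcave notions are in effect \emph{defined} by the duality with $-f$ and the verification is the sign-flipping bookkeeping you describe. Your unfoldings for the convex and quasiconvex cases are right, and your pseudoconvex unfolding is right up to the point where you obtain ``$f(\boldsymbol{x}_1) < f(\boldsymbol{x}_2)$ implies $\nabla f(\boldsymbol{x}_1)\cdot(\boldsymbol{x}_2-\boldsymbol{x}_1) > 0$'' --- that \emph{is} already the standard pseudoconcavity condition and no swap is needed. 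The subsequent ``swap'' is where you slip: exchanging $\boldsymbol{x}_1$ and $\boldsymbol{x}_2$ also moves the gradient to the other point, giving ``$f(\boldsymbol{x}_1) > f(\boldsymbol{x}_2)$ implies $\nabla f(\boldsymbol{x}_2)\cdot(\boldsymbol{x}_1-\boldsymbol{x}_2) > 0$,'' not the form you wrote with $\nabla f(\boldsymbol{x}_1)$; the form you wrote is false for, e.g., $f(x)=-x^2$ with $\boldsymbol{x}_1=0$, $\boldsymbol{x}_2=1$. This does not affect the validity of the iff the lemma actually asserts, but the mis-stated target should be corrected before it is quoted elsewhere.
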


For the reasoning behind Lemma~\ref{lemmaConvexityvsConcavity}, interested readers can refer to Section 3 of the book~\cite{cambini2008generalized}.
Lemma~\ref{ConvexityRelationships} below presents the relationships between the definitions discussed above.

\begin{lemma}[Relationships between notions]
\label{ConvexityRelationships}
With ``$\Rightarrow$'' denoting ``implies'',  we have the following assuming differentiability
\begin{align}
&\text{Convexity} \Rightarrow \text{Pseudoconvexity}  
\Rightarrow \text{Quasiconvexity} ,  \text{ and} 
\\
&\text{Concavity} \Rightarrow \text{Pseudoconcavity}  
\Rightarrow \text{Quasiconcavity}.    
\end{align} 
\end{lemma}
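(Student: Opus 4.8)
The plan is to prove the chain in the first (``convex'') line directly from Definitions~1--3, and then obtain the second (``concave'') line for free via Lemma~\ref{lemmaConvexityvsConcavity}. For the step \emph{Convexity} $\Rightarrow$ \emph{Pseudoconvexity}, I would first record the first-order characterization of a differentiable convex function: for any $\boldsymbol{x}_1,\boldsymbol{x}_2\in\mathcal{S}$, $f(\boldsymbol{x}_2)\geq f(\boldsymbol{x}_1)+\nabla f(\boldsymbol{x}_1)\cdot(\boldsymbol{x}_2-\boldsymbol{x}_1)$. This follows by writing convexity along the segment, $f(\boldsymbol{x}_1+t(\boldsymbol{x}_2-\boldsymbol{x}_1))\leq f(\boldsymbol{x}_1)+t[f(\boldsymbol{x}_2)-f(\boldsymbol{x}_1)]$ for $t\in(0,1]$, dividing by $t$, and letting $t\to 0^+$; alternatively one may just cite~\cite{cambini2008generalized}. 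Given this inequality, if $f(\boldsymbol{x}_1)>f(\boldsymbol{x}_2)$ then $\nabla f(\boldsymbol{x}_1)\cdot(\boldsymbol{x}_2-\boldsymbol{x}_1)\leq f(\boldsymbol{x}_2)-f(\boldsymbol{x}_1)<0$, which is exactly the defining property of pseudoconvexity.

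For \emph{Pseudoconvexity} $\Rightarrow$ \emph{Quasiconvexity}, I would argue by contradiction. Suppose there exist $\boldsymbol{x}_1,\boldsymbol{x}_2\in\mathcal{S}$ and $\bar t\in(0,1)$ such that $\bar{\boldsymbol{x}}:=\bar t\,\boldsymbol{x}_1+(1-\bar t)\boldsymbol{x}_2$ satisfies $f(\bar{\boldsymbol{x}})>\max\{f(\boldsymbol{x}_1),f(\boldsymbol{x}_2)\}$; the boundary cases $\bar t\in\{0,1\}$ make the quasiconvexity inequality trivial and can be dismissed at once. Applying pseudoconvexity to the pair $(\bar{\boldsymbol{x}},\boldsymbol{x}_1)$ and using $\boldsymbol{x}_1-\bar{\boldsymbol{x}}=(1-\bar t)(\boldsymbol{x}_1-\boldsymbol{x}_2)$ with $1-\bar t>0$ yields $\nabla f(\bar{\boldsymbol{x}})\cdot(\boldsymbol{x}_1-\boldsymbol{x}_2)<0$; applying it to the pair $(\bar{\boldsymbol{x}},\boldsymbol{x}_2)$ and using $\boldsymbol{x}_2-\bar{\boldsymbol{x}}=-\bar t(\boldsymbol{x}_1-\boldsymbol{x}_2)$ with $\bar t>0$ yields $\nabla f(\bar{\boldsymbol{x}})\cdot(\boldsymbol{x}_1-\boldsymbol{x}_2)>0$. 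These contradict each other, so $f$ must be quasiconvex.

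Finally, the ``concave'' line is immediate: by Lemma~\ref{lemmaConvexityvsConcavity}, $f$ being concave (resp.\ pseudoconcave, quasiconcave) is equivalent to $-f$ being convex (resp.\ pseudoconvex, quasiconvex), and $-f$ inherits differentiability from $f$. Hence if $f$ is concave, $-f$ is convex, so by the first step $-f$ is pseudoconvex, so $f$ is pseudoconcave; and by the second step $-f$ is quasiconvex, so $f$ is quasiconcave. I do not expect a serious obstacle here; the only place needing a little care is the sign bookkeeping in the second step (correctly tracking the coefficients $1-\bar t$ and $-\bar t$ multiplying $\boldsymbol{x}_1-\boldsymbol{x}_2$, and separating off $\bar t\in\{0,1\}$), and, if one prefers a self-contained argument over citing~\cite{cambini2008generalized}, the one-sided limit used to derive the gradient inequality in the first step.
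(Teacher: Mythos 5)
Your proof is correct, and both implication chains check out: the first-order inequality $f(\boldsymbol{x}_2)\ge f(\boldsymbol{x}_1)+\nabla f(\boldsymbol{x}_1)\cdot(\boldsymbol{x}_2-\boldsymbol{x}_1)$ immediately gives pseudoconvexity, and the sign bookkeeping in the contradiction argument is right ($\boldsymbol{x}_1-\bar{\boldsymbol{x}}=(1-\bar t)(\boldsymbol{x}_1-\boldsymbol{x}_2)$ and $\boldsymbol{x}_2-\bar{\boldsymbol{x}}=-\bar t(\boldsymbol{x}_1-\boldsymbol{x}_2)$, with $1-\bar t>0$ and $\bar t>0$ forcing $\nabla f(\bar{\boldsymbol{x}})\cdot(\boldsymbol{x}_1-\boldsymbol{x}_2)$ to be simultaneously negative and positive). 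The reduction of the concave line to the convex line via Lemma~\ref{lemmaConvexityvsConcavity} is also fine. Where you differ from the paper is that the paper gives no argument at all: it simply points to Fig.~2.2 and Fig.~B.1 of~\cite{cambini2008generalized}, which diagram these inclusions among generalized convexity classes. Your write-up is therefore a self-contained elementary replacement for that citation; what it buys is independence from the reference and an explicit record of exactly where differentiability is used (the one-sided limit defining the directional derivative, and the gradient appearing in the definition of pseudoconvexity), at the cost of a few lines that the paper chose to outsource. One cosmetic remark: strictly speaking the definition of quasiconvexity quantifies over $t\in[0,1]$, so your explicit dismissal of the endpoint cases $\bar t\in\{0,1\}$ is the right thing to do before assuming the violating point is interior to the segment.
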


Lemma~\ref{ConvexityRelationships} follows from Fig.~2.2 and Fig.~B.1 of the book~\cite{cambini2008generalized}.


For a minimization problem, if the objective function and\footnote{Note that the constraint of a convex (resp., concave)  function being at most  (resp., least) a constant is a convex constraint.\label{convexconstraint}} the constraints are all convex, then we have a convex optimization problem, for which the following lemma holds.

\begin{lemma}[Chapters 3 and 4 of~\cite{boyd2004convex}]\label{lemmacvxuseKKT}
For convex optimization, the Karush--Kuhn--Tucker (KKT) conditions are
\begin{itemize}
    \item sufficient for optimality, and
        \item are necessary for optimality if Slater's condition holds (i.e., if the feasible set contains at least one interior point).
\end{itemize} 
\end{lemma}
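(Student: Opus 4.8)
The plan is to put the problem in the standard convex form $\min_{\boldsymbol x} f_0(\boldsymbol x)$ subject to $f_i(\boldsymbol x)\le 0$ for $i=1,\ldots,m$ and $\boldsymbol a_j^{\top}\boldsymbol x = b_j$ for $j=1,\ldots,p$, with $f_0,\ldots,f_m$ convex and differentiable (the equality constraints being affine, as required by the standard form), and to spell out the KKT conditions at a triple $(\boldsymbol x^\star,\boldsymbol\lambda^\star,\boldsymbol\nu^\star)$: primal feasibility of $\boldsymbol x^\star$, dual feasibility $\boldsymbol\lambda^\star\ge\boldsymbol 0$, complementary slackness $\lambda_i^\star f_i(\boldsymbol x^\star)=0$ for every $i$, and stationarity $\nabla f_0(\boldsymbol x^\star)+\sum_i\lambda_i^\star\nabla f_i(\boldsymbol x^\star)+\sum_j\nu_j^\star\boldsymbol a_j=\boldsymbol 0$. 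The single device used throughout is the Lagrangian $L(\boldsymbol x,\boldsymbol\lambda,\boldsymbol\nu)=f_0(\boldsymbol x)+\sum_i\lambda_i f_i(\boldsymbol x)+\sum_j\nu_j(\boldsymbol a_j^{\top}\boldsymbol x-b_j)$, which for any fixed $\boldsymbol\lambda\ge\boldsymbol 0$ is convex in $\boldsymbol x$, being a nonnegative combination of convex functions plus an affine term.

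For the \emph{sufficiency} direction I would assume KKT holds at $(\boldsymbol x^\star,\boldsymbol\lambda^\star,\boldsymbol\nu^\star)$ and first observe that, since $L(\cdot,\boldsymbol\lambda^\star,\boldsymbol\nu^\star)$ is convex and differentiable, the stationarity equation is exactly the first-order condition characterizing a global minimizer of $L(\cdot,\boldsymbol\lambda^\star,\boldsymbol\nu^\star)$, so $\boldsymbol x^\star$ minimizes the Lagrangian over all $\boldsymbol x$. Then for any primal-feasible $\boldsymbol x$ one chains $f_0(\boldsymbol x^\star) = L(\boldsymbol x^\star,\boldsymbol\lambda^\star,\boldsymbol\nu^\star) \le L(\boldsymbol x,\boldsymbol\lambda^\star,\boldsymbol\nu^\star) = f_0(\boldsymbol x) + \sum_i \lambda_i^\star f_i(\boldsymbol x) \le f_0(\boldsymbol x)$, where the first equality uses complementary slackness together with $\boldsymbol a_j^{\top}\boldsymbol x^\star = b_j$, the middle inequality uses that $\boldsymbol x^\star$ minimizes $L$, the second equality uses $\boldsymbol a_j^{\top}\boldsymbol x = b_j$, and the last inequality uses $\boldsymbol\lambda^\star\ge\boldsymbol 0$ with $f_i(\boldsymbol x)\le 0$. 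Hence $\boldsymbol x^\star$ is optimal, and note this half needs no constraint qualification.

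For the \emph{necessity} direction I would first establish strong duality under Slater's condition: if a strictly feasible point exists, then the dual optimal value $d^\star$ equals the primal optimal value $p^\star$ and is attained by some dual-feasible $(\boldsymbol\lambda^\star,\boldsymbol\nu^\star)$. Granting that, take any primal-optimal $\boldsymbol x^\star$ and run the chain $f_0(\boldsymbol x^\star) = d^\star = \inf_{\boldsymbol x} L(\boldsymbol x,\boldsymbol\lambda^\star,\boldsymbol\nu^\star) \le L(\boldsymbol x^\star,\boldsymbol\lambda^\star,\boldsymbol\nu^\star) = f_0(\boldsymbol x^\star) + \sum_i \lambda_i^\star f_i(\boldsymbol x^\star) \le f_0(\boldsymbol x^\star)$, using $\boldsymbol a_j^{\top}\boldsymbol x^\star=b_j$ for the last equality and $\boldsymbol\lambda^\star\ge\boldsymbol 0$, $f_i(\boldsymbol x^\star)\le 0$ for the last inequality. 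Since the two ends coincide, every inequality is tight: tightness of $\inf_{\boldsymbol x} L \le L(\boldsymbol x^\star,\cdot)$ says $\boldsymbol x^\star$ minimizes $L(\cdot,\boldsymbol\lambda^\star,\boldsymbol\nu^\star)$, which by differentiable convexity is the stationarity condition; tightness of the last inequality, together with each $\lambda_i^\star f_i(\boldsymbol x^\star)\le 0$, forces $\lambda_i^\star f_i(\boldsymbol x^\star)=0$ termwise, i.e. complementary slackness; dual and primal feasibility are already in hand. That gives all of KKT.

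The step I expect to be the main obstacle is the strong-duality claim itself; everything else is bookkeeping around $L$. I would prove it by a separating-hyperplane argument on the set $\mathcal{A}=\{(\boldsymbol u,\boldsymbol v,s):\exists\,\boldsymbol x \text{ with } f_i(\boldsymbol x)\le u_i,\ \boldsymbol a_j^{\top}\boldsymbol x - b_j = v_j,\ f_0(\boldsymbol x)\le s\}$, which is convex precisely because the $f_i$ are convex and the equality constraints affine. The point $(\boldsymbol 0,\boldsymbol 0,p^\star)$ lies on the boundary of $\mathcal{A}$, so a supporting hyperplane exists; the role of Slater's condition is to exclude a ``vertical'' supporting hyperplane (one whose weight on the objective coordinate vanishes), so that after normalization the hyperplane's coefficients yield $\boldsymbol\lambda^\star\ge\boldsymbol 0$ and $\boldsymbol\nu^\star$ with $g(\boldsymbol\lambda^\star,\boldsymbol\nu^\star)\ge p^\star$, which together with weak duality forces $d^\star=p^\star$ and attainment. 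For the purposes of this paper one may alternatively just invoke this as the cited result from~\cite{boyd2004convex}.
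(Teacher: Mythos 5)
Your proposal is correct and is essentially the argument the paper relies on: the paper does not prove this lemma itself but cites Chapters 3--5 of Boyd and Vandenberghe, and your two chains of Lagrangian inequalities (sufficiency without any constraint qualification; necessity via strong duality under Slater, established by the separating-hyperplane argument on the set $\mathcal{A}$) reproduce exactly that textbook proof. No gaps worth flagging beyond the usual technical care needed in the hyperplane step (handling the equality-constraint coordinates and ruling out the degenerate normal vector), which you already identify as the crux.
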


Readers unfamiliar with the KKT conditions can refer to Theorem 4.2.3 of~\cite{cambini2008generalized}, and can also look into (\ref{P2KKTeqStationaritypn})--(\ref{P2KKTeqDualfeasibility}) to be presented on Page~\pageref{P2KKTeqStationaritypn} of the current paper, where we will use the KKT conditions. 

Lemma~\ref{lemmauseKKT} below broadens problems under which KKT conditions are sufficient for optimality, to go beyond convex optimization.



\begin{lemma}[Theorem 4.4.1 of~\cite{cambini2008generalized}]\label{lemmauseKKT}
For a minimization problem with all constraints being inequalities, if the objective function is pseudoconvex, and all constraints are quasiconvex and differentiable, then a feasible point satisfying the KKT conditions is globally optimal.     
\end{lemma}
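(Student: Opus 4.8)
\textbf{Proof proposal for Lemma~\ref{lemmauseKKT}.}
The plan is to cite Theorem 4.4.1 of~\cite{cambini2008generalized} directly, since the lemma is stated as a restatement of that theorem; but to make the excerpt self-contained I would also sketch the underlying argument. The key idea is that a feasible KKT point of a nonlinear program is globally optimal whenever the feasible region is convex and the objective behaves ``well enough'' on that region, and pseudoconvexity of the objective together with quasiconvexity of the constraints is exactly the minimal generalized-convexity hypothesis that makes this work. So the first step is to let $\boldsymbol{x}^\star$ be a feasible point satisfying the KKT conditions, i.e.\ there exist multipliers $\lambda_i \geq 0$ with $\nabla f(\boldsymbol{x}^\star) + \sum_i \lambda_i \nabla g_i(\boldsymbol{x}^\star) = \boldsymbol{0}$ and $\lambda_i g_i(\boldsymbol{x}^\star) = 0$ for all constraints $g_i(\boldsymbol{x}) \leq 0$, and to take any other feasible point $\boldsymbol{x}$.

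The second step is to show the feasible directions argument. For each constraint $i$ that is active at $\boldsymbol{x}^\star$ (so $g_i(\boldsymbol{x}^\star) = 0$), quasiconvexity plus differentiability of $g_i$ gives $\nabla g_i(\boldsymbol{x}^\star)\cdot(\boldsymbol{x}-\boldsymbol{x}^\star) \leq 0$, because $g_i(\boldsymbol{x}) \leq 0 = g_i(\boldsymbol{x}^\star)$ (this is the standard characterization: a differentiable quasiconvex function cannot strictly increase along a direction pointing toward a sublevel set). For inactive constraints $\lambda_i = 0$ by complementary slackness, so those terms drop out. Combining with the stationarity condition and $\lambda_i \geq 0$ yields
\begin{align}
\nabla f(\boldsymbol{x}^\star)\cdot(\boldsymbol{x}-\boldsymbol{x}^\star) = -\sum_i \lambda_i \nabla g_i(\boldsymbol{x}^\star)\cdot(\boldsymbol{x}-\boldsymbol{x}^\star) \geq 0.
\end{align}

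The third step closes the argument by contraposition using pseudoconvexity of $f$: if we had $f(\boldsymbol{x}) < f(\boldsymbol{x}^\star)$, then by Definition 2 (pseudoconvexity), $f(\boldsymbol{x}^\star) > f(\boldsymbol{x})$ would imply $\nabla f(\boldsymbol{x}^\star)\cdot(\boldsymbol{x}-\boldsymbol{x}^\star) < 0$, contradicting the inequality just derived. Hence $f(\boldsymbol{x}) \geq f(\boldsymbol{x}^\star)$ for every feasible $\boldsymbol{x}$, so $\boldsymbol{x}^\star$ is globally optimal. The main obstacle — really the only subtle point — is justifying the quasiconvexity-to-gradient inequality $\nabla g_i(\boldsymbol{x}^\star)\cdot(\boldsymbol{x}-\boldsymbol{x}^\star) \leq 0$ on the active constraints; this requires the differentiable characterization of quasiconvex functions rather than the definition given above, so in the write-up I would either invoke it as a known fact from~\cite{cambini2008generalized} or note that since the statement is verbatim Theorem 4.4.1 there, a full reproduction of the proof is unnecessary and a reference suffices.
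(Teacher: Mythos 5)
Your proposal takes essentially the same approach as the paper, which gives no proof of this lemma at all and simply cites Theorem 4.4.1 of~\cite{cambini2008generalized}; your supplementary sketch is the standard argument and is correct. The one point you flag as subtle is actually immediate: the gradient inequality $\nabla g_i(\boldsymbol{x}^\star)\cdot(\boldsymbol{x}-\boldsymbol{x}^\star)\le 0$ for an active quasiconvex constraint follows directly from the quasiconvexity definition given in the paper by restricting $g_i$ to the segment $t\mapsto \boldsymbol{x}^\star+t(\boldsymbol{x}-\boldsymbol{x}^\star)$, observing that this restriction never exceeds its value at $t=0$, and taking the one-sided derivative at $t=0$, so no separate first-order characterization of quasiconvexity needs to be invoked.
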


\subsection[Conditions of the utility function fn(x)]{Conditions of the utility function $f_n(x)$} \label{secfConditions}

The requirements of  the utility rate function $f_n(x)$
for any $n \in \mathcal{N}$ are formally presented as Condition~\ref{fconcave} below (we will just call $f_n(x)$ as the utility function hereafter for simplicity). 






\begin{condition}\label{fconcave}
The utility function $f_n(x)$ for any $n \in \mathcal{N}$ is \mbox{concave},  increasing, and twice differentiable, with respect to $x > 0$; i.e., \mbox{$f_n''(x) \leq 0$} and $f_n'(x)>0$ for $x > 0$.
\end{condition}

In Condition~\ref{fconcave}, the concavity of $f_n(x)$ means diminishing marginal return, which  holds in various practical applications~\cite{yang2012crowdsourcing,liu2018edge,mo2000fair,xiong2020reward}.


We also remark that $f_n(x)$, $f_n'(x)$ and $f_n''(x)$ are defined for any $x >0$. Similar to Footnote~\ref{ftntsecfConditions}, for any of $f_n(x)$, $f_n'(x)$ and $f_n''(x)$, we do not require it to be defined for $x=0$. If it has a finite (resp., no) limit as $x \to 0^+$, we can just use the limit to define the corresponding value at $x = 0$ (resp., do not define any value at $x = 0$). 


We can even allow heterogeneous types of utility functions among the users. For the specific expressions of the  utility function $f_n(x)$ used in simulations,
we will discuss three types in Section~\ref{secTypicalUtility}. 

We now discuss the properties of functions in our\vspace{-1pt} studied system.  

\begin{lemma}[Lemma 1 of~\cite{zhou2022resource}]\label{lemma:fn_concave} \label{lemma:rn_concave}
  $r_n(p_n,B_n) $ is  jointly concave with\footnote{For a function $f(\boldsymbol{x})$, ``being convex (resp., concave) in $\boldsymbol{x}$'' has the same meaning as ``jointly convex (resp., concave) in all dimensions of the vector $\boldsymbol{x}$''.} respect to $p_n$ and $B_n$.    \vspace{-1pt} 
\end{lemma}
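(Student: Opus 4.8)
The plan is to show joint concavity of $r_n(p_n,B_n) = B_n \log_2(1 + \frac{g_n p_n}{\sigma_n^2 B_n})$ by recognizing it, up to the positive constant $\frac{1}{\ln 2}$, as a perspective-type function of a concave function. First I would recall the standard fact that $h(u) := \ln(1+u)$ is concave and increasing on $u > 0$. Then, writing $r_n(p_n,B_n) = \frac{1}{\ln 2}\, B_n \, h\!\big(\frac{g_n p_n}{\sigma_n^2 B_n}\big)$, I would observe that the map $(p_n,B_n) \mapsto \frac{g_n p_n}{\sigma_n^2 B_n}$ is the ratio of a linear function of $p_n$ over a linear (positive) function of $B_n$, which is exactly the kind of argument that appears inside a perspective function.

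The key step is the \emph{perspective operation}: for a concave function $\phi(\boldsymbol{x})$, its perspective $(\boldsymbol{x}, t) \mapsto t\,\phi(\boldsymbol{x}/t)$ is jointly concave on $\{(\boldsymbol{x},t) : t > 0, \boldsymbol{x}/t \in \operatorname{dom}\phi\}$; this is a textbook result (e.g., Section 3.2.6 of~\cite{boyd2004convex}). Here I would take $\phi(p_n) := \ln\!\big(1 + \frac{g_n p_n}{\sigma_n^2}\big)$, which is concave and increasing in $p_n > 0$ as the composition of the concave increasing $\ln(1+\cdot)$ with a linear function. Its perspective in the variable $B_n$ is precisely $B_n \ln\!\big(1 + \frac{g_n p_n}{\sigma_n^2 B_n}\big)$, jointly concave in $(p_n, B_n)$ on the region $B_n > 0$. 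Multiplying by the positive constant $\frac{1}{\ln 2}$ preserves joint concavity, which yields the claim; note $B_n > 0$ is guaranteed in the feasible region by Remark~\ref{remarkrn2}.

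An alternative, more self-contained route would be to compute the Hessian of $r_n$ with respect to $(p_n, B_n)$ directly and show it is negative semidefinite: the second partials can be written in terms of a single one-dimensional function, and one checks the determinant of the $2\times 2$ Hessian vanishes (reflecting the homogeneity of degree one) while the diagonal entries are nonpositive. I would mention this as a remark but not carry it out, since the perspective argument is cleaner and avoids the routine differentiation.

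The main obstacle, such as it is, is simply making sure the domain bookkeeping is correct: the perspective composition rule requires the ``scaling'' variable $B_n$ to be strictly positive, and one must confirm that the inner function $\phi(p_n) = \ln(1 + g_n p_n / \sigma_n^2)$ is genuinely concave (not merely the bare $\ln$) — this is where composition-with-affine-map is invoked. There is no deep difficulty here; the result is essentially a known building block, and indeed the statement is quoted from Lemma~1 of~\cite{zhou2022resource}, so the proof can be kept to a few lines.
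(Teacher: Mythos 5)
Your proposal is correct. Note that the paper does not actually prove this lemma in-text — it imports it wholesale as Lemma~1 of~\cite{zhou2022resource} — so the only argument to compare against is the direct Hessian verification that underlies that cited result: one computes the $2\times 2$ Hessian of $r_n$ in $(p_n,B_n)$ and checks that the quadratic form collapses to $-\tfrac{(x_1 g_n B_n - x_2 g_n p_n)^2}{B_n^3\sigma_n^4(1+\frac{g_n p_n}{\sigma_n^2 B_n})^2\ln 2}\le 0$, i.e.\ a negative-semidefinite rank-one form whose kernel direction $(p_n,B_n)$ reflects the degree-one homogeneity. Your perspective-function route is a genuinely different and cleaner path to the same conclusion: writing $r_n = \tfrac{1}{\ln 2}\,B_n\,\phi(p_n/B_n)$ with $\phi(p)=\ln(1+g_n p/\sigma_n^2)$ concave (composition of $\ln(1+\cdot)$ with an affine map) and invoking joint concavity of the perspective on $B_n>0$ avoids all differentiation and makes the homogeneity structure transparent, at the cost of quoting the perspective rule from~\cite{boyd2004convex}. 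The Hessian route is more self-contained and elementary but is a routine computation; the two are equally rigorous, and your domain bookkeeping ($B_n>0$, concavity of the inner composition) is exactly the right thing to flag. No gap.
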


\begin{lemma}\label{lemma:fn_concave}
Under Condition~\ref{fconcave}, we have:
\begin{itemize}
    \item $f_n(r_{n,\textnormal{s}}(p_n,B_n))$ is jointly concave with respect to $p_n$ and $B_n$;
        \item $\varphi_n(p_n,B_n)$ is jointly pseudoconcave  with respect to $p_n$ and $B_n$.\vspace{-1pt} 
\end{itemize}  
\end{lemma}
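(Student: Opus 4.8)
The plan is to establish the two claims in Lemma~\ref{lemma:fn_concave} separately, leaning on Lemma~\ref{lemma:rn_concave} (joint concavity of $r_n$) and on standard composition rules from convex analysis. For the first bullet, recall that $r_{n,\textnormal{s}}(p_n,B_n) = r_n(p_n,B_n) - r_{n,\textnormal{e}}$ is an affine shift of $r_n$, hence jointly concave in $(p_n,B_n)$ by Lemma~\ref{lemma:rn_concave}. By Condition~\ref{fconcave}, $f_n$ is concave and increasing on $(0,\infty)$. The composition of a concave increasing function with a concave function is concave; so $f_n(r_{n,\textnormal{s}}(p_n,B_n))$ is jointly concave in $(p_n,B_n)$. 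The only subtlety is domain bookkeeping: one must note that on the feasible region cut out by~(\ref{constra:rate}) together with Condition~\ref{conditionrn} we have $r_{n,\textnormal{s}}(p_n,B_n) \geq r_n^{\min} - r_{n,\textnormal{e}} \geq 0$ (Remark~\ref{remarkrn0}), so the argument stays in the domain where $f_n$, $f_n'$, $f_n''$ are defined and the composition rule applies; if $r_n^{\min} = r_{n,\textnormal{e}}$ one invokes the limiting-value convention discussed after Condition~\ref{fconcave}.

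For the second bullet, the target is pseudoconcavity of $\varphi_n(p_n,B_n) = \frac{f_n(r_{n,\textnormal{s}}(p_n,B_n))}{p_n + p_n^{\textnormal{cir}}}$. The natural route is the classical fact that a ratio of a nonnegative concave function over a positive affine (hence convex and positive) function is pseudoconcave on the relevant convex set; equivalently, via Lemma~\ref{lemmaConvexityvsConcavity}, that the ratio of a nonnegative convex function over a positive concave function is pseudoconvex — this is the standard generalized-fractional-programming result found in~\cite{cambini2008generalized}. Concretely: the numerator $f_n(r_{n,\textnormal{s}}(p_n,B_n))$ is concave by the first bullet, the denominator $p_n + p_n^{\textnormal{cir}}$ is affine and strictly positive (since $p_n \geq 0$ and $p_n^{\textnormal{cir}} > 0$), and the domain is convex, so the quotient is pseudoconcave. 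One must be slightly careful that the cited ratio rule typically requires the numerator to be \emph{nonnegative}; here $f_n$ is not assumed nonnegative (per Footnote~\ref{ftntsecfConditions}). I would handle this either by invoking the version of the result that only needs numerator concave / denominator positive affine (affine denominators are both convex and concave, which is exactly the borderline case where the sign restriction on the numerator can be dropped), or, to be safe, by a direct gradient argument: show that any stationary point of $\varphi_n$ is a global maximum by differentiating $\varphi_n$, using that $\nabla \varphi_n = 0$ forces $\nabla[f_n(r_{n,\textnormal{s}})] = \varphi_n \cdot \nabla(p_n + p_n^{\textnormal{cir}})$, and then exploiting concavity of the numerator and affineness of the denominator to verify the pseudoconcavity inequality from Definition~3 (via Lemma~\ref{lemmaConvexityvsConcavity}).

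The main obstacle I anticipate is not the composition step — that is routine — but the nonnegativity gap in the ratio lemma. The cleanest fix is to cite the precise statement in~\cite{cambini2008generalized} that covers the affine-denominator case without a sign hypothesis on the numerator; if such a statement is not available in the form needed, the fallback is the short self-contained verification sketched above, which uses only first-order information and the concavity established in the first bullet. A minor secondary point to get right is the quantifier over $n$ and the domain conventions at $x = 0$, but these are handled uniformly by the remarks already in place after Condition~\ref{conditionrn} and Condition~\ref{fconcave}.
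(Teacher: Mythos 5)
Your argument is correct and, at the top level, follows the same route as the paper: the first bullet is proved from Lemma~\ref{lemma:rn_concave} plus the concave-increasing-composed-with-concave rule (the paper invokes Eq.~(3.11) of \cite{boyd2004convex} with the extended-value extension $\widetilde{f_n}$ being non-decreasing, which is the same device as your domain bookkeeping), and the second bullet is proved by citing the concave-over-convex quotient rule from \cite{cambini2008generalized}. The one substantive divergence is exactly the point you flag: the quotient rule the paper cites (page 245 of \cite{cambini2008generalized}) assumes the numerator is \emph{non-negative}, and the paper applies it without verifying that hypothesis --- which is in fact not implied by Conditions~\ref{conditionrn} and~\ref{fconcave}, since Footnote~\ref{ftntsecfConditions} explicitly declines to assume $f_n\geq 0$ (non-negativity does hold for the three utility types used in the simulations on the feasible set where $r_{n,\textnormal{s}}\geq 0$, but not in general). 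Your proposed repair is sound: because the denominator $D(p_n,B_n)=p_n+p_n^{\textnormal{cir}}$ is affine and positive, the sign restriction can be dropped, and your fallback first-order verification goes through. Writing $N$ for the concave numerator and $\varphi=N/D$, if $\varphi(x_1)<\varphi(x_2)$ then
\begin{align*}
D(x_1)\,\nabla\varphi(x_1)\cdot(x_2-x_1)
&=\nabla N(x_1)\cdot(x_2-x_1)-\varphi(x_1)\bigl(D(x_2)-D(x_1)\bigr)\\
&\geq N(x_2)-N(x_1)-\varphi(x_1)D(x_2)+\varphi(x_1)D(x_1)
= D(x_2)\bigl(\varphi(x_2)-\varphi(x_1)\bigr)>0,
\end{align*}
which is precisely the pseudoconcavity inequality (via Lemma~\ref{lemmaConvexityvsConcavity} and the definition of pseudoconvexity) and never uses the sign of $N$. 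So your version is strictly more careful than the paper's at the cost of a few extra lines; everything else matches.
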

\begin{proof}
From Lemma~\ref{lemma:rn_concave} and Eq.~(\ref{defrns}), $r_{n,\textnormal{s}}(p_n,B_n)$ is jointly concave in $p_n$ and $B_n$. According to the composition rule in Eq.~(3.11) of~\cite{boyd2004convex}, for concave $f_n(\cdot)$, since $\widetilde{f_n}(\cdot)$ defined as $f_n(\cdot)$ (resp., $-\infty$) for points inside (resp., outside) of the domain of $f_n$ is \mbox{non-decreasing}, the function $f_n(r_{n,\textnormal{s}}(p_n, B_n))$ is jointly concave in $p_n$ and $B_n$.  



From Page 245 (the book's internal page number, not the pdf page number) of the book~\cite{cambini2008generalized}, for a ratio, if the numerator is \mbox{non-negative}, concave and differentiable, and the denominator is positive, convex and differentiable, then the ratio is pseudoconcave. Based on the above, we have proved the pseudoconcavity of $\varphi_n(p_n,B_n)$ with respect to $p_n$ and $B_n$.\vspace{-1pt} 
\end{proof}


\subsection[]{Challenges of solving Problem $\mathbb{P}_1$\vspace{-1pt} } \label{P1Challenges}

Based on the proof of Lemma~\ref{lemma:fn_concave}, we now call $\small\frac{f_n(r_{n,\textnormal{s}}(p_n,B_n))}{p_n + p_n^{\textnormal{cir}}}$ (i.e., $\varphi_n(p_n,B_n)$) a \mbox{concave-convex} ratio: a ratio having a concave function as the numerator and a convex function as the denominator. Then  $\mathbb{P}_{1}$ is maximizing the sum of  \mbox{concave-convex} ratios. Such sum-of-ratios optimization is \mbox{non-convex} and difficult to solve~\cite{jong2012efficient,shen2018fractional}. 


Lemma~\ref{lemma:fn_concave} also shows that $\varphi_n(p_n,B_n)$ for each $n$ is \mbox{pseudoconcave}, unfortunately the sum of \mbox{pseudoconcave} functions may not be \mbox{pseudoconcave}. Even if we manage to prove the pseudoconcavity of $\sum_{n \in \mathcal{N}}  c_n\varphi_n(p_n,B_n)$ (the objective function of $\mathbb{P}_1$), which is very difficult (e.g., just analyzing the pseudoconvexity of the sum of two linear fractional functions is already challenging, as shown in~\cite{cambini2005pseudoconvexity}),  then we can in principle use the KKT conditions of Problem $\mathbb{P}_{1}$, as explained in\footnote{For $\mathbb{P}_1$, all constraints are  differentiable and convex (and hence quasiconvex) with Lemma~\ref{lemma:rn_concave} and Foonote~\ref{convexconstraint}. Hence,  if we can prove that the objective function is \mbox{pseudoconcave},   Lemma~\ref{lemmauseKKT} means that KKT conditions can solve $\mathbb{P}_{1}$. Nonetheless, even if we can do the above, the KKT conditions of $\mathbb{P}_{1}$ are intractable to get a solution. \label{analyzeP1}} Footnote~\ref{analyzeP1}, but those conditions involve taking derivatives of the ratios, inducing quite complex expressions, and the corresponding analysis becomes intractable. In this paper, instead of analyzing the pseudoconcavity of $\sum_{n \in \mathcal{N}}  c_n\varphi_n(p_n,B_n)$ and being trapped in the intractable analysis, we will present an elegant approach (to be detailed in Section~\ref{secTransforming}) for solving Problem $\mathbb{P}_{1}$.

Recently, Shen and Yu~\cite{shen2018fractional,shen2018fractional2} proposed a novel technique to solve the sum-of-ratios optimization (referred to as fractional programming in their papers). However, since their technique relies on block coordinate ascent (i.e., alternating optimization), applying their technique to our Problem $\mathbb{P}_1$ will find a point which has no local or global optimality guarantee. In contrast, our approach will find a globally optimal solution of $\mathbb{P}_1$.


       

\section{\mbox{Algorithm to Find A Global Optimum}}\label{secAlgorithm} 

In this section, we will discuss how to transform $\mathbb{P}_1$ into a sequence of convex optimization problems, and then use the transform to propose an algorithm that finds a global optimum of $\mathbb{P}_1$. 


 
\subsection{Transforming Problem $\mathbb{P}_1$ into parametric convex optimization problems}\label{secTransforming}


Firstly, we introduce an auxiliary variable $\beta_n$ to transform Problem $\mathbb{P}_{1}$ into the epigraph form. Let $\frac{c_nf_n(r_{n,\textnormal{s}}(p_n,B_n))}{p_n+p_n^{\textnormal{cir}}} \ge \beta_n$ and $\mathbb{P}_{1}$ can be transformed to the following equivalent form as $\mathbb{P}_2$:
\begin{subequations}\label{problem:2}
\begin{align}
   \textnormal{Problem $\mathbb{P}_2$:}  ~ &\max_{\boldsymbol{p}, \boldsymbol{B},  \boldsymbol{\beta}} ~
   \sum_{n \in \mathcal{N}} \beta_{n}, \text{ where }  \mathcal{N} := \{1,\cdots, N\}, 
    \tag{\ref{problem:2}}\\ 
    \textnormal{subject to: } & \text{(\ref{constra:bandwidth}), (\ref{constra:rate})},    \\
    & \hspace{-30pt} F_n(p_n, B_n) -\beta_{n}\cdot(p_n + p_n^{\textnormal{cir}}) \geq 0, \textnormal{ for all $n \in \mathcal{N}$},\label{constra:pysical_epi_data_rate}
\end{align}
\end{subequations}
 where we   use $F_n(p_n, B_n)$ to simplify the representation:
\begin{align}
  & F_n(p_n, B_n) : = c_n f_n(r_{n,\textnormal{s}}(p_n,B_n))  . \label{defineFnpnBn}
    \end{align}
Problem $\mathbb{P}_2$ is not convex optimization since \mbox{$\beta_{n}\cdot(p_n + p_n^{\textnormal{cir}})$}  in~(\ref{constra:pysical_epi_data_rate}) is not jointly convex (actually also not jointly concave) in $\beta_{n}$ and $p_n$, since the Hessian matrix for \mbox{$\beta_{n}\cdot(p_n + p_n^{\textnormal{cir}})$} is $\tiny\begin{bmatrix}
0 & 1 \\ 1 & 0    
\end{bmatrix}$ which is not positive semifinite (actually also not negative semidefinite).

We have explained in Section~\ref{P1Challenges} that Problem $\mathbb{P}_{1}$ belongs to the following kind of problems: maximizing the sum of \mbox{concave-convex} ratios. Such problems have at least one global maximum according   to~\cite{jong2012efficient,shen2018fractional}. Hence, $\mathbb{P}_{1}$ and  $\mathbb{P}_{2}$ have at least one global maximum. 


To solve Problem $\mathbb{P}_2$, one initial idea is trying to use Lemma~\ref{lemmauseKKT} and hence the KKT conditions directly. Yet, deciding the quasiconvexity of \mbox{$F_n(p_n, B_n) -\beta_{n}\cdot(p_n + p_n^{\textnormal{cir}})$}   in (\ref{constra:pysical_epi_data_rate}) is very difficult. Hence, instead of trying to use $\mathbb{P}_2$'s KKT conditions directly, we take a step back and use the Fritz-John conditions (viz., Remark~4.2.2 of~\cite{cambini2008generalized} and Lemma~2.1's proof in~\cite{jong2012efficient}), which do not need the quasiconvexity of constraints. The Fritz-John conditions provide the necessary conditions for a global optimum. Basically, in the Fritz-John conditions, the Lagrange multiplier (say $w$) on the gradient of the objective function can be zero or positive. Yet, following the proof of Lemma 2.1 in~\cite{jong2012efficient}, we obtain $w>0$. Then as shown in~\cite{jong2012efficient}, $w$ can be absorbed into other multipliers and hence omitted, after which the Fritz-John conditions reduce to the KKT conditions. Based on the above discussion, we have
\begin{talign}
\begin{array}{l}
\text{any global maximum of $\mathbb{P}_{2}$ needs to satisfy}\\ \text{the KKT conditions (\ref{P2KKTeqStationaritypn})--(\ref{P2KKTeqDualfeasibility}) below.}
\end{array} 
 \label{showP2KKT}    
\end{talign}

For Problem $\mathbb{P}_{2}$, with $\bm{\nu} := [\nu_n|_{n \in \mathcal{N}}]$, $\bm{\tau} := [\tau_n|_{n \in \mathcal{N}}]$ and $\lambda$ denoting the multipliers, and the Lagrangian function given by
\begin{talign} \label{LagrangianP2} 
   &  L_{\mathbb{P}_2}(\boldsymbol{p}, \boldsymbol{B},\boldsymbol{\beta},\boldsymbol{\nu},  \boldsymbol{\tau}, \lambda)   \nonumber
     \\
     &  = - \sum_{n \in \mathcal{N}} \beta_{n}
 + \sum_{n\in \mathcal{N}} \nu_n\cdot \big(\beta_n \cdot (p_n+p_n^{\textnormal{cir}}) - F_n(p_n, B_n) \big) \nonumber
    \\
     & \quad \,\textstyle{+  \sum_{n\in \mathcal{N}} \tau_n\cdot(r_n^{\min} -r_n ) + \lambda\cdot(\sum_{n\in \mathcal{N}}B_n- B_{\text{total}})},
\end{talign}
the KKT conditions of Problem $\mathbb{P}_{2}$ are as follows, with $L_{\mathbb{P}_2}$ short for $L_{\mathbb{P}_2}(\boldsymbol{p}, \boldsymbol{B},\boldsymbol{\beta},\boldsymbol{\nu},  \boldsymbol{\tau}, \lambda)  $ (see~\cite[Theorem 4.2.3]{cambini2008generalized} or~\cite[Section~1.4.2]{boyd2004convex} for a formal introduction to the KKT conditions):
\begin{subequations} \label{P2KKTeq}
\begin{talign} 
&\textnormal{\textbf{Stationarity:}} \nonumber
\\  &\frac{\partial L_{\mathbb{P}_2}}{\partial p_n}     = 0, \text{ for all } n \in \mathcal{N}, \label{P2KKTeqStationaritypn} \\ &\frac{\partial L_{\mathbb{P}_2}}{\partial B_n}     = 0, \text{ for all } n \in \mathcal{N}, \label{P2KKTeqStationarityBn} \\& \frac{\partial L_{\mathbb{P}_2}}{\partial \beta_n}  =  - 1 + \nu_n \cdot (p_n+p_n^{\textnormal{cir}})  = 0, \text{ for all } n \in \mathcal{N}; \label{P2KKTeqStationaritybetan} \\ &\textnormal{\textbf{Complementary slackness:}} \nonumber
\\ & \nu_n\cdot \big(\beta_n \cdot (p_n+p_n^{\textnormal{cir}}) - F_n(p_n, B_n) \big) = 0, \text{ for all } n \in \mathcal{N},\label{P2KKTeqComplementarynun} \\ & \tau_n\cdot(r_n^{\min} -r_n ) = 0, \text{ for all } n \in \mathcal{N}, \label{P2KKTeqComplementarytaun}  \\ &  \lambda\cdot(\sum_{n\in \mathcal{N}}B_n- B_{\text{total}}) =0; \label{P2KKTeqComplementarylambda}  \\ &\textnormal{\textbf{Primal feasibility:}} \nonumber
\\  & F_n(p_n, B_n) -\beta_{n}\cdot(p_n + p_n^{\textnormal{cir}}) \geq 0, \textnormal{ for all $n \in \mathcal{N}$} , \label{P2KKTeqPrimalfeasibility1}\\ & r_n(p_n,B_n) \geq r_n^{\min}, \text{for all } n \in \mathcal{N}, \label{P2KKTeqPrimalfeasibility2}\\  & \sum_{n\in \mathcal{N}}B_n \leq B_{\text{total}}; \label{P2KKTeqPrimalfeasibility3}\\ &\textnormal{\textbf{Dual feasibility:}} \nonumber
\\ & \nu_n \geq 0, \text{for all } n \in \mathcal{N}, \label{P2KKTeqDualfeasibilitynu}\\ &  \tau_n  \geq 0,  \text{for all } n \in \mathcal{N}. \label{P2KKTeqDualfeasibilitytau}\\ &  \lambda  \geq 0. \label{P2KKTeqDualfeasibility}  
\end{talign}
\end{subequations}

From~(\ref{P2KKTeqStationaritybetan}), it follows that
\begin{talign}
 & \nu_n = \frac{1}{p_n+p_n^{\textnormal{cir}}} , \label{P2KKTeqStationaritybetannures}
\end{talign} 
and (\ref{P2KKTeqDualfeasibilitynu}) holds.

Using~(\ref{P2KKTeqStationaritybetannures}) in~(\ref{P2KKTeqComplementarynun}), we know
\begin{talign}
 & \beta_n =    \frac{F_n(p_n, B_n)}{p_n+p_n^{\textnormal{cir}}},\label{P2KKTeqStationaritybetannuresbeta}
\end{talign}
and (\ref{P2KKTeqPrimalfeasibility1}) holds (actually the equal sign in~(\ref{P2KKTeqPrimalfeasibility1}) is taken).

Instead of solving $\mathbb{P}_{2}$'s KKT conditions~(\ref{P2KKTeqStationaritypn})--(\ref{P2KKTeqDualfeasibility}) directly, which is complex, we will connect them to a series of parametric convex optimization problems. In particular, supposing that $\boldsymbol{\beta} $ and $\boldsymbol{\nu}$ are already given and satisfy (\ref{P2KKTeqStationaritybetan})  (\ref{P2KKTeqComplementarynun}) (\ref{P2KKTeqPrimalfeasibility1}) and  (\ref{P2KKTeqDualfeasibilitynu}), then we have the following result for the rest of $\mathbb{P}_{2}$'s KKT conditions:
\begin{talign}
\hspace{-10pt}\begin{array}{l}
\text{(\ref{P2KKTeqStationaritypn}) (\ref{P2KKTeqStationarityBn})  (\ref{P2KKTeqComplementarytaun}) (\ref{P2KKTeqComplementarylambda}) (\ref{P2KKTeqPrimalfeasibility2}) (\ref{P2KKTeqPrimalfeasibility3}) (\ref{P2KKTeqDualfeasibilitytau}) and (\ref{P2KKTeqDualfeasibility}), denoted by}\\ \text{set $\mathcal{K}$, form the KKT conditions of Problem $\mathbb{P}_3(\bm{\beta},\bm{\nu})$ below.}
\end{array} 
 \label{showP3KKT}    
\end{talign}
where we have
\begin{subequations}\label{problem:3}
   \begin{align}
 &  \textnormal{Problem $\mathbb{P}_3(\bm{\beta},\bm{\nu})$:} ~  & \max_{\boldsymbol{p}, \boldsymbol{B}}\sum_{n \in \mathcal{N}}  \mathcal{F}_n(p_n, B_n \,| \, \beta_n, \nu_n)\tag{\ref{problem:3}}   \\[-5pt]
 &\textnormal{subject to: } ~ \textnormal{(\ref{constra:bandwidth}), (\ref{constra:rate})} . \notag
    \end{align}
 \end{subequations}
with $\mathcal{F}_n(p_n, B_n \,| \, \beta_n, \nu_n)$ defined as follows: 
\begin{align}
  \mathcal{F}_n(p_n, B_n \,| \, \beta_n, \nu_n) :    &= \nu_n\cdot \big(F_n(p_n, B_n) -\beta_n \cdot (p_n+p_n^{\textnormal{cir}})\big). \label{mathcalF1}
    \end{align}

Lemma~\ref{lemma:sum-of-ratios_lemma} below states the relationship between $\mathbb{P}_{2}$ and $\mathbb{P}_{3}$.

\begin{lemma}\label{lemma:sum-of-ratios_lemma}
If we have \textbf{\ding{192}:} $[\boldsymbol{p}^*, \boldsymbol{B}^*, \boldsymbol{\beta}^*]$ is a globally optimal solution to Problem $\mathbb{P}_{2}$, then we get \textbf{\ding{193}:} $\bm{\beta}^*$ denoting $[\beta_n^*|_{n \in \mathcal{N}}]$ satisfies
\begin{align}
  &\beta_n^*=\frac{F_n(p_n^*, B_n^*)}{p_n^*+p_n^{cir}},\text{ for all } n \in \mathcal{N} ,\label{decidebetanstar} 
\end{align}
and \textbf{\ding{194}:}  
 $[\boldsymbol{p}^*, \boldsymbol{B}^*]$ is a globally optimal solution to Problem $\mathbb{P}_3(\bm{\beta}^*,\bm{\nu}^*)$, where we have \textbf{\ding{195}:} $\bm{\nu}^*$ denoting $[\nu_n^*|_{n \in \mathcal{N}}]$ is
  given by 
\begin{align}
  & \nu_n^*=\frac{1}{p_n^*+p_n^{cir}}, \text{ for all } n \in \mathcal{N}.\label{decidenunstar} 
\end{align}
\end{lemma}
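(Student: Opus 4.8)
The plan is to prove the three claims \ding{193}, \ding{194}, \ding{195} in order, using the KKT machinery already assembled. First I would invoke~(\ref{showP2KKT}): since $[\boldsymbol{p}^*, \boldsymbol{B}^*, \boldsymbol{\beta}^*]$ is a global maximum of $\mathbb{P}_2$, there exist multipliers $\boldsymbol{\nu}^*, \boldsymbol{\tau}^*, \lambda^*$ such that the full KKT system~(\ref{P2KKTeqStationaritypn})--(\ref{P2KKTeqDualfeasibility}) holds at this point. From the stationarity condition~(\ref{P2KKTeqStationaritybetan}) with respect to $\beta_n$, I get~(\ref{P2KKTeqStationaritybetannures}), i.e.\ $\nu_n^* = \frac{1}{p_n^* + p_n^{\textnormal{cir}}}$, which is exactly claim~\ding{195} (Eq.~(\ref{decidenunstar})); note $p_n^* \geq 0$ and $p_n^{\textnormal{cir}} > 0$ so this is well-defined and positive, consistent with dual feasibility~(\ref{P2KKTeqDualfeasibilitynu}). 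Substituting this into the complementary-slackness condition~(\ref{P2KKTeqComplementarynun}) forces $\beta_n^* \cdot (p_n^* + p_n^{\textnormal{cir}}) = F_n(p_n^*, B_n^*)$, hence~(\ref{P2KKTeqStationaritybetannuresbeta}), which is claim~\ding{193} (Eq.~(\ref{decidebetanstar})). (Alternatively, \ding{193} also follows directly: if $\beta_n^*$ were strictly smaller than $\frac{F_n(p_n^*,B_n^*)}{p_n^*+p_n^{\textnormal{cir}}}$ for some $n$, one could increase that $\beta_n^*$ slightly while keeping~(\ref{constra:pysical_epi_data_rate}) satisfied, strictly raising the objective $\sum_n \beta_n$ — contradicting global optimality; and $\beta_n^*$ cannot exceed that ratio by~(\ref{constra:pysical_epi_data_rate}) itself. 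I would mention this as the cleaner argument, independent of the Fritz--John reduction.)

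Next, for claim~\ding{194}, I would appeal to~(\ref{showP3KKT}): with $\boldsymbol{\beta}^*$ and $\boldsymbol{\nu}^*$ now fixed to the values just derived, the remaining KKT conditions of $\mathbb{P}_2$ — namely the set $\mathcal{K}$ consisting of~(\ref{P2KKTeqStationaritypn}), (\ref{P2KKTeqStationarityBn}), (\ref{P2KKTeqComplementarytaun}), (\ref{P2KKTeqComplementarylambda}), (\ref{P2KKTeqPrimalfeasibility2}), (\ref{P2KKTeqPrimalfeasibility3}), (\ref{P2KKTeqDualfeasibilitytau}), (\ref{P2KKTeqDualfeasibility}) — coincide with the KKT conditions of $\mathbb{P}_3(\boldsymbol{\beta}^*, \boldsymbol{\nu}^*)$. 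This requires checking that the Lagrangian of $\mathbb{P}_3(\boldsymbol{\beta}^*,\boldsymbol{\nu}^*)$, formed from objective $\sum_n \mathcal{F}_n(p_n, B_n \mid \beta_n^*, \nu_n^*)$ and constraints~(\ref{constra:bandwidth}), (\ref{constra:rate}), has the same stationarity and slackness equations in $(\boldsymbol{p}, \boldsymbol{B}, \boldsymbol{\tau}, \lambda)$ as $L_{\mathbb{P}_2}$ does once the $\beta_n$-terms are frozen; this is a direct term-by-term comparison using the definition~(\ref{mathcalF1}) of $\mathcal{F}_n$ against~(\ref{LagrangianP2}). Since $[\boldsymbol{p}^*,\boldsymbol{B}^*]$ satisfies these conditions, it is a KKT point of $\mathbb{P}_3(\boldsymbol{\beta}^*,\boldsymbol{\nu}^*)$; and because $\mathbb{P}_3(\boldsymbol{\beta}^*,\boldsymbol{\nu}^*)$ is a convex program (objective concave by Lemma~\ref{lemma:fn_concave} since $\nu_n^* > 0$ and $F_n$ is concave while $\beta_n^*(p_n+p_n^{\textnormal{cir}})$ is affine; constraints convex by Lemma~\ref{lemma:rn_concave} and Footnote~\ref{convexconstraint}) with a strictly feasible interior point available via Remark~\ref{remarkrn2}, Lemma~\ref{lemmacvxuseKKT} upgrades the KKT point to a global optimum. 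That gives claim~\ding{194}.

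The main obstacle I anticipate is the justification of~(\ref{showP2KKT}) itself — i.e.\ why a global optimum of the nonconvex $\mathbb{P}_2$ must satisfy the KKT conditions despite failing standard constraint qualifications. The excerpt already sketches the route (Fritz--John necessary conditions à la Remark~4.2.2 of~\cite{cambini2008generalized} and Lemma~2.1 of~\cite{jong2012efficient}, then showing the objective multiplier $w > 0$ so it can be normalized away), so in the proof I would either cite that discussion or reproduce the short argument that $w = 0$ is impossible: if $w = 0$, the Fritz--John stationarity in $\beta_n$ would read $\nu_n (p_n^* + p_n^{\textnormal{cir}}) = 0$, forcing $\nu_n = 0$ for all $n$ (as $p_n^*+p_n^{\textnormal{cir}} > 0$), and then the $\beta_n$-stationarity contributes nothing while the remaining multipliers $\boldsymbol{\tau}, \lambda$ would have to make the $(\boldsymbol{p},\boldsymbol{B})$-gradients vanish on their own — but that is incompatible with not all Fritz--John multipliers being zero once one traces through~(\ref{constra:rate}) and~(\ref{constra:bandwidth}); contradiction. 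A secondary, more routine point to get right is verifying strict feasibility of $\mathbb{P}_3(\boldsymbol{\beta}^*,\boldsymbol{\nu}^*)$ for Slater's condition: one needs an interior point with $\sum_n B_n < B_{\textnormal{total}}$ and each $r_n(p_n,B_n) > r_n^{\min}$ simultaneously, which exists because $r_n$ is increasing and unbounded in $p_n$ for fixed $B_n > 0$ and the $r_n^{\min}$ are finite — I would note this explicitly rather than leave it implicit. Everything else reduces to the bookkeeping described above.
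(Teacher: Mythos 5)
Your proposal is correct and follows essentially the same route as the paper's proof: both extract \ding{193} and \ding{195} from the $\beta_n$-stationarity and complementary-slackness block of $\mathbb{P}_2$'s KKT system (justified via the Fritz--John reduction in~(\ref{showP2KKT})), identify the remaining conditions with the KKT system of $\mathbb{P}_3(\bm{\beta}^*,\bm{\nu}^*)$ as in~(\ref{showP3KKT}), and invoke convexity plus Slater's condition with Lemma~\ref{lemmacvxuseKKT} to conclude \ding{194}. The paper merely packages the argument as a chain of equivalences (used again in Lemma~\ref{lemma:sum-of-ratios_lemmastronger}), whereas you argue only the forward implication that the lemma actually asserts; your direct tightness argument for \ding{193} is a valid small shortcut but does not change the substance.
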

\begin{proof} 
Problem $\mathbb{P}_3(\bm{\beta},\bm{\nu})$ belongs to convex optimization. In particular, according to Lemma~\ref{lemma:fn_concave} and (\ref{defineFnpnBn})  (\ref{mathcalF1}), the objective function to be maximized is concave, while the constraints are clearly convex with Lemma~\ref{lemma:rn_concave} (note that ``\text{concave} $\geq$ \text{constant}'' is a convex constraint as noted in Footnote~\ref{convexconstraint}). Also, Slater's condition holds for Problem $\mathbb{P}_3(\bm{\beta},\bm{\nu})$. In other words, there exists at least one point $[\boldsymbol{p}, \boldsymbol{B}]$ such that constraints (\ref{constra:bandwidth}) and~(\ref{constra:rate}) are satisfied with strict inequalities.  An example is as follows: with $B_n$ being $\frac{B_{\text{total}}}{2N}$ for all $n \in \mathcal{N}$, set $p_n$ such that $r_n(p_n,B_n) = 2 r_n^{\min}$ for all $n \in \mathcal{N}$. The above along with Lemma~\ref{lemmacvxuseKKT} shows the first ``$\Leftrightarrow$'' result below: 
\begin{talign}
 \begin{rcases*} \text{\ding{194}} \Leftrightarrow \text{$[\boldsymbol{p}^*, \boldsymbol{B}^*, \boldsymbol{\beta}^*,\bm{\nu}^*]$ satisfies the set $\mathcal{K}$ of conditions in (\ref{showP3KKT}).} \\ \hspace{10pt} \text{Results \ding{193} and \ding{195} hold; i.e., $[\bm{\beta}^*,\bm{\nu}^*]$ satisfies (\ref{decidebetanstar}) and   (\ref{decidenunstar}).} \end{rcases*} \nonumber \\ \Leftrightarrow \text{$[\boldsymbol{p}^*, \boldsymbol{B}^*, \boldsymbol{\beta}^*,\bm{\nu}^*]$ satisfies KKT conditions (\ref{P2KKTeqStationaritypn})--(\ref{P2KKTeqDualfeasibility})}  \Leftarrow  \text{\ding{192}} 
  \label{lemallresults} 
  \end{talign}  
where the second ``$\Leftrightarrow$'' above holds from (\ref{P2KKTeqStationaritybetannures}) and   (\ref{P2KKTeqStationaritybetannuresbeta}), and the last ``$\Leftarrow$'' above follows from (\ref{showP2KKT}).
\end{proof}

For additional understanding of Lemma~\ref{lemma:sum-of-ratios_lemma}, interested readers can refer to Lemma 2.1 and Remark 2.1 of~\cite{jong2012efficient}, where Lemma 2.1 of~\cite{jong2012efficient} handles minimizing the sum of convex-concave ratios and Remark 2.1 of~\cite{jong2012efficient} maximizes the sum of \mbox{concave-convex} ratios.



 

With Lemma~\ref{lemma:sum-of-ratios_lemma} presented above, we now describe how to solve Problem $\mathbb{P}_{2}$ using $\mathbb{P}_3(\bm{\beta},\bm{\nu})$. 
Let $[\bm{p}^{\#}(\bm{\beta},\bm{\nu}), \bm{B}^{\#}(\bm{\beta},\bm{\nu})]$ denote
a globally optimal solution  to  $\mathbb{P}_3(\bm{\beta},\bm{\nu})$, where $\bm{p}^{\#}(\bm{\beta},\bm{\nu})=[p_n^{\#}(\bm{\beta},\bm{\nu})|_{n \in \mathcal{N}}]$ and $\bm{B}^{\#}(\bm{\beta},\bm{\nu})=[B_n^{\#}(\bm{\beta},\bm{\nu})|_{n \in \mathcal{N}}]$.   
We further define
\begin{talign}
\phi_{1,n}(\bm{\beta},\bm{\nu}) & \hspace{-2pt}:=\hspace{-2pt}- F_n(p_n^{\#}(\bm{\beta},\bm{\nu}), B_n^{\#}(\bm{\beta},\bm{\nu}))   + \beta_n \hspace{-2pt}\cdot\hspace{-2pt} (p_n^{\#}(\bm{\beta},\bm{\nu})\hspace{-2pt}+\hspace{-2pt}p_n^{\textnormal{cir}}),  \label{eqn:phi_1v2}
\\
\phi_{2,n}(\bm{\beta},\bm{\nu}) & \!:=\!-1+ \nu_n \cdot (p_n^{\#}(\bm{\beta},\bm{\nu})+p_n^{\textnormal{cir}}),
 \label{eqn:phi_2v2} 
\\
  \bm{\phi}_1(\bm{\beta},\bm{\nu}) & \!:=\! [\phi_{1,n}(\bm{\beta},\bm{\nu})|_{n \in \mathcal{N}}] ,~  \bm{\phi}_2(\bm{\beta},\bm{\nu})  \!:=\! [\phi_{2,n}(\bm{\beta},\bm{\nu})|_{n \in \mathcal{N}}], \nonumber 
    \\ 
     \bm{\phi}(\bm{\beta},\bm{\nu})  &\!:=\! [\bm{\phi}_1(\bm{\beta},\bm{\nu}),\bm{\phi}_2(\bm{\beta},\bm{\nu})]. \label{eqn:phi} 
\end{talign}
With
$(\boldsymbol{p}^*, \boldsymbol{B}^*, \boldsymbol{\beta}^*)$ denoting a globally optimal solution to Problem $\mathbb{P}_{2}$ (and hence $(\boldsymbol{p}^*, \boldsymbol{B}^*)$ denoting a globally optimal solution to Problem $\mathbb{P}_{1}$), clearly setting $(\bm{\beta},\bm{\nu})$ as $(\bm{\beta}^*,\bm{\nu}^*)$ of (\ref{decidebetanstar}) and   (\ref{decidenunstar}) satisfies 
\begin{talign}
\bm{\phi}(\bm{\beta},\bm{\nu}) = \bm{0}. \label{newton}
\end{talign} Based on the above, solving Problem $\mathbb{P}_{2}$ and hence $\mathbb{P}_{1}$ can be transformed into solving~(\ref{newton}) to obtain $\mathbb{P}_3(\bm{\beta}^*,\bm{\nu}^*)$, and then setting $[\boldsymbol{p}^*, \boldsymbol{B}^*]$ as $[\bm{p}^{\#}(\bm{\beta}^*,\bm{\nu}^*),$\hspace{0pt}$ \bm{B}^{\#}(\bm{\beta}^*,\bm{\nu}^*)]$, a globally optimal solution to $\mathbb{P}_3(\bm{\beta}^*,\bm{\nu}^*)$, according to Lemma~\ref{lemma:sum-of-ratios_lemma}. Based on the above idea, we present Algorithm~\ref{algo:MN} next, where it will become clear that
\begin{talign}
&
\text{solving $\mathbb{P}_{1}$ becomes solving a series of parametric convex optimi-} \nonumber\\ & \text{zation $\mathbb{P}_3(\bm{\beta}^{(i)},\bm{\nu}^{(i)})$, with $i$ denoting the iteration index.} 
 \label{eqseries}
\end{talign}

Readers may notice that our Lemma~\ref{lemma:sum-of-ratios_lemma} provides just a necessary condition for a global optimum of Problem $\mathbb{P}_{2}$. Lemma~\ref{lemma:sum-of-ratios_lemmastronger} below shows ``necessary'' and ``sufficient'' for strictly concave utility, which holds for all types of functions in simulations of Section \ref{sec:experiments}.
\begin{lemma} \label{lemma:sum-of-ratios_lemmastronger}
If the utility function $f_n(\cdot)$ for any $n \in \mathcal{N}$ is strictly concave (i.e., $f_n'(\cdot)$ is  decreasing) for $x > 0$, the ``$\Leftarrow$'' in  (\ref{lemallresults}) can be replaced by ``$\Leftrightarrow$'', so that ``\ding{193}\,\ding{194}\,\ding{195}''\,$\Leftrightarrow$\,``\ding{192}'' actually holds in Lemma~\ref{lemma:sum-of-ratios_lemma}.  
\end{lemma}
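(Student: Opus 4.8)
The plan is to prove the single implication that chain~(\ref{lemallresults}) does not yet give --- that any $[\boldsymbol{p}^*,\boldsymbol{B}^*,\boldsymbol{\beta}^*,\bm{\nu}^*]$ satisfying the full KKT system~(\ref{P2KKTeqStationaritypn})--(\ref{P2KKTeqDualfeasibility}) of $\mathbb{P}_2$ is a global maximum of $\mathbb{P}_2$, i.e.\ statement~\ding{192}. Combined with the forward direction of Lemma~\ref{lemma:sum-of-ratios_lemma}, this upgrades the last ``$\Leftarrow$'' to ``$\Leftrightarrow$'' and yields ``\ding{193}\,\ding{194}\,\ding{195}''\,$\Leftrightarrow$\,``\ding{192}''.

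The first move is to cash in strict concavity. When every $f_n$ is strictly concave and increasing, $F_n(p_n,B_n)=c_nf_n(r_{n,\textnormal{s}}(p_n,B_n))$ becomes \emph{strictly} jointly concave in $(p_n,B_n)$: by Lemma~\ref{lemma:rn_concave} the Shannon rate $r_n$ is concave, and it is non-constant along every line segment in the feasible region (its level curves are not line segments), so on any segment the concave function $r_{n,\textnormal{s}}$ is non-constant and its composition with the increasing, strictly concave $f_n$ is strictly concave; hence $F_n$ is strictly concave on every segment and therefore strictly jointly concave. Consequently, for any parameters with $\bm{\nu}>\bm{0}$, the objective $\sum_{n}\nu_n\bigl(F_n(p_n,B_n)-\beta_n(p_n+p_n^{\textnormal{cir}})\bigr)$ of $\mathbb{P}_3(\bm{\beta},\bm{\nu})$ is strictly concave, so $\mathbb{P}_3(\bm{\beta},\bm{\nu})$ has a \emph{unique} maximizer $[\bm{p}^{\#}(\bm{\beta},\bm{\nu}),\bm{B}^{\#}(\bm{\beta},\bm{\nu})]$; and since this maximizer attains the value $0$ (by~(\ref{decidebetanstar})), the inequality $\sum_{n}\nu_n\bigl(F_n(p_n,B_n)-\beta_n(p_n+p_n^{\textnormal{cir}})\bigr)\le 0$ on the feasible set of $\mathbb{P}_3(\bm{\beta},\bm{\nu})$ holds with equality exactly at that maximizer.

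Next I would reduce~\ding{192} to a uniqueness statement. Since $\mathbb{P}_2$ has at least one global maximum and every global maximum satisfies the KKT system (Lemma~\ref{lemma:sum-of-ratios_lemma}), it suffices to show the KKT system has at most one solution. By~\ding{193} and~\ding{195} a KKT solution is pinned down by its $(\bm{p},\bm{B})$-part via $\nu_n=1/(p_n+p_n^{\textnormal{cir}})$ and $\beta_n=F_n(p_n,B_n)/(p_n+p_n^{\textnormal{cir}})$, and by~\ding{194} this $(\bm{p},\bm{B})$ is the unique maximizer of $\mathbb{P}_3(\bm{\beta},\bm{\nu})$, equivalently a zero of the map $\bm{\phi}$ in~(\ref{eqn:phi}). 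So the whole lemma comes down to: $\bm{\phi}(\bm{\beta},\bm{\nu})=\bm{0}$ has a unique solution. Given two candidate zeros with $(\bm{p},\bm{B})$-parts $(\bm{p}^{(1)},\bm{B}^{(1)})$ and $(\bm{p}^{(2)},\bm{B}^{(2)})$ and induced $(\bm{\beta}^{(k)},\bm{\nu}^{(k)})$, I would feed $(\bm{p}^{(2)},\bm{B}^{(2)})$ into the $\mathbb{P}_3(\bm{\beta}^{(1)},\bm{\nu}^{(1)})$-optimality inequality and vice versa; substituting $\beta_n^{(k)}=F_n^{(k)}/a_n^{(k)}$ and $\nu_n^{(k)}=1/a_n^{(k)}$ with $a_n^{(k)}=p_n^{(k)}+p_n^{\textnormal{cir}}$, both collapse to weighted inequalities in $\sum_n w_n\bigl(\beta_n^{(1)}-\beta_n^{(2)}\bigr)$ with strictly positive weights, each strict unless the two $(\bm{p},\bm{B})$-parts coincide.

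The hard part is closing this cross-comparison: the two reduced inequalities, used only at each other's points, carry incompatible positive weights and do not by themselves force the two parts to coincide. To finish I would use more than the two cross-evaluations --- namely that each $\mathbb{P}_3(\bm{\beta}^{(k)},\bm{\nu}^{(k)})$-optimality inequality holds at \emph{every} feasible point, in particular along the segment joining $(\bm{p}^{(1)},\bm{B}^{(1)})$ to $(\bm{p}^{(2)},\bm{B}^{(2)})$ --- and differentiate it at the endpoint, where the perturbed weight $a_n/a_n^{(1)}$ equals $1$, then combine with the strict concavity from the first move (equality only at the maximizer) to conclude $(\bm{p}^{(1)},\bm{B}^{(1)})=(\bm{p}^{(2)},\bm{B}^{(2)})$, hence uniqueness of the KKT point, hence~\ding{192}. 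A parallel route worth keeping open is to skip uniqueness and instead prove directly that a KKT point $[\bm{p}^*,\bm{B}^*,\bm{\beta}^*,\bm{\nu}^*]$ satisfies $\sum_n c_n\varphi_n(p_n,B_n)\le\sum_n c_n\varphi_n(p_n^*,B_n^*)$ for all feasible $(\bm{p},\bm{B})$, by running the same segment argument against the \emph{concave} objective of $\mathbb{P}_3(\bm{\beta}^*,\bm{\nu}^*)$ and invoking its tangent-line inequality at $(\bm{p}^*,\bm{B}^*)$.
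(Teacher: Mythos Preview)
Your overall plan matches the paper's: reduce ``\ding{193}\,\ding{194}\,\ding{195}''$\Rightarrow$``\ding{192}'' to showing that the KKT system of $\mathbb{P}_2$ (equivalently, the equation $\bm{\phi}(\bm{\beta},\bm{\nu})=\bm{0}$) has a unique solution, and then combine with the already-known existence of a global maximum of $\mathbb{P}_2$.

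For uniqueness of the maximizer of $\mathbb{P}_3(\bm{\beta},\bm{\nu})$ you go through strict joint concavity of $F_n$. The paper does not do this; it invokes Theorem~\ref{thmsolveP3} (and its supporting lemma in the appendix), which writes the optimizer in closed form as a function of a single scalar $\lambda^{\#}$ determined as the unique root of the strictly monotone map $\lambda\mapsto\sum_n\mathcal{B}_n(\lambda)$. Both routes work here; the paper's avoids the delicate check that $r_n$ is non-constant along line segments.

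The genuine gap in your proposal is the second step, uniqueness of the root of $\bm{\phi}=\bm{0}$. You correctly observe that cross-evaluating two candidate zeros yields two inequalities with \emph{different} positive weights $a_n^{(2)}/a_n^{(1)}$ and $a_n^{(1)}/a_n^{(2)}$, which cannot be added to a contradiction; and your two sketched escapes (differentiating along the connecting segment, or proving the tangent inequality for $\sum_n c_n\varphi_n$ directly) remain heuristics, not proofs. The paper makes no attempt at such an abstract argument: its entire proof of Lemma~\ref{lemma:sum-of-ratios_lemmastronger} is to cite Theorem~\ref{thmsolveP3} and state ``We further obtain that $(\bm{\beta}^*,\bm{\nu}^*)$ satisfying~(\ref{newton}) is unique.'' In other words, the paper leans on the explicit closed-form expressions $\psi_n(\lambda)$, $\gamma_n(\lambda)$, $\mathcal{B}_n(\lambda)$ of Theorem~\ref{thmsolveP3} to pin down the fixed point, not on any general concave-convex ratio reasoning. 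If you want to close your argument without those formulas you would need an ingredient genuinely beyond the two endpoint optimality inequalities; the paper's route is simply to work with the concrete structure of Theorem~\ref{thmsolveP3} instead.
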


\begin{proof}
From Theorem~\ref{thmsolveP3} on Page~\pageref{thmsolveP3}, for decreasing $f_n'(\cdot)$, we can prove that $\mathbb{P}_3\hspace{-1pt}(\bm{\beta},\hspace{-2pt}\bm{\nu})$ has a \textbf{unique} global optimum $[\bm{p}^{\#}\hspace{-1pt}(\bm{\beta},\hspace{-2pt}\bm{\nu}),\hspace{-1pt} \bm{B}^{\#}\hspace{-1pt}(\bm{\beta},\hspace{-2pt}\bm{\nu})]$. We further obtain that $(\bm{\beta}^*,\bm{\nu}^*)$ satisfying~(\ref{newton}) (i.e., ``\ding{193}\,\ding{194}\,\ding{195}'') is unique. Since we have explained that $\mathbb{P}_{2}$ has at least one global maximum, we know from the above this maximum is unique. Thus, for strictly concave utility, ``\ding{193}\,\ding{194}\,\ding{195}''\,$\Leftrightarrow$\,``\ding{192}''   holds in Lemma~\ref{lemma:sum-of-ratios_lemma}.
\end{proof}

\begin{algorithm}[!htbp]
\caption{Our approach of computing a \textbf{globally \mbox{optimal}} solution $[\bm{p}, \bm{B}]$  (up to arbitrary accuracy) to Problem $\mathbb{P}_1$ of Section~\ref{optproblem} on weighted sum-UEE optimization.}
\label{algo:MN}
Initialize feasible $[\bm{p}^{(0)}$, $\bm{B}^{(0)}]$, $i = 0$, $\xi \in (0, 1)$, $\epsilon \in (0, 1)$. \\
Calculate $\bm{\beta}^{(0)}=[\beta_n^{(0)}|_{n\in \mathcal{N}}]$ and $\bm{\nu}^{(0)}=[\nu_n^{(0)}|_{n\in \mathcal{N}}]$ via
$\beta_n^{(0)} = \frac{c_nf_n(r_{n,\textnormal{s}}(p_n^{(0)},B_n^{(0)}))}{{p_n^{(0)}+p_n^{\textnormal{cir}}}} $ and $\nu_n^{(0)} = \frac{1}{p_n^{(0)}+p_n^{\textnormal{cir}}}.$\newline \textit{//Comment: Since we aim to find $(\bm{\beta}^*,\bm{\nu}^*)$ satisfying~(\ref{newton}) (i.e., ``\ding{193}\,\ding{194}\,\ding{195}'' in Lemma~\ref{lemma:sum-of-ratios_lemma}), the above initialization is intuitively \mbox{good since it mimics (\ref{decidebetanstar}) in ``\ding{193}'' and   (\ref{decidenunstar}) in ``\ding{195}'' of~Lemma~\ref{lemma:sum-of-ratios_lemma}.}}

\Repeat{$\bm{\phi}(\bm{\beta}^{(i)},\bm{\nu}^{(i}))$ is close to $\bm{0}$}{
Use Eq.~(\ref{eqoptimal}) in Theorem~\ref{thmsolveP3} on Page~\pageref{eqoptimal} to solve $\mathbb{P}_3(\boldsymbol{\beta}^{(i)}, \boldsymbol{\nu}^{(i)})$, and obtain a solution $[\bm{p}^{\#}(\bm{\beta}^{(i)},\bm{\nu}^{(i)}), \bm{B}^{\#}(\bm{\beta}^{(i)},\bm{\nu}^{(i)})]$. \label{solveP3code}
\textit{//Comment: This line can use the bisection method in a straightforward manner, so we put the details in Appendix~\ref{applambdastar}.}

Use $[\bm{p}^{\#}(\bm{\beta}^{(i)},\bm{\nu}^{(i)}), \bm{B}^{\#}(\bm{\beta}^{(i)},\bm{\nu}^{(i)})]$ obtained above to compute $\bm{\phi}(\bm{\beta}^{(i)}, \bm{\nu}^{(i)})$ according to Eq.~(\ref{eqn:phi}) on Page~\pageref{eqn:phi}.

If $\bm{\phi}(\bm{\beta}^{(i)}, \bm{\nu}^{(i)})$ is the zero vector, then $[\bm{p}^{\#}(\bm{\beta}^{(i)},\bm{\nu}^{(i)}), \bm{B}^{\#}(\bm{\beta}^{(i)},\bm{\nu}^{(i)})]$ is the global optimal solution to Problem $\mathbb{P}_1$ and we finish the algorithm. 

Otherwise,  
 let $J_i$ be the smallest integer that satisfies\vspace{-3pt} 
\begin{talign} \label{newton_method}
    \hspace{-30pt}  & \hspace{-30pt} \| \bm{\phi}(\bm{\beta}^{(i)}+\xi^{J_i}\bm{\sigma_1}^{(i)}, \bm{\nu}^{(i)}+\xi^{J_i}\bm{\sigma_2}^{(i)}) \|_2 \notag \\
    \hspace{-30pt}  & \hspace{-30pt}\le (1-\xi^{J_i}\epsilon) \cdot \| \bm{\phi}(\bm{\beta}^{(i)}, \bm{\nu}^{(i)}) \|_2 ,\hspace{-20pt}
\end{talign}
where ``$\|\cdot\|_2$'' denotes the Euclidean norm, and the $n$th-dimension of $\bm{\sigma_1}^{(i)}$ (resp.~$\bm{\sigma_2}^{(i)}$) for $n\in \mathcal{N}$, denoted by $\bm{\sigma_1}^{(i)}[n]$ (resp.~$\bm{\sigma_2}^{(i)}[n]$), is given by\vspace{-3pt} 
\begin{talign}
  \hspace{-30pt}  & \hspace{-30pt}\textstyle{\bm{\sigma_1}^{(i)}[n]:= -\frac{ (\partial \phi_{1,n}(\bm{\beta}^{(i)},\bm{\nu}^{(i)}))/(\partial \beta_n)}{\phi_{1,n}(\bm{\beta}^{(i)},\bm{\nu}^{(i)})}} \hspace{-100pt}~\label{defsigma1} \end{talign}
~\vspace{-10pt}
\begin{talign} \hspace{-8pt}& = - \frac{\text{(\ref{bothequal})}}{\text{RHS of~(\ref{eqn:phi_1v2}) with $(\bm{\beta},\bm{\nu})$ being $(\bm{\beta}^{(i)},\bm{\nu}^{(i)})$}} , \nonumber \end{talign}
~\vspace{-10pt}
\begin{talign}
 \hspace{-30pt}  & \hspace{-30pt}\bm{\sigma_2}^{(i)}[n]:= -\frac{(\partial \phi_{2,n}(\bm{\beta}^{(i)},\bm{\nu}^{(i)}))/(\partial \nu_n)}{\phi_{2,n}(\bm{\beta}^{(i)},\bm{\nu}^{(i)})} \hspace{-100pt}~\label{defsigma2} 
\end{talign}
~\vspace{-10pt}
\begin{talign}
\hspace{-8pt}&=- \frac{\text{(\ref{bothequal})}}{\text{RHS of~(\ref{eqn:phi_2v2}) with $(\bm{\beta},\bm{\nu})$ being $(\bm{\beta}^{(i)},\bm{\nu}^{(i)})$}}, \nonumber 
\end{talign}
where RHS is short for the right-hand side.
\newline \textit{//Comment: Obtaining ${J_i}$ above involves evaluating $({J_i}+1)$ number of $\bm{\phi}(\bm{\beta}, \bm{\nu})$ for~(\ref{newton_method}). To compute each of them, we need to solve Problem $\mathbb{P}_3(\bm{\beta},\bm{\nu})$ via~(\ref{eqoptimal}) on Page~\pageref{eqoptimal} to obtain~\mbox{$[\bm{p}^{\#}\hspace{-1pt}(\bm{\beta},\hspace{-1pt}\bm{\nu}),\hspace{-1pt} \bm{B}^{\#}\hspace{-1pt}(\bm{\beta},\hspace{-1pt}\bm{\nu})]$}, and then use~(\ref{eqn:phi}).}

Update\vspace{-3pt} 
\begin{talign} \label{update_nu_beta}
  \hspace{-20pt}  & \hspace{-20pt}[\bm{\beta}^{(i+1)},\bm{\nu}^{(i+1)}] \leftarrow [ \bm{\beta}^{(i)} + \xi^{J_i}\bm{\sigma_1}^{(i)},\bm{\nu}^{(i)}+\xi^{{J_i}}\bm{\sigma_2}^{(i)}], 
\end{talign}
where ${J_i}$ is obtained from~(\ref{newton_method}). \label{lineupdate_nu_beta}
\newline \textit{//Comment: If ${J_i}$ happens to be $0$, then (\ref{update_nu_beta}) becomes the standard Newton method, as explained in the last paragraph on Page 13 of~\cite{jong2012efficient}. As shown by Problem 2 on Page 14 of~\cite{jong2012efficient}, the standard Newton method may fail for some initial points, so we follow~\cite{jong2012efficient} to find ${J_i}$ according to~(\ref{newton_method}) instead of always setting ${J_i}$ as $0$.}  

Let $i \leftarrow i+1$.
}

Use the current $[\bm{\beta},\bm{\nu}]$ in~(\ref{eqoptimal}) on Page~\pageref{eqoptimal} and return the obtained $[\bm{p}^{\#}(\bm{\beta},\bm{\nu}), \bm{B}^{\#}(\bm{\beta},\bm{\nu})]$ as the solution to Problem~$\mathbb{P}_1$.

\end{algorithm}

\subsection{Our Algorithm~\ref{algo:MN} to solve Problem $\mathbb{P}_{1}$} \label{secproveP1v2}

As explained in the previous subsection, we solve~(\ref{newton}) first in order to obtain a globally optimal solution to Problem $\mathbb{P}_{1}$.
Root-finding algorithms such as Newton's method can be used to solve~(\ref{newton}). Our Algorithm~\ref{algo:MN} actually uses a modified Newton method of~\cite{jong2012efficient}, which always converges to the desired solution. In contrast, the original Newton's method is sensitive to initialization (e.g., no convergence if starting at bad initialization, as shown in Section~4 of~\cite{jong2012efficient}).


Algorithm~\ref{algo:MN} starts with computing the initial $[\bm{\beta}^{(0)},\bm{\nu}^{(0)}]$ from $[\bm{p}^{(0)}$, $\bm{B}^{(0)}]$, as shown in the pseudocode.
In the $i$-th iteration of Algorithm~\ref{algo:MN} ($i$ starts from $0$), we update $[\bm{\beta}^{(i)},\bm{\nu}^{(i)}]$ to $[\bm{\beta}^{(i+1)},\bm{\nu}^{(i+1)}]$ based on 
(\ref{newton_method}) (\ref{defsigma1}) (\ref{defsigma2}) (\ref{update_nu_beta}), which essentially present the modified Newton method to solve~(\ref{newton}). The numerators in~(\ref{defsigma1})~(\ref{defsigma2}) use $ (\partial \phi_{1,n}(\bm{\beta}^{(i)},\bm{\nu}^{(i)}))/(\partial \beta_n)$ and $(\partial \phi_{2,n}(\bm{\beta}^{(i)},\bm{\nu}^{(i)}))/(\partial \nu_n)$, which are shown in Appendix~\ref{appbothequal} to be equal~to
\begin{talign}
p_n^{\#}(\bm{\beta}^{(i)},\bm{\nu}^{(i)})+p_n^{\textnormal{cir}}. \label{bothequal}
\end{talign}
The denominators in~(\ref{defsigma1}) (\ref{defsigma2}) use $\phi_{1,n}(\bm{\beta}^{(i)},\bm{\nu}^{(i)})$ and $\phi_{2,n}(\bm{\beta}^{(i)},\bm{\nu}^{(i)})$, whose computations based on~(\ref{eqn:phi_1v2})~(\ref{eqn:phi_2v2}) require obtaining\\ $[\bm{p}^{\#}(\bm{\beta}^{(i)},\bm{\nu}^{(i)}), \bm{B}^{\#}(\bm{\beta}^{(i)},\bm{\nu}^{(i)})]$ by solving Problem $\mathbb{P}_3(\boldsymbol{\beta}^{(i)}, \boldsymbol{\nu}^{(i)})$. This is the reason why we have~(\ref{eqseries}).

We remark that in Algorithm~\ref{algo:MN}, the iterative process of computing $[\bm{p}^{\#}(\bm{\beta}^{(i)},\bm{\nu}^{(i)}), \bm{B}^{\#}(\bm{\beta}^{(i)},\bm{\nu}^{(i)})]$ and then using it for updating $[\bm{\beta}^{(i)},\bm{\nu}^{(i)})]$ to $[\bm{\beta}^{(i+1)},\bm{\nu}^{(i+1)})]$ is not the classical dual gradient descent (DGD)~\cite{boyd2004convex} despite the resemblance, since $\bm{\beta}$ is not a Lagrange multiplier. Algorithm~\ref{algo:MN} solves~(\ref{newton}) using the modified Newton method, while DGD involves maximizing the dual function.







We formally state the solution quality of Algorithm~\ref{algo:MN} as follows. 


\begin{theoremx} \label{thm:main}

Under Conditions~\ref{conditionrn} and~\ref{fconcave} of Section~\ref{sec:prob_formu}, our proposed Algorithm~\ref{algo:MN} finds a \textbf{globally optimal} solution to Problem $\mathbb{P}_{1}$  (up to arbitrary accuracy). 
\end{theoremx}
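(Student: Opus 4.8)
\textbf{Proof proposal for Theorem~\ref{thm:main}.}

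The plan is to establish correctness and convergence of Algorithm~\ref{algo:MN} in two stages: (i) show that any fixed point of the iteration, i.e., any $[\bm{\beta},\bm{\nu}]$ with $\bm{\phi}(\bm{\beta},\bm{\nu})=\bm{0}$, yields a globally optimal solution to $\mathbb{P}_1$; and (ii) show that the modified Newton iteration of~(\ref{newton_method})--(\ref{update_nu_beta}) actually drives $\|\bm{\phi}(\bm{\beta}^{(i)},\bm{\nu}^{(i)})\|_2$ to zero, so the algorithm reaches such a fixed point up to arbitrary accuracy. Stage (i) is essentially a consequence of Lemma~\ref{lemma:sum-of-ratios_lemma} read in reverse: if $\bm{\phi}(\bm{\beta},\bm{\nu})=\bm{0}$, then by~(\ref{eqn:phi_1v2})~(\ref{eqn:phi_2v2}) the pair $(\bm{\beta},\bm{\nu})$ together with the $\mathbb{P}_3(\bm{\beta},\bm{\nu})$-optimal $[\bm{p}^{\#},\bm{B}^{\#}]$ satisfies~(\ref{P2KKTeqStationaritybetan})~(\ref{P2KKTeqComplementarynun})~(\ref{P2KKTeqPrimalfeasibility1}) and~(\ref{P2KKTeqDualfeasibilitynu}); and since $[\bm{p}^{\#},\bm{B}^{\#}]$ is globally optimal for the convex problem $\mathbb{P}_3(\bm{\beta},\bm{\nu})$, by Lemma~\ref{lemmacvxuseKKT} and Slater's condition it satisfies the set $\mathcal{K}$ of KKT conditions. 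Chaining these through the equivalences in~(\ref{lemallresults}) shows $[\bm{p}^{\#},\bm{B}^{\#},\bm{\beta},\bm{\nu}]$ satisfies the full KKT system~(\ref{P2KKTeqStationaritypn})--(\ref{P2KKTeqDualfeasibility}) of $\mathbb{P}_2$. Since $\mathbb{P}_2$ is a max-of-sum of $\beta_n$ with the concave–convex-ratio structure, it has at least one global maximum, and that maximum must satisfy the same KKT system; combined with the uniqueness argument (strict version via Lemma~\ref{lemma:sum-of-ratios_lemmastronger}, or a direct argument that the KKT system pins down the objective value even without strict concavity), one concludes the returned point attains the global optimum of $\mathbb{P}_2$, hence of $\mathbb{P}_1$.

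For stage (ii), I would first record that Algorithm~\ref{algo:MN} is an instance of the globally convergent modified Newton scheme of~\cite{jong2012efficient}: the update directions $\bm{\sigma_1}^{(i)},\bm{\sigma_2}^{(i)}$ in~(\ref{defsigma1})~(\ref{defsigma2}) are precisely the Newton directions for the decoupled scalar equations $\phi_{1,n}=0$ and $\phi_{2,n}=0$ (this uses the fact, proved in Appendix~\ref{appbothequal}, that the relevant partial derivatives both equal~(\ref{bothequal})), and the Armijo-type step-size rule~(\ref{newton_method}) guarantees a monotone decrease of $\|\bm{\phi}\|_2$ with the line-search exponent $J_i$ well-defined and finite. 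I would then invoke the convergence theorem of~\cite{jong2012efficient} (its Theorem in Section~4, under the hypotheses that $\bm{\phi}$ is continuously differentiable with nonsingular Jacobian along the trajectory and that the iterates stay in a region where $\mathbb{P}_3$ is solvable), which yields $\|\bm{\phi}(\bm{\beta}^{(i)},\bm{\nu}^{(i)})\|_2 \to 0$. To discharge those hypotheses in our setting, I would use Theorem~\ref{thmsolveP3} (and the closed-form~(\ref{eqoptimal})) to argue that $[\bm{p}^{\#}(\bm{\beta},\bm{\nu}), \bm{B}^{\#}(\bm{\beta},\bm{\nu})]$ — and hence $\bm{\phi}(\bm{\beta},\bm{\nu})$ — depends smoothly on $(\bm{\beta},\bm{\nu})$ on the relevant domain, that the feasible $[\bm{p}^{(0)},\bm{B}^{(0)}]$ initialization keeps $\bm{\nu}^{(i)}>\bm{0}$ and the iterates bounded, and that the Jacobian of $\bm{\phi}$ is nonsingular (most cleanly under strict concavity, appealing to Lemma~\ref{lemma:sum-of-ratios_lemmastronger} for uniqueness of the root; in the merely-concave case one argues the objective value still converges to the optimum even if the root set is not a singleton).

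The main obstacle I anticipate is \emph{not} the optimality characterization (stage (i) is a clean unwinding of Lemma~\ref{lemma:sum-of-ratios_lemma}), but rather rigorously justifying the global convergence in stage (ii) — specifically, verifying the smoothness and Jacobian-nonsingularity conditions that~\cite{jong2012efficient}'s modified-Newton convergence result requires, since these depend delicately on how the $\mathbb{P}_3(\bm{\beta},\bm{\nu})$ solution map behaves (e.g., whether active sets of constraints~(\ref{constra:rate}) change, whether $p_n^{\#}$ hits a boundary) and on the mild-vs-strict concavity of the $f_n$. A secondary subtlety is making precise the phrase ``up to arbitrary accuracy'': I would state it as, for any tolerance $\delta>0$ the stopping condition $\|\bm{\phi}(\bm{\beta}^{(i)},\bm{\nu}^{(i)})\|_2 \le \delta$ is met in finitely many iterations, and a standard continuity/perturbation bound then translates this into a $\delta$-dependent bound on the suboptimality of the returned $[\bm{p},\bm{B}]$ for $\mathbb{P}_1$. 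I would handle both points by first giving the full, clean argument under strict concavity (covering all simulation utility types) and then remarking how the general concave case follows from an approximation argument (perturb $f_n$ to a strictly concave $f_n+\varepsilon\,g_n$ and let $\varepsilon\to 0$), so that the theorem is stated in full generality while the heavy lifting is done in the well-behaved case.
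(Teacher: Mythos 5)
Your proposal is correct and follows essentially the same route as the paper: the paper's own proof of Theorem~\ref{thm:main} is simply a pointer back to Sections~\ref{secTransforming} and~\ref{secproveP1v2} (the epigraph transform to $\mathbb{P}_2$, the KKT equivalence of Lemma~\ref{lemma:sum-of-ratios_lemma} linking $\mathbb{P}_2$ to the parametric convex problems $\mathbb{P}_3(\bm{\beta},\bm{\nu})$, and the modified Newton iteration whose convergence is delegated to Theorem~3.2 of~\cite{jong2012efficient}), which is exactly your stages (i) and (ii). The additional care you take in stage (ii) --- verifying the smoothness and Jacobian-nonsingularity hypotheses of the modified Newton scheme, and patching the sufficiency gap for merely concave $f_n$ (where the paper relies on Lemma~\ref{lemma:sum-of-ratios_lemmastronger} only under strict concavity) --- goes beyond what the paper writes down, but does not change the underlying argument.
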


\begin{proof}
The analyses above in Sections~\ref{secTransforming} and~\ref{secproveP1v2}, stated before Theorem~\ref{thm:main}, have already provided the proof of Theorem~\ref{thm:main}.
\end{proof}




Next, we discuss the fast convergence and order-optimal time complexity of Algorithm~\ref{algo:MN}. As shown in Theorem~3.2 of~\cite{jong2012efficient}, the modified Newton method used in Algorithm~\ref{algo:MN} has global linear and local quadratic rates of convergence.



To analyze the time complexity, we use floating point operations (flops). One addition/subtraction/multiplication/division is one flop.  We now analyze Lines 3--\ref{lineupdate_nu_beta}, the main part of Algorithm~\ref{algo:MN}. Suppose that in Line~\ref{solveP3code}, we use the bisection method to obtain $\lambda^{\#}$ from~(\ref{lambdastar}), for which there are $K$ iterations 
 and each iteration has $\mathcal{O}(N)$, where $K$ depends on the error tolerance, as detailed in Appendix~\ref{applambdastar}.
Hence, Line~\ref{solveP3code} consumes $\mathcal{O}(KN)$.
Lines 5, 6, and \ref{lineupdate_nu_beta} cost $\mathcal{O}(N)$
 flops. Line 7 takes $\mathcal{O}((J_i+1)N)$ flops. Suppose the loop in Line~3 needs $\mathcal{I}$ iterations before convergence ($\mathcal{I}$ is less than $10$ in our experiments to find a $0.01$-global optimum, which means the relative difference between the objective-function values under the found solution and the true global optimum is at most $0.01$). Then the time complexity of Algorithm~\ref{algo:MN} is $\mathcal{O}(\mathcal{I} K N + \sum_{i=0}^{\mathcal{I}-1} (J_i+1)N)$, which is linear in $N$. This linear complexity is the best that any algorithm can do, since we need to decide $N$ number of $[B_n,p_n]$ for all $N$ users. Hence, Algorithm~\ref{algo:MN} achieves the optimal time complexity in the order sense.


   

\subsection[Solving Problem P3]{Solving Problem $\mathbb{P}_3(\bm{\beta},\bm{\nu})$} \label{secsolveP3}


From~(\ref{eqseries}), solving Problem $\mathbb{P}_{1}$ requires solving a series of $\mathbb{P}_3(\bm{\beta},\bm{\nu})$. One approach is to use the Stanford CVX tool~\cite{boyd2004convex}. However, the worst-case complexity of global convex optimization grows exponentially with the problem size $N$ from Section~1.4.2 of~\cite{boyd2004convex}. 
Based on Theorem~\ref{thmsolveP3} below, we can solve $\mathbb{P}_3(\bm{\beta},\bm{\nu})$ and hence $\mathbb{P}_{1}$ in linear time with respect to~$N$, as discussed in the previous subsection.


\begin{theoremx} \label{thmsolveP3}
Under Conditions~\ref{conditionrn} and~\ref{fconcave} of Section~\ref{sec:prob_formu},
any \textbf{globally optimal} solution $[\bm{p}^{\#}(\bm{\beta},\bm{\nu}), \bm{B}^{\#}(\bm{\beta},\bm{\nu})]$ to Problem $\mathbb{P}_3(\bm{\beta},\bm{\nu})$ defined in (\ref{problem:3})  can be given as follows:
\begin{talign}
    \begin{cases}
        B_n^{\#}(\bm{\beta},\bm{\nu}) = \mathcal{B}_n(\lambda^{\#})\text{ for all $n \in \mathcal{N}$,} \\
        p_n^{\#}(\bm{\beta}, \bm{\nu}) = \frac{\sigma_n^2 B_n^{\#}(\bm{\beta},\bm{\nu}) \cdot \psi_n(\lambda^{\#})}{g_n}~\text{for all $n \in \mathcal{N}$},
    \end{cases} \label{eqoptimal}
\end{talign}
with function $\mathcal{B}_n(\lambda)$ defined by
\begin{talign}
    \mathcal{B}_n(\lambda):=\frac{\max\{ \gamma_n(\lambda),r_{n}^{\min}\}}  {\log_2\big(1+\psi_n(\lambda))},
\end{talign}
and $\lambda^{\#}$ denoting the solution to  
\begin{talign}\label{lambdastar}
 \sum\limits_{n \in \mathcal{N}} \mathcal{B}_n(\lambda) = B_{\text{total}},
\end{talign} 
where $\psi_n(\lambda)$ and $\gamma_n(\lambda)$ are defined by
\begin{talign}
& \psi_n(\lambda) := \exp\big\{1+W\big(\frac{1}{e}(\frac{g_n\lambda}{\nu_n\beta_n\sigma_n^2}-1)\big)\big\}-1, \\
&\text{for $W(\cdot)$ being the principal branch of the Lambert W function}\nonumber \\
&\text{($W(z)$ for $z \geq -e^{-1}$ is the solution of $x \geq -1$ to the equation $x e^{x} = z$)}, \nonumber  \\
&\text{and } \gamma_n(\lambda) - r_{n, \textnormal{e}}:= \begin{cases}
\xi:=(f_n^{\prime})^{-1}\Big(\frac{\beta_n\sigma_n^2\cdot(1+\psi_n(\lambda))\ln{2}}{c_ng_n}\Big) \text{ }\\ ~~~~~\text{when such result $\xi \geq 0$ exists},\\
0,~\text{otherwise},
\end{cases} 
\end{talign}
with $(f_n^{\prime})^{-1}(\cdot)$ denoting the inverse function of the derivative $f_n^{\prime}(\cdot)$.



\end{theoremx}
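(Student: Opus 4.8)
The plan is to characterize \emph{all} KKT points of $\mathbb{P}_3(\bm{\beta},\bm{\nu})$ and to show that each of them has the closed form~(\ref{eqoptimal}). As already established in the proof of Lemma~\ref{lemma:sum-of-ratios_lemma}, $\mathbb{P}_3(\bm{\beta},\bm{\nu})$ is a convex program (concave objective by Lemma~\ref{lemma:fn_concave} together with~(\ref{mathcalF1}), convex constraints by Lemma~\ref{lemma:rn_concave} and Footnote~\ref{convexconstraint}) and Slater's condition holds, so by Lemma~\ref{lemmacvxuseKKT} the KKT conditions -- the set $\mathcal{K}$ identified in~(\ref{showP3KKT}) -- are both necessary and sufficient for global optimality. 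First I would write the Lagrangian with a multiplier $\tau_n \geq 0$ on each rate constraint $r_n^{\min} - r_n(p_n,B_n) \leq 0$ and $\lambda \geq 0$ on the budget $\sum_n B_n - B_{\textnormal{total}} \leq 0$, so that the two stationarity equations read $(\nu_n c_n f_n'(r_{n,\textnormal{s}}) + \tau_n)\,\partial r_n/\partial p_n = \nu_n\beta_n$ and $(\nu_n c_n f_n'(r_{n,\textnormal{s}}) + \tau_n)\,\partial r_n/\partial B_n = \lambda$.

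Dividing these two equations eliminates the common factor $\nu_n c_n f_n'(r_{n,\textnormal{s}}) + \tau_n$; after inserting the expressions for $\partial r_n/\partial p_n$ and $\partial r_n/\partial B_n$ and simplifying in terms of the receive SNR $x_n := g_n p_n/(\sigma_n^2 B_n)$, what remains is the single scalar equation $(1+x_n)\ln(1+x_n) - x_n = g_n\lambda/(\nu_n\beta_n\sigma_n^2)$. The substitution $1 + x_n = e^{1+v_n}$ turns this into $v_n e^{v_n} = \tfrac{1}{e}\big(g_n\lambda/(\nu_n\beta_n\sigma_n^2) - 1\big)$, whose root on the branch $v_n \geq -1$ is $v_n = W\big(\tfrac{1}{e}(g_n\lambda/(\nu_n\beta_n\sigma_n^2)-1)\big)$ with $W$ the principal Lambert $W$ function (the argument is $\geq -e^{-1}$ since $\lambda \geq 0$ and the other quantities are positive, so the principal branch applies). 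This gives exactly $x_n = \psi_n(\lambda)$, i.e., $p_n = \sigma_n^2 B_n \psi_n(\lambda)/g_n$, the second line of~(\ref{eqoptimal}).

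It then remains to fix $B_n$ and $\lambda$. Since $x_n$ is pinned to $\psi_n(\lambda)$, we have $r_n = B_n\log_2(1+\psi_n(\lambda))$, so $B_n$ is determined once $r_n$ is. Complementary slackness $\tau_n\cdot(r_n^{\min}-r_n)=0$ splits into two cases: if the rate constraint is slack, then $\tau_n = 0$ and the first stationarity equation forces $f_n'(r_{n,\textnormal{s}}) = \beta_n\sigma_n^2(1+\psi_n(\lambda))\ln 2/(c_n g_n)$, hence $r_{n,\textnormal{s}} = \xi = (f_n')^{-1}(\cdot)$ whenever such a root $\xi \geq 0$ exists and meets $\xi + r_{n,\textnormal{e}} \geq r_n^{\min}$; otherwise the constraint is active, $r_n = r_n^{\min}$, and non-negativity of $\tau_n$ in this branch follows from the monotonicity of $f_n'$ granted by Condition~\ref{fconcave} (so that $r_{n,\textnormal{s}} = r_n^{\min}-r_{n,\textnormal{e}} \geq \xi$ yields $f_n'(r_{n,\textnormal{s}}) \leq f_n'(\xi)$). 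The two cases merge into $r_n = \max\{\gamma_n(\lambda), r_n^{\min}\}$, hence $B_n = \mathcal{B}_n(\lambda)$, the first line of~(\ref{eqoptimal}). Finally I would rule out $\lambda = 0$: then $\psi_n(0) = 0$, forcing $p_n = 0$ and $r_n = 0 < r_n^{\min}$, contradicting feasibility; so $\lambda > 0$, the budget is tight by complementary slackness, and $\sum_n \mathcal{B}_n(\lambda) = B_{\textnormal{total}}$, which is~(\ref{lambdastar}); monotonicity of $\mathcal{B}_n(\lambda)$ in $\lambda$ (non-increasing numerator over an increasing positive denominator) gives a unique root $\lambda^{\#}$ and justifies the bisection of Appendix~\ref{applambdastar}.

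I expect the main obstacle to be the bookkeeping in this last step rather than the Lambert-$W$ reduction, which is essentially a one-line change of variable once spotted. Concretely, one must verify that the closed-form candidate satisfies \emph{every} KKT condition -- both stationarity equations, both complementary-slackness equations, and all sign requirements $\tau_n \geq 0$, $\lambda \geq 0$ -- and must cope with the degenerate situations where $f_n'$ is flat on an interval or where its range does not contain the target value $\beta_n\sigma_n^2(1+\psi_n(\lambda))\ln 2/(c_n g_n)$, which is precisely why the statement hedges with ``when such result $\xi \geq 0$ exists''. Only after this verification does the sufficiency direction of Lemma~\ref{lemmacvxuseKKT} let us conclude that any point of the form~(\ref{eqoptimal}) is globally optimal for $\mathbb{P}_3(\bm{\beta},\bm{\nu})$.
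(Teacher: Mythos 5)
Your proposal is correct and follows essentially the same route as the paper's Appendix proof: invoke necessity and sufficiency of the KKT conditions for the convex program $\mathbb{P}_3(\bm{\beta},\bm{\nu})$, divide the two stationarity equations to pin the SNR $g_np_n/(\sigma_n^2B_n)$ to $\psi_n(\lambda)$ via the Lambert-$W$ change of variable, split on whether the rate constraint is active to obtain $r_n=\max\{\gamma_n(\lambda),r_n^{\min}\}$, and close with the tight bandwidth budget determining $\lambda^{\#}$. The only cosmetic difference is how $\lambda>0$ is ruled in: the paper reads it off the $B_n$-stationarity condition before dividing, while you deduce it from $\psi_n(0)=0$ forcing infeasibility; both are valid.
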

Theorem~\ref{thmsolveP3} is proved 
in Appendix~\ref{Appendix:theorem6.3}, where we also elaborate on the bisection method to obtain $\lambda^{\#}$ from~(\ref{lambdastar}).
\section{Broad Usage of Our Technique} \label{Insights}

In this section, we review the optimization used in Algorithm~\ref{algo:MN} to obtain an insightful technique, which can be used to solve many other problems in wireless networks and mobile computing.



In Section~\ref{secTransforming}, Problem $\mathbb{P}_2$ is not convex optimization since the \mbox{non-convex} product \mbox{$\beta_{n}\cdot(p_n + p_n^{\textnormal{cir}})$} exists in~(\ref{constra:pysical_epi_data_rate}), as shown in the sentences following (\ref{defineFnpnBn}). The solving process of $\mathbb{P}_2$ is transformed into solving a series of parametric convex optimization $\mathbb{P}_3(\bm{\beta},\bm{\nu})$ where $[\bm{\beta},\bm{\nu}]$ is given so that there is no \mbox{non-convex} product term and we have convex optimization. The solving of each $\mathbb{P}_3$ is used to update $[\bm{\beta},\bm{\nu}]$ under which $\mathbb{P}_3$ is solved again with the new  
$[\bm{\beta},\bm{\nu}]$, where the update of $[\bm{\beta},\bm{\nu}]$ is based on the KKT conditions of $\mathbb{P}_2$. 

From the above discussion, we can identify the following:

\textbf{Our technique to handle functions of product or quotient terms in optimization:} \textit{With }``$\divideontimes$'' \textit{denoting multiplication or division, if there are terms $f_n(A_n(\boldsymbol{x}) \divideontimes y_n)|_{n\in\mathcal{N}}$ in an optimization problem $\mathbb{P}$, for functions $f_n,A_n|_{n\in\mathcal{N}}$ and variables $\boldsymbol{x}$ and $\boldsymbol{y}=[y_n|_{n\in\mathcal{N}}]$, we can convert $\mathbb{P}$ into a series of parametric convex optimization $\mathbb{Q}(\boldsymbol{y},\boldsymbol{z})$, where $\boldsymbol{z}$ comprises additional variables in the parameterization (e.g., $\bm{\nu}$ in our }``$\mathbb{P}_3(\bm{\beta},\bm{\nu})$''\textit{). In $\mathbb{Q}(\boldsymbol{y},\boldsymbol{z})$, given $[\boldsymbol{y},\boldsymbol{z}]$, variables in $A_n(\boldsymbol{x}) \divideontimes y_n$ just have $\boldsymbol{x}$, so that $\mathbb{Q}$ can be easier to solve than $\mathbb{P}$, or $\mathbb{Q}$ may even happen to be convex in $\boldsymbol{x}$. The solving of each $\mathbb{Q}$ will be used to update $[\boldsymbol{y},\boldsymbol{z}]$ under which $\mathbb{Q}$ is solved again with the new  
$[\boldsymbol{y},\boldsymbol{z}]$, where the update of $[\boldsymbol{y},\boldsymbol{z}]$ is based on the KKT conditions of~$\mathbb{P}$.} 

With the above technique, we can  address $f_n(A_n(\boldsymbol{x}) \divideontimes B_n(\boldsymbol{x}))|_{n\in\mathcal{N}}$ in optimization as well, for functions $f_n,A_n,B_n|_{n\in\mathcal{N}}$ and variables $\boldsymbol{x}$. We  replace $A_n(\boldsymbol{x}) \divideontimes B_n(\boldsymbol{x})$ by an auxiliary variable $z_n$ and enforce the constraint of $z_n$ being either no greater or no less than $A_n(\boldsymbol{x}) \divideontimes  B_n(\boldsymbol{x})$ (depending on the specific problem), where the constraint can be further converted into a relationship between $A_n(\boldsymbol{x})$ and $z_n \divideontimes B_n(\boldsymbol{x})$, like how we transform $\mathbb{P}_1$ of~(\ref{problem:1}) to $\mathbb{P}_2$ of~(\ref{problem:2}).

To summarize, our technique can be useful for various optimization problems involving product or quotient terms. In addition, the technique often obtains a global optimum, as in Theorem~\ref{thm:main}. The above finding goes beyond the sum-of-ratios optimization of~\cite{jong2012efficient}, although our original motivation comes from~\cite{jong2012efficient}. The following discussion shows that our above finding is very likely to be new.


 
 Two recent papers~\cite{shen2018fractional,shen2018fractional2} by Shen and Yu have been considered breakthroughs in fractional programming, as seen from their high citations (692 and 190, respectively, as of 10 March 2023 in Google Scholar). However, they find neither local nor global optimum. In contrast, our technique above will find a global optimum. Interested readers can refer to
Appendix~\ref{secfractionalprogramming}.





Our above technique can be applied to many optimization problems in wireless networks and mobile computing, as illustrated by two examples below. In interference-constrained wireless networks, globally solving  the weighted sum-rate maximization (WSRM) efficiently was an open problem for years before it was addressed by~\cite{qian2009mapel}, since a user's rate (per unit bandwidth) given by  $\log_2(1+\frac{\text{TransmitPower}}{\text{Interference}+\text{Noise}})$ involves a fraction inside a logarithm, which is difficult to deal with. Our technique above will find a global optimum for WSRM and other problems involving the above rate expression, while the polyblock-based approach of~\cite{qian2009mapel} relies on the structure of WSRM and may not be applicable to other problems. In mobile edge computing, with $\gamma$ denoting the offloading ratio of computation tasks,~\cite{zhao2021energy} minimizes the system cost, given by \mbox{$\gamma \hspace{-2pt}\cdot\hspace{-2pt} \text{EdgeComputingCost} \hspace{-2pt}+ \hspace{-2pt}(1\hspace{-2pt}-\hspace{-2pt}\gamma) \hspace{-2pt}\cdot\hspace{-2pt} \text{LocalComputingCost}$}. The multiplication above means no joint convexity in $\gamma $ and other variables. Then~\cite{zhao2021energy} uses alternating optimization which is neither locally nor globally optimal, while our technique will find a global optimum.

\section{Simulation}\label{sec:experiments}

The utility functions for simulations are presented in
Section~\ref{secTypicalUtility}  and validated by real data in
Section~\ref{realdata}.
Then 
we describe simulation settings in Section \ref{experiment:parameter}, before reporting results in other subsections. 

\subsection{Utility functions for simulation} \label{secTypicalUtility}



We provide three types of utility functions below since the Metaverse offers diverse applications. In Section~\ref{realdata},
 we  validate these functions using real data. 

%

\textbf{Type 1 utility function: } We have
\begin{align}
f_n(x)=\kappa_n \ln (b_n+ a_n x), \label{deftype1}
\end{align}
where $a_n, \kappa_n >0, b_n \geq 0$.   This type is used in \cite{yang2012crowdsourcing} for sensing tasks. In simulations starting from Section~\ref{experiment:parameter}, we let $b_n=1$.



\textbf{Type 2 utility function: } We have
\begin{align}
    f_n(x)= \kappa_n \cdot (1- e^{-a_n x+c_n}), \label{deftype2}
\end{align}
where $a_n, \kappa_n >0$, and $e$ denotes Euler's number. This type is motivated by \cite{liu2018edge} on augmented reality. We let $c_n=0$ in simulations.


\textbf{Type 3 utility function: } We have
\begin{align}
    f_n(x)=\kappa_n ({x}+d_n)^{a_n}, \label{deftype3}
\end{align}
where $\kappa_n >0$, $d_n \geq 0$, and $0<a_n<1$. This function form has been used in prior work on congestion control~\cite{mo2000fair} and mobile data subsidization~\cite{xiong2020reward}.  We let $d_n=0$ in simulations. In the terminologies of economics, $\kappa_n {r_{n,\textnormal{s}}}^{a_n}$ can be viewed as a Cobb--Douglas utility with respect to $r_{n,\textnormal{s}}$, while $\kappa_n \cdot {(r_n - r_{n,\textnormal{e}})}^{a_n}$ can be regarded as a Stone--Geary utility with respect to $r_n$; see Page 7 of~\cite{da2021welfare}.



It is straightforward to show that the above three types  for the utility function all satisfy Condition~\ref{fconcave} of Section~\ref{secfConditions}.  
These three types are what we will use in simulations from Section~\ref{experiment:parameter}. 
We emphasize that our theoretical results (e.g., Algorithm~\ref{algo:MN} as well as Theorems~\ref{thm:main} and~\ref{thmsolveP3} in Section~\ref{secAlgorithm}) of this paper apply to \textbf{any} utility function satisfying Condition~\ref{fconcave} of Section~\ref{secfConditions}.



\subsection{\mbox{Real data validating utility functions above}} \label{realdata}

We now validate Section~\ref{secTypicalUtility}'s utility functions with the SSV360~\cite{ssv360}  and Netflix datasets~\cite{Netflix} from real-world experiments.

\textbf{SSV360 dataset.} This dataset of~\cite{ssv360} captures users' assessment of 360\degree~videos when wearing HTC Vive Pro virtual reality headsets. Each data point represents a user's subjective quality assessment of a 360\degree~scene, under  standing or seated viewing (SSV). 
In the dataset, having data points under different video bitrates yet the same resolution is due to different quantization parameters used in video compression. The wireless data rate should be large enough to ensure a smooth watching experience at the given video bitrate~\cite{liu2018edge}. We  let  the  bitrate be a constant fraction (say $\theta$) of the wireless rate. Since changing the bitrate $r_{\text{bitrate}}$ to the wireless rate $r_{\text{wireless}}$ just involves replacing $r_{\text{bitrate}}$ with $r_{\text{wireless}}/\theta$, we perform curve-fitting with the bitrate to validate the utility functions.
The curves in Fig.~\ref{fig:real_simul}(a) are for the scenarios of ``user 1 seated'', ``user 2 seated'', and ``user 1 standing'' respectively, to watch the same 360\degree~scene ``FormationPace''~\cite{ssv360} with 2K resolution (i.e., $2048 \times 1080$ pixels).

In the SSV360 dataset, the score follows the widely used Absolute Category Rating (ACR)~\cite{gutierrez2022vqeg} and is an integer from $1$ to $5$. To obtain better curve-fitting results, we further use the Netflix dataset, where the score (i.e., the $y$-axis) ranges from $0$ to $100$.



\textbf{Netflix dataset.} In this dataset~\cite{Netflix}, which is a part of Netflix's Emmy Award-winning Video Multimethod Assessment Fusion (VMAF) project, each data point exhibits users' mean opinion score in $[0,100]$ for a video at a given resolution and a given bitrate. Because there are not enough data points that have  different bitrates yet the
same resolution, we treat both  resolution and bitrate as variables for curve-fitting. The results are shown in Fig. \ref{fig:real_simul}(b).




The expressions for the curves in Fig.~\ref{fig:real_simul} are in the table below.\vspace{4pt}



%

\setlength{\tabcolsep}{1pt}

\noindent\begin{tabular}{l|l|l}
\hline
        Dataset         & \begin{tabular}[c]{@{}l@{}}Scenario in~\cite{ssv360} \\ or video in~\cite{Netflix} \end{tabular} &  \begin{tabular}[c]{@{}l@{}} \small Utility function from curve-fitting for \\[-3pt] \small  normalized  bitrate $x$ and normalized\\[-3pt] \small  resolution $y$ explained in the caption of Fig.~\ref{fig:real_simul} \end{tabular} \\ \hline
\multirow{3}{*}{\begin{tabular}[c]{@{}l@{}}SSV360\\in~\cite{ssv360}\end{tabular}} & user 1 seated  & Type 1: $ 0.5424 \ln (1+ 37.2965 x)$ \\ \cline{2-3} 
                 & user 2 seated &  \begin{tabular}[c]{@{}l@{}} ~\\[-10pt] Type 2: $2.9351 (1- e^{-2.1224 x})$   \end{tabular} \\ \cline{2-3} 
                  &  user 1 standing & \begin{tabular}[c]{@{}l@{}}~\\[-10pt] Type 3: $3.2956 {(x/15.94)}^{0.2733}$\end{tabular}  \\ \hline
\multirow{3}{*}{\begin{tabular}[c]{@{}l@{}}Netflix\\in~\cite{Netflix}\end{tabular}}
                  &ElFuente1  & Type 1: $33.4215\ln(1+0.784x+10.0826y)$ \\ \cline{2-3} 
                  &BigBuckBunny  & \begin{tabular}[c]{@{}l@{}}~\\[-10pt] Type 2: $103.3464(1-e^{-0.23166x-2.9792y})$\end{tabular} \\ \cline{2-3}  &BirdsInCage  & \begin{tabular}[c]{@{}l@{}}~\\[-10pt] Type 3: $61.8622(x/15+y/1.1664)^{0.5301}$\end{tabular} \\ \hline
\end{tabular}

~\vspace{0pt}

The existence of $y$ in some expressions 
above can be understood that the coefficients in (\ref{deftype1}) (\ref{deftype2}) (\ref{deftype3}) depend on $y$. In simulations below, we fix $y$ so that the utility function depends on only the rate.\vspace{-2pt}


\subsection{Parameter setting}\label{experiment:parameter}

We first state settings that apply to all simulations. Based on~\cite{zhou2022resource}, we model the path loss between each legitimate user and the Metaverse server as $128.1+37.6\log(distance)$ along with 8 decibels (dB) for the standard deviation of shadow fading, and the unit of $distance$ is kilometer. The power spectral density of Gaussian noise ${\sigma_n}^2$ is $-174$\,dBm/Hz (i.e., 4\,zeptowatts/Hz, the value for thermal noise at 20\,\degree{C} room temperature~\cite{huang2013noise}).

In addition, some default settings are as follows, unless otherwise specified. $N$ denoting the number of legitimate users is 30. The weight parameter $c_n$ is set to $1$ for all users (unless configured otherwise), which means the weighted sum-UEE just becomes sum-UEE by default. 
The default total bandwidth $B_{\textnormal{total}}$ is 20\,MHz. The circuit power $p_n^{cir}$ is 2\,dBm (i.e., 1.6\,milliwatts) for each $n$. Both the eavesdropping rate $r_{n,e}$ and the minimum transmission rate $r_n^{\min}$ are $20$\,kilobits per second (Kbps) by default. For the utility functions, we set $\kappa_n=1$, $a_n=0.5$, $b_n=1$, $c_n=0$, and $d_n=0$ by default. In all simulations, we  stop the algorithm after obtaining a $0.01$-global optimum, whose meaning is discussed at the end of Section~\ref{secproveP1v2}.


\subsection{Comparison of different algorithms} \label{secComparison}

We compare our Algorithm~\ref{algo:MN} with the following baselines: 
\begin{itemize}
\item[(i)] \textbf{Optimize $\bm{B}$ only}: Here we let $p_n$ for each $n$ be $1$~milliwatt (i.e., $10^{-3}$ W), which will be substituted into Problem $\mathbb{P}_{1}$. Then ``optimizing $\bm{B}$ only'' becomes convex optimization, for which the KKT conditions are analyzed to obtain the solution. 
\item[(ii)] \textbf{Optimize $\bm{p}$ only}: In this case, we let $B_n$ for each $n$ be $B_{\text{total}}/N$, which will be substituted into Problem $\mathbb{P}_{1}$. Then ``optimizing $\bm{p}$ only'' belongs to convex optimization, for which the KKT conditions are inspected to acquire the solution.
\item[(iii)]  \textbf{Alternating optimization}: Starting with a feasible initialization, we perform ``(i)'' and ``(ii)'' above in an alternating manner, until convergence (when the relative improvement between two consecutive iterations is negligible). 
\end{itemize} 

For the detailed analyses of the baseline algorithms, interested readers can refer to
Appendix~\ref{sec:baseline model}.



\begin{figure}[!t]
    \centering
    \includegraphics[width=0.47\textwidth]{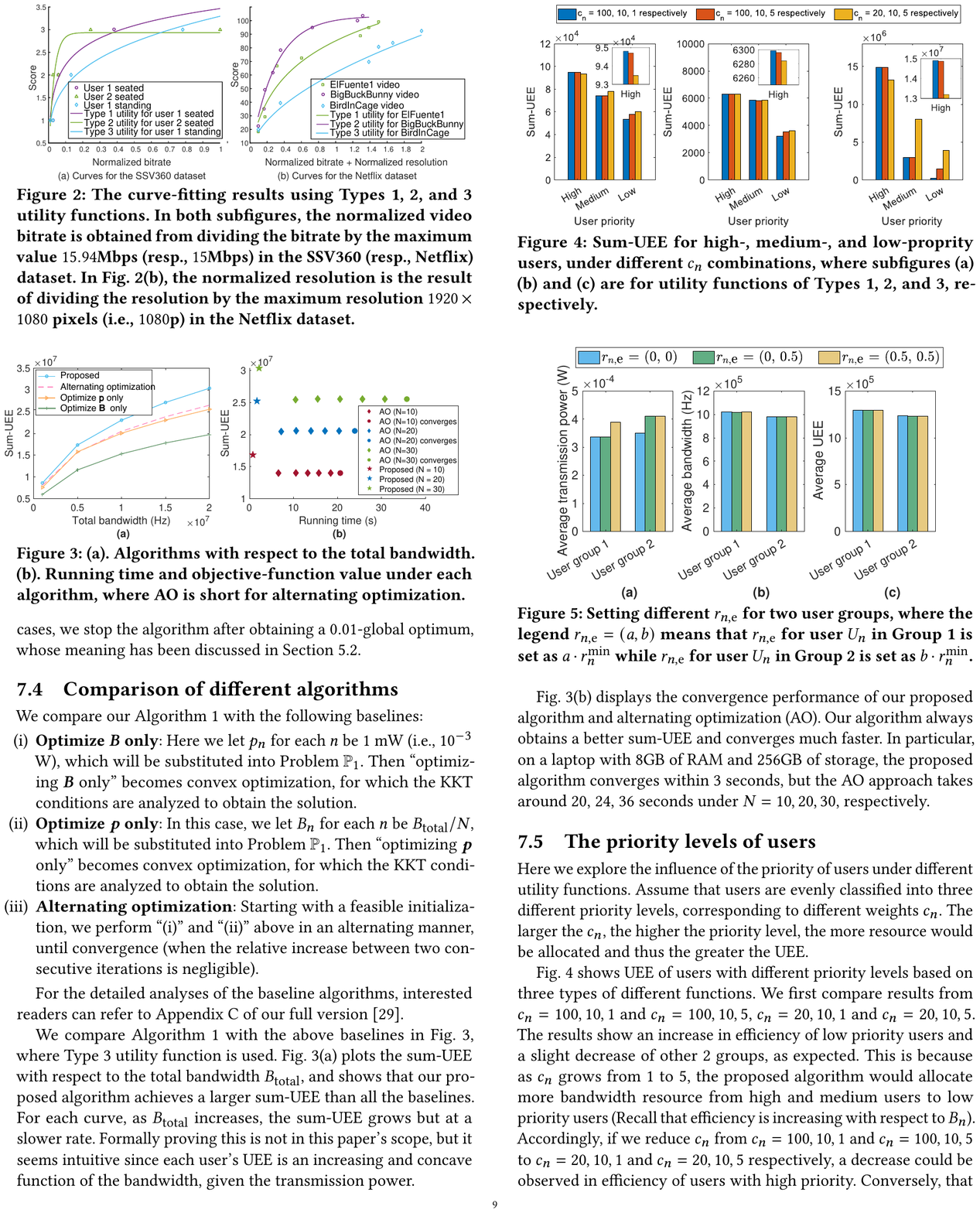}
\vspace{-12pt}    \caption{The curve-fitting results using Types 1, 2, and 3 utility functions. In both subfigures, the normalized video bitrate is obtained from dividing the bitrate by the maximum value of $15.94$\,Mbps (resp., $15$\,Mbps) in the SSV360 (resp., Netflix) dataset. In Fig.~\ref{fig:real_simul}(b), the normalized resolution is the result of dividing the resolution by the maximum resolution of $1920 \times 1080$ pixels (i.e.,  $1080$p) in the Netflix dataset.
  \vspace{-5pt} }
    \label{fig:real_simul}
\end{figure}

\begin{figure}[!t]
 \hspace{-12pt}\includegraphics[width=.47\textwidth]{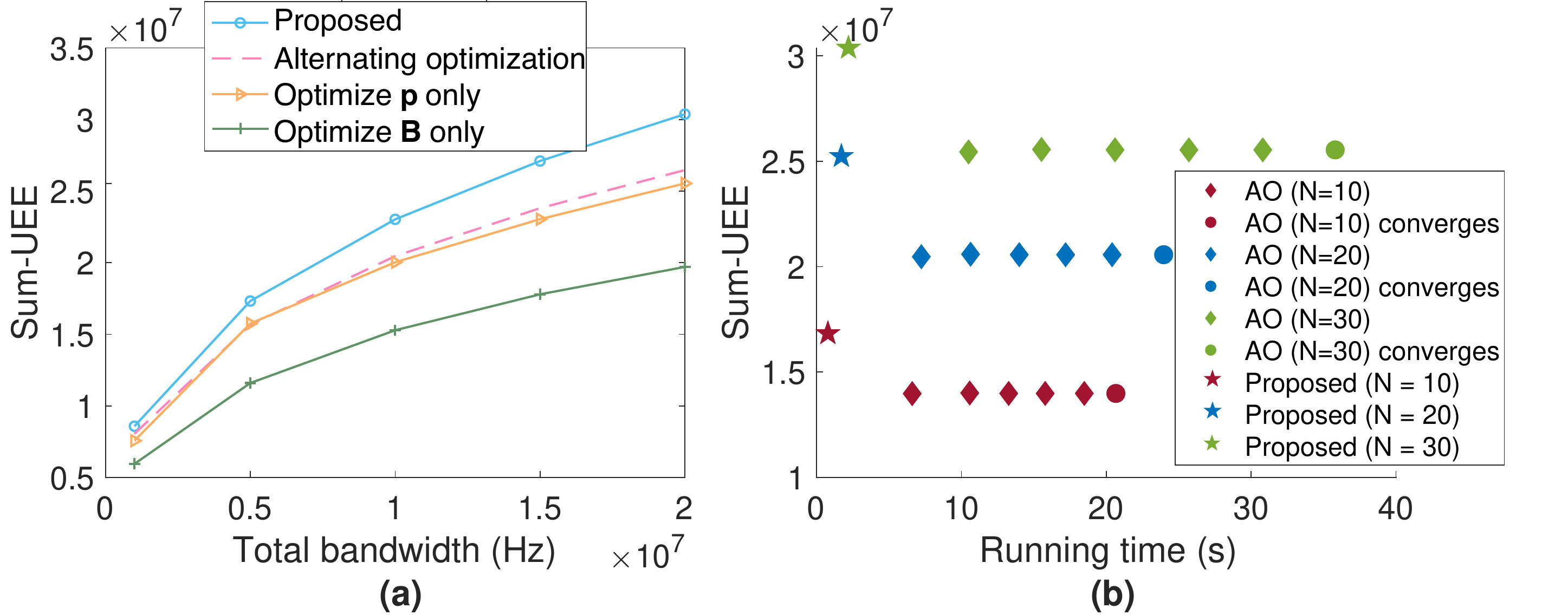}
  \vspace{-12pt}    \caption{(a). Algorithms with respect to the total bandwidth.
    (b). Running time and objective-function value under each algorithm, where AO is short for alternating optimization.\vspace{-10pt}  \vspace{-5pt} }
    \label{fig:Comparision with baseline}
\end{figure}

\begin{figure}[!t]
    \centering
        \includegraphics[width=.46\textwidth]{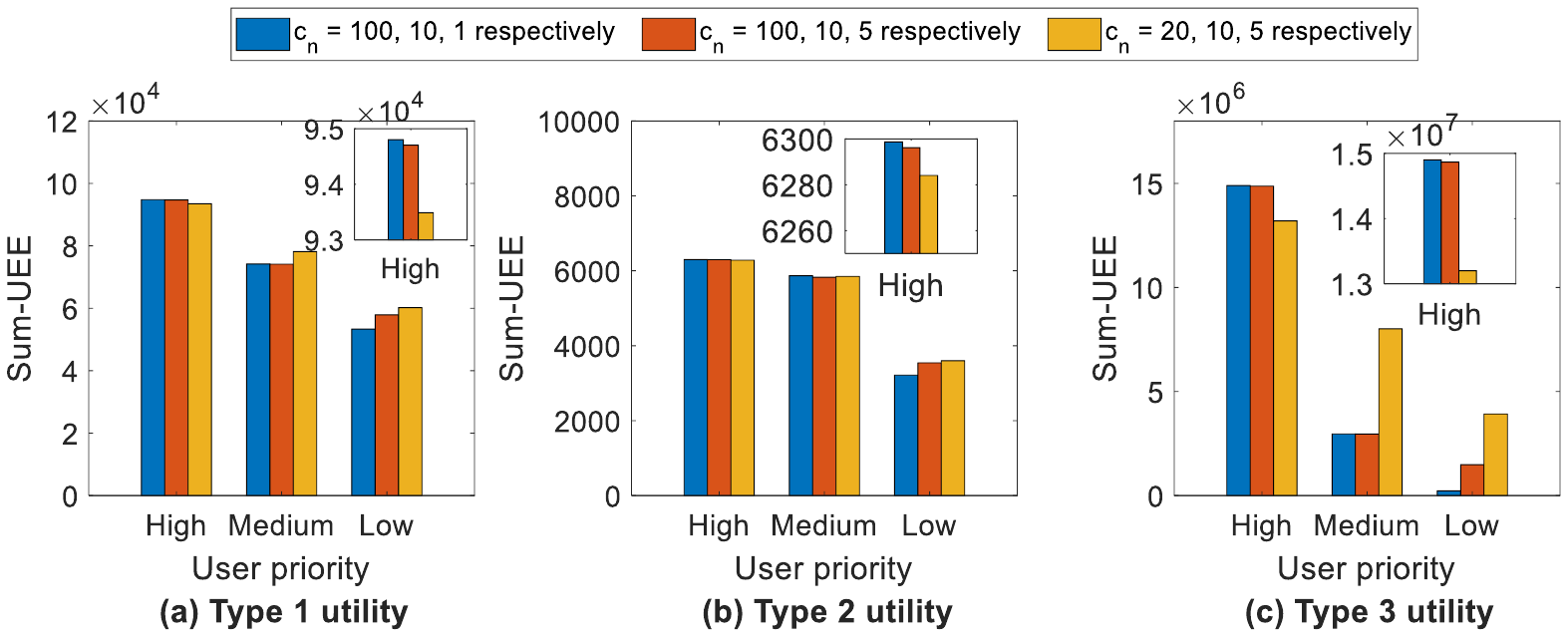}\vspace{-12pt}
        \caption{Sum-UEE for high-, medium-, and low-priority users, under different $c_n$ values, for Type 1, 2, or 3 utility. \vspace{-8pt} 
        }
        \label{fig:c_n}
\end{figure}

\begin{figure}
    \centering
    \includegraphics[width=.37\textwidth]{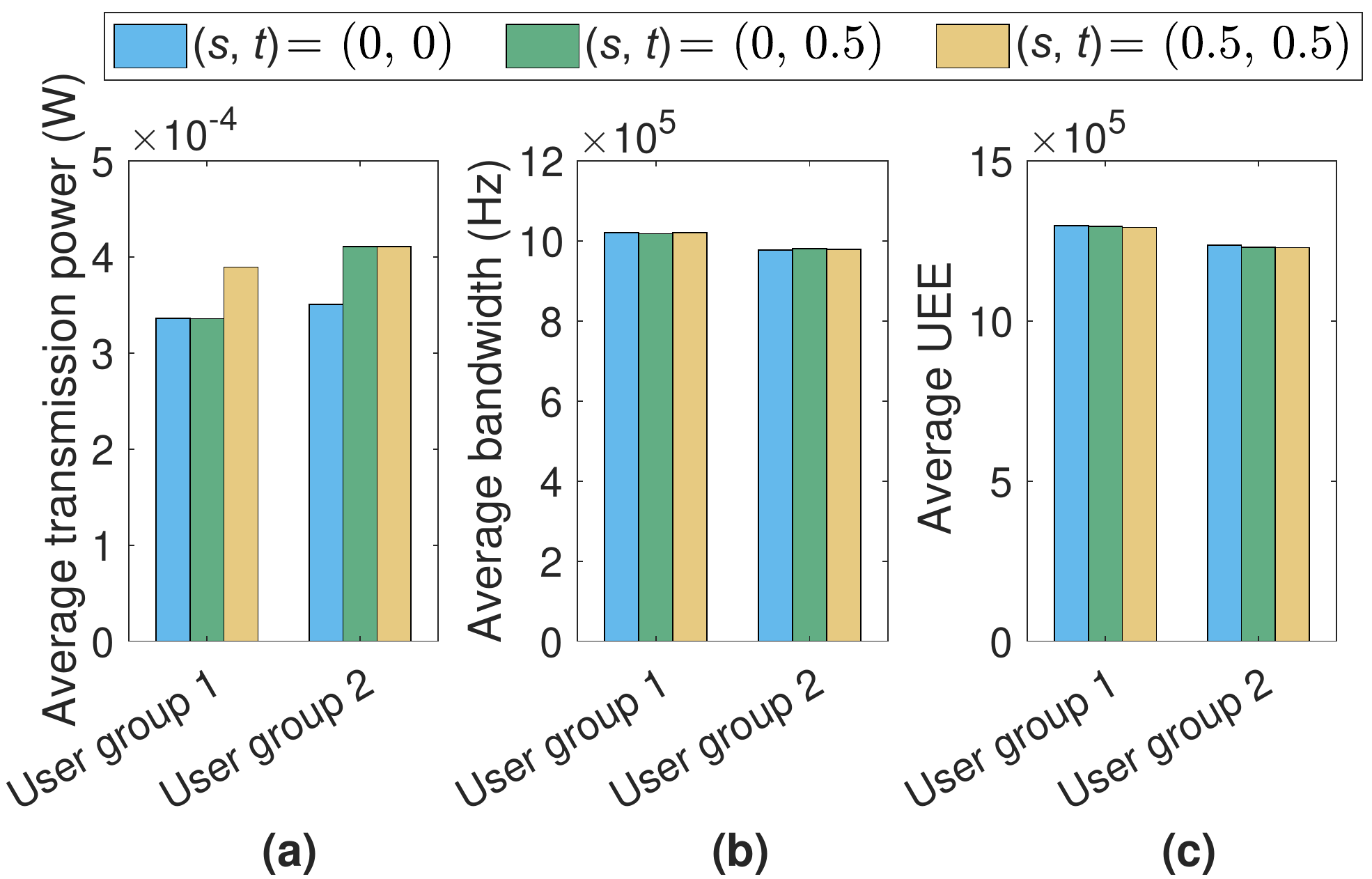}\vspace{-12pt}
    \caption{Setting different $r_{n,\textnormal{e}}$ for two user groups, where the legend ``$(s,t)$'' means that $r_{n,\textnormal{e}}$ for user $U_n$ in Group 1 is set as $s \cdot r_n^{\min}$ while $r_{n,\textnormal{e}}$ for user $U_n$ in Group 2 equals $t \cdot r_n^{\min}$.\vspace{-10pt}  } 
    \label{fig:diff_re_rmin}
\end{figure}

We compare Algorithm~\ref{algo:MN} with the above baselines in 
Fig.~\ref{fig:Comparision with baseline}, where Type 3 utility function is used.
Fig. \ref{fig:Comparision with baseline}(a) plots the sum-UEE with respect to the total bandwidth $B_{\textnormal{total}}$, and shows that our proposed algorithm achieves a larger sum-UEE than all the baselines. For each curve, as $B_{\textnormal{total}}$ increases, the sum-UEE grows but at a  slower rate. Formally proving this is not in this paper's scope, but it seems intuitive since each user's UEE is an increasing and concave function of the bandwidth, given the transmission power.

Fig. \ref{fig:Comparision with baseline}(b) displays the convergence performance of our proposed algorithm and alternating optimization (AO).
Our  algorithm always obtains a better sum-UEE and converges much faster. In particular, on a laptop with 8GB of RAM and 256GB of storage, the proposed algorithm converges within 3 seconds, but the AO approach takes around $20$, $24$, $36$ seconds under $N = 10, 20, 30$, respectively.




\subsection{The priority levels of users}

Here we explore the influence of the priority of users under different utility functions. 
We consider that $30$ users are evenly classified into three priority levels, corresponding to different weights $c_n$. Larger $c_n$ means more weight in our studied optimization.  
For example, the legend ``$c_n=100,10,1$'' in Fig.~\ref{fig:c_n} means that $10$ users with $c_n$ being $100$ (resp., $10$ and $1$) have high (resp., medium and low) priority. 

Fig.~\ref{fig:c_n}(a) (b) and (c) use utility functions of Types 1, 2, and 3, respectively. The sum-UEE of users in each priority group is plotted. In each subfigure, the bar charts show that the sum-UEE of the high-priority group is the largest, while that of the low-priority group is the lowest, matching the intuition, since higher priority means larger $c_n$ and ``more say'' in the weighted sum-UEE optimization. In addition, the  numbers in Fig.~\ref{fig:c_n}(c) for Type 3 utility $u_3:=r_{n,\textnormal{s}}^{0.5}$ are greater than the corresponding ones in Fig.~\ref{fig:c_n}(a) for Type 1 utility $u_1:=\ln(1+0.5r_{n,\textnormal{s}})$, which are further larger than those in Fig.~\ref{fig:c_n}(b) for Type 2 utility $u_2:=1-e^{-0.5r_{n,\textnormal{s}}}$. The above is consistent with  $u_3>u_1>u_2$ for large $r_{n,\textnormal{s}}$ (in the unit of bps). 

In Fig.~\ref{fig:c_n}'s subfigures, from Case~1 of ``$c_n=100, 10, 1$'' to  Case~2 of ``$c_n=100, 10, 5$'', and further to Case 3 of ``$c_n=20,10,5$'', the relative dominance of high-priority group decreases while  the relative weight of low-priority group increases, which accords with  declining (resp., rising) sum-UEE of high-priority (resp., low-priority) group from Case 1 to Case 2, and further to Case 3. For the medium-priority group, as expected, the sum-UEE decreases from Case 1 to Case 2 (though not clear in the plots without zooming in), and increases from Case 2 to Case 3. The above demonstrates the impact of the weight $c_n$ as the priority level.


\subsection{Impact of individual rate constraints}

Now we report the effect of varying $r_{n,\textnormal{e}}$. The number of users $N$ is set as 20, and we divide them equally into two groups with $r_{n,\textnormal{e}}$ as $s \cdot r_n^{\min}$ and $t \cdot r_n^{\min}$, respectively, where $r_n^{\min}$ is the default $20$\,Kbps. Each group's average transmission power, average allocated bandwidth, and average UEE are plotted in Fig. \ref{fig:diff_re_rmin}(a), (b), and (c), respectively. Also, Fig. \ref{fig:diff_re_rmin} uses Type 3 utility function and evaluates $(s,t)$ as $(0,0)$, $(0,0.5)$, and $(0.5,0.5)$, respectively. From  $(0,0)$ to $(0,0.5)$ (resp., $(0,0.5)$ to $(0.5,0.5)$), the second (resp., first) group's average transmission power increases. This is intuitive since raising a group's $r_{n,\textnormal{e}}$ with everything else unchanged requires the group to enlarge the transmission power and hence the data rate. The bandwidth allocation does not vary much under the cases of $(0,0)$, $(0,0.5)$, and $(0.5,0.5)$. For each group, the average UEE slightly drops as $r_{n,\textnormal{e}}$ grows, which seems intuitive since each user's utility $f_n(r_n(p_n,B_n)-r_{n,\textnormal{e}})$ is negatively correlated with $r_{n,\textnormal{e}}$.

For more simulation results (e.g., how our proposed algorithm performs when the number of users changes or when there are heterogeneous types of utility functions among the users), interested readers can refer to
Appendices~\ref{appImpactnumberusers} and~\ref{secHeterogeneous}.\vspace{-1pt}

\section{Conclusion\vspace{-1pt}}\label{sec:conclusion}
In this paper, in a wireless network for the Metaverse, we have studied the weighted optimization of all users'  utility-energy efficiency (UEE) under physical-layer security. The formulated problem belongs to \mbox{non-convex} optimization, and we solve it via a transform to parametric convex optimization. The resulting algorithm is optimal in terms of both the solution quality and the order of time complexity. Simulation results are provided with utility functions validated by real data. We envision more research to adopt our transform technique due to its broad applicability to other  problems in wireless networks and mobile computing.\vspace{-1pt}

\newpage

\renewcommand{\theequation}{A.\arabic{equation}}
  \setcounter{equation}{0}  
\begin{appendix}

\newpage

\noindent\textbf{\huge	 Appendices:}

~

We introduce some notation to be used in the appendices.
For a scalar function $f(x_1,x_2,\ldots,x_M)$ of $M$ variables $x_1,x_2,\ldots,x_M$, we use $\nabla_{x_m}f(x_1,x_2,\ldots,x_M)$ where $m\in\{1,2,\ldots,M\}$ to denote the partial derivative of $f(x_1,x_2,\ldots,x_M)$ with respect to $x_m$, and use $\nabla_{x_m}f(x_1,x_2,\ldots,x_M)|_{x_m = x_m^*}$ to denote the corresponding result when $x_m$ equals a given value $x_m^*$. For a $K$-element set $\{x_{i_1}, x_{i_2},\ldots,x_{i_K}\}$ of variables, which is a subset of $\{x_1,x_2,\ldots,x_M\}$, we define \\ $\nabla_{x_{i_1}, x_{i_2},\ldots,x_{i_K}}f(x_1,x_2,\ldots,x_M)$
as the vector \\
$[\nabla_{x_{i_k}}f(x_1,x_2,\ldots,x_M) |_{k=1,2,\ldots,K}]$.

\section{Explaining~(\ref{bothequal})} \label{appbothequal}

To establish~(\ref{bothequal}), we will prove
\begin{talign}
&\text{$ (\partial \phi_{1,n}(\bm{\beta},\bm{\nu}))/(\partial \beta_n)$ and $(\partial \phi_{2,n}(\bm{\beta},\bm{\nu}))/(\partial \nu_n)$ equal }
p_n^{\#}(\bm{\beta},\bm{\nu})+p_n^{\textnormal{cir}}. \label{bothequalv2}
\end{talign}
The definitions of $\phi_{1,n}(\bm{\beta},\bm{\nu})$ and $\phi_{2,n}(\bm{\beta},\bm{\nu})$ in~(\ref{eqn:phi_1v2}) and~(\ref{eqn:phi_2v2}) use $[\bm{p}^{\#}(\bm{\beta},\bm{\nu}), \bm{B}^{\#}(\bm{\beta},\bm{\nu})]$, which denotes
a globally optimal solution  to  $\mathbb{P}_3(\bm{\beta},\bm{\nu})$. Hence, below we analyze $\mathbb{P}_3(\bm{\beta},\bm{\nu})$.




Problem $\mathbb{P}_3(\bm{\beta},\bm{\nu})$ belongs to convex optimization and Slater's condition holds, as shown in the proof of Lemma~\ref{lemma:sum-of-ratios_lemma}. Then the Karush--Kuhn--Tucker (KKT) conditions are necessary and sufficient to obtain the globally optimal solution, as stated in Lemma~\ref{lemmacvxuseKKT}. To this end, we define the Lagrange function:
 \begin{align}
   & \textstyle{  L_{\mathbb{P}_3}(\boldsymbol{p}, \boldsymbol{B}, \boldsymbol{\tau}, \lambda \mid \boldsymbol{\beta},\boldsymbol{\nu})  
     = -\sum_{n\in \mathcal{N}} \mathcal{F}_n(p_n, B_n|\beta_n, \nu_n)} \nonumber
    \\
     & \textstyle{\quad +  \sum_{n\in \mathcal{N}} \tau_n\cdot(r_n^{\min} -r_n ) + \lambda\cdot(\sum_{n\in \mathcal{N}}B_n- B_{\text{total}})}, \label{eqLP3}
 \end{align}
 where $\bm{\tau}$ and $\lambda$ are called Lagrange multipliers.

The KKT conditions of Problem $\mathbb{P}_3(\bm{\beta},\bm{\nu})$ are as follows, with $L_{\mathbb{P}_3}$ short for $L_{\mathbb{P}_3}(\boldsymbol{p}, \boldsymbol{B}, \boldsymbol{\tau}, \lambda \mid \boldsymbol{\beta},\boldsymbol{\nu})  $:
\begin{subequations} \label{P3KKTeq}
\begin{talign} 
&\textnormal{\textbf{Stationarity:}} \nonumber
\\  &\frac{\partial L_{\mathbb{P}_3}}{\partial p_n}     = 0, \text{ for all } n \in \mathcal{N}, \label{P3KKTeqStationaritypn} \\ &\frac{\partial L_{\mathbb{P}_3}}{\partial B_n}     = 0, \text{ for all } n \in \mathcal{N}, \label{P3KKTeqStationarityBn}   \\ &\textnormal{\textbf{Complementary slackness:}} \nonumber  \\ & \tau_n\cdot(r_n^{\min} -r_n(p_n,B_n)) = 0, \text{ for all } n \in \mathcal{N}, \label{P3KKTeqComplementarytaun}  \\ &  \lambda\cdot(\sum_{n\in \mathcal{N}}B_n- B_{\text{total}}) =0; \label{P3KKTeqComplementarylambda}  \\ &\textnormal{\textbf{Primal feasibility:}} \nonumber
\\  &  r_n(p_n,B_n) \geq r_n^{\min}, \text{for all } n \in \mathcal{N}, \label{P3KKTeqPrimalfeasibility2}\\  & \sum_{n\in \mathcal{N}}B_n \leq B_{\text{total}}; \label{P3KKTeqPrimalfeasibility3}\\ &\textnormal{\textbf{Dual feasibility:}} \nonumber \\ &  \tau_n  \geq 0,  \text{for all } n \in \mathcal{N}. \label{P3KKTeqDualfeasibilitytau}\\ &  \lambda  \geq 0. \label{P3KKTeqDualfeasibility}  
\end{talign}
\end{subequations}

Recall that we use $[\bm{p}^{\#}(\bm{\beta},\bm{\nu}), \bm{B}^{\#}(\bm{\beta},\bm{\nu})]$ to denote
a globally optimal solution  to  $\mathbb{P}_3(\bm{\beta},\bm{\nu})$, where $\bm{p}^{\#}(\bm{\beta},\bm{\nu})=[p_n^{\#}(\bm{\beta},\bm{\nu})|_{n \in \mathcal{N}}]$ and $\bm{B}^{\#}(\bm{\beta},\bm{\nu})=[B_n^{\#}(\bm{\beta},\bm{\nu})|_{n \in \mathcal{N}}]$. Hence, $[\bm{p}^{\#}(\bm{\beta},\bm{\nu}), \bm{B}^{\#}(\bm{\beta},\bm{\nu})]$ satisfies the KKT conditions above. Using~(\ref{eqLP3})~(\ref{P3KKTeqComplementarytaun}) and~(\ref{P3KKTeqComplementarylambda}), we know
 \begin{talign}
&\textstyle{  L_{\mathbb{P}_3}(\bm{p}^{\#}(\bm{\beta},\bm{\nu}), \bm{B}^{\#}(\bm{\beta},\bm{\nu}), \boldsymbol{\tau}, \lambda \mid \boldsymbol{\beta},\boldsymbol{\nu})}  
   \nonumber
    \\
     &   = -\sum_{n\in \mathcal{N}} \mathcal{F}_n(p_n^{\#}(\bm{\beta},\bm{\nu}), B_n^{\#}(\bm{\beta},\bm{\nu})|\beta_n, \nu_n) \nonumber
    \\
     & \textstyle{\quad +  \sum_{n\in \mathcal{N}} \tau_n\cdot(r_n^{\min} -r_n(p_n^{\#}(\bm{\beta},\bm{\nu}), B_n^{\#}(\bm{\beta},\bm{\nu})) )}\nonumber
    \\
     & \quad + \lambda\cdot(\sum_{n\in \mathcal{N}}B_n^{\#}(\bm{\beta},\bm{\nu})- B_{\text{total}})\nonumber
    \\
     & =-\sum_{n\in \mathcal{N}} \mathcal{F}_n(p_n^{\#}(\bm{\beta},\bm{\nu}), B_n^{\#}(\bm{\beta},\bm{\nu})|\beta_n, \nu_n). \label{eqlp3f}
 \end{talign}

For notation simplicity, we group $\bm{\beta} $ and $\bm{\nu}$ together to define $\bm{\alpha} $; i.e., $\bm{\alpha}:=[\bm{\beta},\bm{\nu}]$. Note that~(\ref{eqlp3f}) above holds for any $\bm{\alpha} $. Hence, it holds that
\begin{talign}
&\nabla_{\bm{\alpha}} \left(\sum_{n\in \mathcal{N}} \mathcal{F}_n(p_n^{\#}(\bm{\beta},\bm{\nu}), B_n^{\#}(\bm{\beta},\bm{\nu})|\beta_n, \nu_n)\right)  \nonumber
    \\
     &  = -\nabla_{\bm{\alpha}} \left(L_{\mathbb{P}_3}(\bm{p}^{\#}(\bm{\beta},\bm{\nu}), \bm{B}^{\#}(\bm{\beta},\bm{\nu}), \boldsymbol{\tau}, \lambda \mid \boldsymbol{\beta},\boldsymbol{\nu})\right).\label{eqlp3fderi}
\end{talign}
Based on (\ref{eqlp3fderi}), to get $\nabla_{\bm{\alpha}} \left(\sum_{n\in \mathcal{N}} \mathcal{F}_n(p_n^{\#}(\bm{\beta},\bm{\nu}), B_n^{\#}(\bm{\beta},\bm{\nu})|\beta_n, \nu_n)\right)$, we  compute $\nabla_{\bm{\alpha}} \left(L_{\mathbb{P}_3}(\bm{p}^{\#}(\bm{\beta},\bm{\nu}), \bm{B}^{\#}(\bm{\beta},\bm{\nu}), \boldsymbol{\tau}, \lambda \mid \boldsymbol{\beta},\boldsymbol{\nu})\right)$ and take its negative. We have 
\begin{talign}
& \nabla_{\bm{\alpha}} \left(L_{\mathbb{P}_3}(\bm{p}^{\#}(\bm{\beta},\bm{\nu}), \bm{B}^{\#}(\bm{\beta},\bm{\nu}), \boldsymbol{\tau}, \lambda \mid \boldsymbol{\beta},\boldsymbol{\nu})\right)\nonumber \\ & =\sum_{n\in \mathcal{N}} \left\{ \left[(\nabla_{p_n} L_{\mathbb{P}_3}(\boldsymbol{p}, \boldsymbol{B}, \boldsymbol{\tau}, \lambda \mid \boldsymbol{\beta},\boldsymbol{\nu}))|_{\begin{subarray}{l}\bm{p}=\bm{p}^{\#}(\bm{\beta},\bm{\nu}), \\ \bm{b}=\bm{b}^{\#}(\bm{\beta},\bm{\nu})\end{subarray}}\right] \cdot \nabla_{\bm{\alpha}} p_n^{\#}(\bm{\beta},\bm{\nu})\right\}\nonumber \\ & \quad +\sum_{n\in \mathcal{N}} \left\{ \left[(\nabla_{b_n} L_{\mathbb{P}_3}(\boldsymbol{p}, \boldsymbol{B}, \boldsymbol{\tau}, \lambda \mid \boldsymbol{\beta},\boldsymbol{\nu}))|_{\begin{subarray}{l}\bm{p}=\bm{p}^{\#}(\bm{\beta},\bm{\nu}), \\ \bm{b}=\bm{b}^{\#}(\bm{\beta},\bm{\nu})\end{subarray}}\right] \cdot \nabla_{\bm{\alpha}} b_n^{\#}(\bm{\beta},\bm{\nu})\right\}  \nonumber \\ & \quad + \left[(\nabla_{\bm{\alpha}} L_{\mathbb{P}_3}(\boldsymbol{p}, \boldsymbol{B}, \boldsymbol{\tau}, \lambda \mid \boldsymbol{\beta},\boldsymbol{\nu}))|_{\begin{subarray}{l}\bm{p}=\bm{p}^{\#}(\bm{\beta},\bm{\nu}), \\ \bm{b}=\bm{b}^{\#}(\bm{\beta},\bm{\nu})\end{subarray}}\right].\label{eqlp3fderiv2}
\end{talign}
From (\ref{P3KKTeqStationaritypn}) and~(\ref{P3KKTeqStationarityBn}), we have
\begin{talign}
(\nabla_{p_n} L_{\mathbb{P}_3}(\boldsymbol{p}, \boldsymbol{B}, \boldsymbol{\tau}, \lambda \mid \boldsymbol{\beta},\boldsymbol{\nu}))|_{\begin{subarray}{l}\bm{p}=\bm{p}^{\#}(\bm{\beta},\bm{\nu}), \\ \bm{b}=\bm{b}^{\#}(\bm{\beta},\bm{\nu})\end{subarray}} & = 0, \text{ and}  \nonumber \\ \nabla_{b_n} L_{\mathbb{P}_3}(\boldsymbol{p}, \boldsymbol{B}, \boldsymbol{\tau}, \lambda \mid \boldsymbol{\beta},\boldsymbol{\nu}))|_{\begin{subarray}{l}\bm{p}=\bm{p}^{\#}(\bm{\beta},\bm{\nu}), \\ \bm{b}=\bm{b}^{\#}(\bm{\beta},\bm{\nu})\end{subarray}} & =0  ,\nonumber
\end{talign}
which are used in (\ref{eqlp3fderiv2}) to obtain
\begin{talign}
& \nabla_{\bm{\alpha}} \left(L_{\mathbb{P}_3}(\bm{p}^{\#}(\bm{\beta},\bm{\nu}), \bm{B}^{\#}(\bm{\beta},\bm{\nu}), \boldsymbol{\tau}, \lambda \mid \boldsymbol{\beta},\boldsymbol{\nu})\right)\nonumber \\ & =   \left[(\nabla_{\bm{\alpha}} L_{\mathbb{P}_3}(\boldsymbol{p}, \boldsymbol{B}, \boldsymbol{\tau}, \lambda \mid \boldsymbol{\beta},\boldsymbol{\nu}))|_{\begin{subarray}{l}\bm{p}=\bm{p}^{\#}(\bm{\beta},\bm{\nu}), \\ \bm{b}=\bm{b}^{\#}(\bm{\beta},\bm{\nu})\end{subarray}}\right].\label{eqlp3fderiv3}
\end{talign}
From (\ref{eqLP3}) (\ref{eqlp3fderi}) and (\ref{eqlp3fderiv3}), it holds that
\begin{talign}
& \nabla_{\bm{\alpha}} \left(\sum_{n\in \mathcal{N}} \mathcal{F}_n(p_n^{\#}(\bm{\beta},\bm{\nu}), B_n^{\#}(\bm{\beta},\bm{\nu})|\beta_n, \nu_n)\right)\nonumber \\ & =   -\left[(\nabla_{\bm{\alpha}} L_{\mathbb{P}_3}(\boldsymbol{p}, \boldsymbol{B}, \boldsymbol{\tau}, \lambda \mid \boldsymbol{\beta},\boldsymbol{\nu}))|_{\begin{subarray}{l}\bm{p}=\bm{p}^{\#}(\bm{\beta},\bm{\nu}), \\ \bm{b}=\bm{b}^{\#}(\bm{\beta},\bm{\nu})\end{subarray}}\right]\nonumber \\ & =  \sum_{n\in \mathcal{N}}\left[\left( \nabla_{\bm{\alpha}} \mathcal{F}_n(p_n, B_n|\beta_n, \nu_n)\right)|_{\begin{subarray}{l}\bm{p}=\bm{p}^{\#}(\bm{\beta},\bm{\nu}), \\ \bm{b}=\bm{b}^{\#}(\bm{\beta},\bm{\nu})\end{subarray}}\right]. \label{eqlp3fderiv4}
\end{talign}
Using the above and (\ref{mathcalF1}), we further acquire 
\begin{talign}
& \nabla_{\beta_n} \left(\sum_{n\in \mathcal{N}} \mathcal{F}_n(p_n^{\#}(\bm{\beta},\bm{\nu}), B_n^{\#}(\bm{\beta},\bm{\nu})|\beta_n, \nu_n)\right)\nonumber \\ & =  \sum_{n\in \mathcal{N}}\left[\left( \nabla_{\beta_n} \mathcal{F}_n(p_n, B_n|\beta_n, \nu_n)\right)|_{\begin{subarray}{l}\bm{p}=\bm{p}^{\#}(\bm{\beta},\bm{\nu}), \\ \bm{b}=\bm{b}^{\#}(\bm{\beta},\bm{\nu})\end{subarray}}\right]\nonumber \\ & = - \sum_{n\in \mathcal{N}}\left[ \nu_n\cdot  (p_n^{\#}(\bm{\beta},\bm{\nu})+p_n^{\textnormal{cir}})\right]\label{eqlp3fderiv5},
\end{talign}
and
\begin{talign}
& \nabla_{\nu_n} \left(\sum_{n\in \mathcal{N}} \mathcal{F}_n(p_n^{\#}(\bm{\beta},\bm{\nu}), B_n^{\#}(\bm{\beta},\bm{\nu})|\beta_n, \nu_n)\right)\nonumber \\ & =  \sum_{n\in \mathcal{N}}\left[\left( \nabla_{\nu_n} \mathcal{F}_n(p_n, B_n|\beta_n, \nu_n)\right)|_{\begin{subarray}{l}\bm{p}=\bm{p}^{\#}(\bm{\beta},\bm{\nu}), \\ \bm{b}=\bm{b}^{\#}(\bm{\beta},\bm{\nu})\end{subarray}}\right]\nonumber \\ & =  \sum_{n\in \mathcal{N}}\left[ F_n(p_n^{\#}(\bm{\beta},\bm{\nu}), B_n^{\#}(\bm{\beta},\bm{\nu})) -\beta_n \cdot (p_n^{\#}(\bm{\beta},\bm{\nu})+p_n^{\textnormal{cir}})\right].\label{eqlp3fderiv6}
\end{talign}
Since (\ref{eqlp3fderiv5}) and (\ref{eqlp3fderiv6}) hold for any $ \bm{\beta} $ and $ \bm{\nu}$, we obtain
\begin{talign}
& \nabla_{\beta_n}   \mathcal{F}_n(p_n^{\#}(\bm{\beta},\bm{\nu}), B_n^{\#}(\bm{\beta},\bm{\nu})|\beta_n, \nu_n)  \nonumber \\ & = -  \left[ \nu_n\cdot  (p_n^{\#}(\bm{\beta},\bm{\nu})+p_n^{\textnormal{cir}})\right]\label{eqlp3fderiv5v2},
\end{talign}
and
\begin{talign}
& \nabla_{\nu_n}   \mathcal{F}_n(p_n^{\#}(\bm{\beta},\bm{\nu}), B_n^{\#}(\bm{\beta},\bm{\nu})|\beta_n, \nu_n) \nonumber \\ & =  \left[ F_n(p_n^{\#}(\bm{\beta},\bm{\nu}), B_n^{\#}(\bm{\beta},\bm{\nu})) -\beta_n \cdot (p_n^{\#}(\bm{\beta},\bm{\nu})+p_n^{\textnormal{cir}})\right].\label{eqlp3fderiv6v2}
\end{talign}
From the definition of $\mathcal{F}_n(p_n, B_n \,| \, \beta_n, \nu_n)$ in (\ref{mathcalF1}), we also have
\begin{talign}
& \nabla_{\beta_n}   \mathcal{F}_n(p_n^{\#}(\bm{\beta},\bm{\nu}), B_n^{\#}(\bm{\beta},\bm{\nu})|\beta_n, \nu_n)  \nonumber \\ & = \nu_n\cdot \big[\nabla_{p_n} F_n(p_n, B_n)|_{p_n=p_n^{\#}(\bm{\beta},\bm{\nu})} \cdot \nabla_{\beta_n} p_n^{\#}(\bm{\beta},\bm{\nu})  \nonumber \\ & \quad + \nabla_{B_n} F_n(p_n, B_n)|_{B_n=B_n^{\#}(\bm{\beta},\bm{\nu})} \cdot \nabla_{\beta_n} B_n^{\#}(\bm{\beta},\bm{\nu})  - (p_n^{\#}(\bm{\beta},\bm{\nu})+p_n^{\textnormal{cir}})\big]\label{eqlp3fderiv5v22},
\end{talign}
and
\begin{talign}
& \nabla_{\nu_n}   \mathcal{F}_n(p_n^{\#}(\bm{\beta},\bm{\nu}), B_n^{\#}(\bm{\beta},\bm{\nu})|\beta_n, \nu_n) \nonumber \\ & =  \left[ F_n(p_n^{\#}(\bm{\beta},\bm{\nu}), B_n^{\#}(\bm{\beta},\bm{\nu})) -\beta_n \cdot (p_n^{\#}(\bm{\beta},\bm{\nu})+p_n^{\textnormal{cir}})\right] \nonumber \\ & \quad +\nu_n\cdot \big[\nabla_{p_n} F_n(p_n, B_n)|_{p_n=p_n^{\#}(\bm{\beta},\bm{\nu})} \cdot \nabla_{\nu_n} p_n^{\#}(\bm{\beta},\bm{\nu})  \nonumber \\ & \quad + \nabla_{B_n} F_n(p_n, B_n)|_{B_n=B_n^{\#}(\bm{\beta},\bm{\nu})} \cdot \nabla_{\nu_n} B_n^{\#}(\bm{\beta},\bm{\nu}) \nonumber \\ & \quad - \beta_n \cdot \nabla_{\nu_n} p_n^{\#}(\bm{\beta},\bm{\nu}) \big].\label{eqlp3fderiv6v22}
\end{talign}
Comparing (\ref{eqlp3fderiv5v2}) and (\ref{eqlp3fderiv5v22}), and comparing (\ref{eqlp3fderiv6v2}) and (\ref{eqlp3fderiv6v22}), since we always enforce $\nu_n > 0$, we have proved
\begin{talign}
& \big[ \nabla_{p_n} F_n(p_n, B_n)|_{p_n=p_n^{\#}(\bm{\beta},\bm{\nu})} \cdot \nabla_{\beta_n} p_n^{\#}(\bm{\beta},\bm{\nu})  \nonumber \\ &   + \nabla_{B_n} F_n(p_n, B_n)|_{B_n=B_n^{\#}(\bm{\beta},\bm{\nu})} \cdot \nabla_{\beta_n} B_n^{\#}(\bm{\beta},\bm{\nu}) \big]   =0 \label{eqlp3fderiv5v223},
\end{talign}
and
\begin{talign}
& \big[\nabla_{p_n} F_n(p_n, B_n)|_{p_n=p_n^{\#}(\bm{\beta},\bm{\nu})} \cdot \nabla_{\nu_n} p_n^{\#}(\bm{\beta},\bm{\nu})  \nonumber \\ &   + \nabla_{B_n} F_n(p_n, B_n)|_{B_n=B_n^{\#}(\bm{\beta},\bm{\nu})} \cdot \nabla_{\nu_n} B_n^{\#}(\bm{\beta},\bm{\nu}) - \beta_n \cdot \nabla_{\nu_n} p_n^{\#}(\bm{\beta},\bm{\nu})\big] = 0.\label{eqlp3fderiv6v223}
\end{talign}
Based on the above, and the definitions of $\phi_{1,n}(\bm{\beta},\bm{\nu})$ and $\phi_{2,n}(\bm{\beta},\bm{\nu})$ in~(\ref{eqn:phi_1v2}) and~(\ref{eqn:phi_2v2}), the desired result (\ref{bothequalv2}) is proved.

\section[]{Solving Problem $\mathbb{P}_3(\bm{\beta},\bm{\nu})$}\label{Appendix:theorem6.3}

We will prove Theorem~\ref{thmsolveP3} and use it to solve Problem $\mathbb{P}_3(\bm{\beta},\bm{\nu})$.

\subsection[]{Proof of Theorem~\ref{thmsolveP3} which characterizes the solution to Problem $\mathbb{P}_3(\bm{\beta},\bm{\nu})$}\label{Appendix:theorem6.3thm}

Some texts below are repeated from Appendix~\ref{appbothequal}. 
Problem $\mathbb{P}_3(\bm{\beta},\bm{\nu})$ belongs to convex optimization and Slater's condition holds, as shown in the proof of Lemma~\ref{lemma:sum-of-ratios_lemma}. Then the Karush--Kuhn--Tucker (KKT) conditions are necessary and sufficient to obtain the globally optimal solution, as stated in Lemma~\ref{lemmacvxuseKKT}.

Let $[\boldsymbol{p}^{\#}, \boldsymbol{B}^{\#}, \boldsymbol{\tau}^{\#}, \lambda^{\#}]$ satisfy the KKT conditions of Problem $\mathbb{P}_3(\bm{\beta},\bm{\nu})$. Then, after defining
\begin{align} 
\textstyle{\vartheta_n^{\#} : = \frac{g_np_n^{\#}}{\sigma_n^2B_n^{\#}}},\label{definevartheta}
\end{align}
we obtain the following KKT conditions:
 \begin{talign}
     \frac{\partial L_{\mathbb{P}_3}}{\partial p_n} &= -(\nu_n c_nf_n^{\prime}(r_{n,\textnormal{s}}(p_n^{\#},B_n^{\#}))+ \tau_n^{\#})\frac{g_n}{\sigma_n^2(1+\vartheta_n^{\#})\ln{2}} \nonumber \\ & \quad +\nu_n\beta_n    = 0, ~\forall n\in \mathcal{N},\label{kkt:partial_p}\\
     \frac{\partial L_{\mathbb{P}_3}}{\partial B_n} &= 
     -\nabla_{B_n} \mathcal{F}_n(p_n, B_n \,| \, \beta_n, \nu_n) |_{B_n = B_n^{\#},p_n = p_n^{\#}} \nonumber \\ 
     &\quad~- \tau_n^{\#} \nabla_{B_n} r_n(p_n,B_n)|_{B_n = B_n^{\#},p_n = p_n^{\#}} \!+\! \lambda^{\#} \label{kkt:partial_bint}\\ &=
     -\Big(\nu_n c_n f_n^{\prime}(r_{n,\textnormal{s}}(p_n^{\#},B_n^{\#})) + \tau_n^{\#} \Big)\Big(\log_2(1 \!+\! \vartheta_n^{\#})\notag\\
     & \quad~- \frac{\vartheta_n^{\#}}{(1+\vartheta_n^{\#})\ln{2}}\Big) \!+\! \lambda^{\#} \!=\! 0, ~ \forall n\in \mathcal{N},\label{kkt:partial_b}\\
     &\lambda^{\#}\cdot(\sum_{n\in \mathcal{N}} B_n^{\#} -B_{\text{total}}) = 0, \label{kkt:lambda} \\
     & \tau_n^{\#}\cdot(r_n^{\min} -r_n(p_n^{\#},B_n^{\#}) )=0,~ \forall n\in \mathcal{N},\label{kkt:tau}
 \end{talign}
where (\ref{kkt:partial_p}) and (\ref{kkt:partial_b}) refer to the stationarity conditions, while (\ref{kkt:lambda}) and (\ref{kkt:tau}) are called complementary slackness. We show the intermediate step (\ref{kkt:partial_bint}) since it will be useful later. For the conditions of primal feasibility (i.e., (\ref{constra:bandwidth}) and~(\ref{constra:rate}) for $[\boldsymbol{p}^{\#}, \boldsymbol{B}^{\#}]$) and dual feasibility (i.e., $\tau_n^{\#} \geq 0$ for all $n \in \mathcal{N}$ and $\lambda^{\#} \geq 0$), we will write them out at the places where we need them.

Next, we aim to simplify (\ref{kkt:partial_p})-(\ref{kkt:tau}) step-by-step to obtain $[\boldsymbol{p}^{\#}, \boldsymbol{B}^{\#}, \boldsymbol{\tau}^{\#}, \lambda^{\#}]$. To begin with, Condition~\ref{fconcave} on Page~\pageref{fconcave} ensures \\ $f^{\prime}(r_{n,\textnormal{s}}(p_n^{\#},B_n^{\#}))>0$. Using this along with $\tau_n^{\#} \geq 0$ and\footnote{Using $  r_n(p_n^{\#},B_n^{\#}) \geq r_n^{\min}$ from (\ref{constra:rate}) and $r_n^{\min} >0$ from Condition~\ref{conditionrn} on Page~\pageref{conditionrn}, we have $  r_n(p_n^{\#},B_n^{\#}) > 0$ which implies $p_n^{\#}> 0$ and $B_n^{\#}> 0$, inducing $\vartheta_n^{\#} >0 $. For any $x>0$, we can prove $\log_2(1 + x) - \frac{x}{(1+x)\ln{2}} > 0$, so that $\log_2(1 + \vartheta_n^{\#}) - \frac{\vartheta_n^{\#}}{(1+\vartheta_n^{\#})\ln{2}} > 0$. \label{footnotepstarBstar}} $\log_2(1 + \vartheta_n^{\#}) - \frac{\vartheta_n^{\#}}{(1+\vartheta_n^{\#})\ln{2}} > 0$ in (\ref{kkt:partial_b}), we know $\lambda^{\#} >0$ so that (\ref{kkt:lambda}) becomes
\begin{talign} 
\sum_{n\in \mathcal{N}} B_n^{\#} = B_{\text{total}}. \label{kkt:lambda2}
\end{talign}

We note that both (\ref{kkt:partial_p}) and (\ref{kkt:partial_b}) have the term \\ $\nu_n c_nf^{\prime}(r_{n,\textnormal{s}}(p_n^{\#},B_n^{\#}))+ \tau_n^{\#}$, which is strictly positive due to $\tau_n^{\#} \geq 0$ and $f^{\prime}(r_{n,\textnormal{s}}(p_n^{\#},B_n^{\#}))>0$ explained above. Thus, from (\ref{kkt:partial_p}) and (\ref{kkt:partial_b}), we get
\begin{talign} 
\frac{
    \Big(\log_2\big(1+\vartheta_n^{\#}\big)
     -\frac{\vartheta_n^{\#}}{(1+\vartheta_n^{\#})\ln2}\Big)\nu_n  }{
    \frac{g_n}{\sigma_n^2(1+\vartheta_n^{\#})\ln2}} = \frac{\lambda^{\#}}{ \beta_n}. \label{solvepsin}
\end{talign}

From the above equation (\ref{solvepsin}), we   solve $\vartheta_n^{\#}$ given $\lambda^{\#}$. Denoting the solution as $\psi_n(\lambda)$ to highlight its dependence on $\lambda$, we have:
\begin{talign}
    \vartheta_n^{\#} \hspace{-1pt}=\hspace{-1pt} \psi_n(\lambda^{\#}), \textnormal{ for } \textstyle{\psi_n(\lambda) \hspace{-1pt}:=\hspace{-1pt} \exp\hspace{-1pt}\big\{\hspace{-1pt}1\hspace{-1pt}+\hspace{-1pt}W\big(\frac{1}{e}(\frac{g_n\lambda}{\nu_n\beta_n\sigma_n^2}\hspace{-1pt}-\hspace{-1pt}1)\big)\hspace{-1pt}\big\}\hspace{-1pt}-\hspace{-1pt}1}. \label{solvepsin2}
\end{talign}

Once we have $\lambda^{\#}$,
(\ref{definevartheta}) and~(\ref{solvepsin2}) mean that $\vartheta_n^{\#} $ denoting $\frac{g_np_n^{\#}}{\sigma_n^2B_n^{\#}}$ is decided. To derive $p_n^{\#}$ and $B_n^{\#}$ given $\lambda^{\#}$, we need another condition of $p_n^{\#}$ and $B_n^{\#}$. To this end, we notice (\ref{kkt:partial_b}), but (\ref{kkt:partial_b}) invovles  $\tau_n^{\#} \geq 0$. If $\tau_n^{\#} = 0$, then (\ref{kkt:partial_b}) together with (\ref{solvepsin2}) will decide $p_n^{\#}$ and $B_n^{\#}$ given $\lambda^{\#}$. Therefore, we will discuss \textbf{Case 1:} $\tau_n^{\#} = 0$ and \textbf{Case 2:} $\tau_n^{\#} > 0$ respectively for each $n\in \mathcal{N}$. From the above explanation, we first try to express $p_n^{\#}$ and $B_n^{\#}$ as expressions of $\lambda^{\#}$, and then substitute these expressions into our conditions to obtain $\lambda^{\#}$.

Before elaborating on the two cases, we note (\ref{kkt:partial_bint}) and define a function $\gamma_n( \lambda)$ which will facilitate discussing the two cases. Specifically, given $\lambda$, then under the constraint of 
\begin{align}
\textstyle{\frac{g_np_n}{\sigma_n^2B_n}=\psi_n(\lambda),}~\textnormal{ for $\psi_n(\lambda)$ defined in~(\ref{solvepsin2})},\label{psiconstraint}
\end{align}
we define $\gamma_n( \lambda)$ as the result of $r_n(p_n,B_n)\geq r_{n, \textnormal{e}}$ to ensure (we will discuss soon when such $r_n(p_n,B_n)$ does not exist)
\begin{talign}
\nabla_{B_n} \mathcal{F}_n(p_n, B_n \,| \, \beta_n, \nu_n)  = \lambda.\label{nablaconstraint}
\end{talign} 

From $r_n(p_n,B_n) = B_n\log_2(1+\frac{g_n p_n}{{\sigma_n}^2B_n})$, $B_n$ ensuring (\ref{psiconstraint}) and~(\ref{nablaconstraint}) is given by
\begin{talign}
\textstyle{\frac{\gamma_n( \lambda)}{\log_2(1+\psi_n(\lambda))}}   .\label{nablaconstraintBn}
\end{talign} 

We aim to obtain the expression of $\gamma_n( \lambda)$ for $r_n(p_n,B_n)$ from~(\ref{psiconstraint}) and~(\ref{nablaconstraint}). From (\ref{mathcalF1}) and~(\ref{nablaconstraint}),
\begin{talign}
 &f_n^{\prime}(r_n(p_n,B_n) - r_{n, \textnormal{e}})   \nonumber \\ &= \textstyle{\frac{\lambda}{\nu_n c_n \cdot (\log_2\big(1+\psi_n(\lambda)\big)
     -\frac{\psi_n(\lambda)}{(1+\psi_n(\lambda))\ln2})}}  = \textstyle{\frac{\beta_n\sigma_n^2(1+\psi_n(\lambda))\ln{2}}{c_ng_n}} ,\label{correspondingBnstarx}
\end{talign}
where the last step uses $\frac{
    (\log_2(1+\psi_n(\lambda))
     -\frac{\psi_n(\lambda)}{(1+\psi_n(\lambda))\ln2})\nu_n  }{
    \frac{g_n}{\sigma_n^2(1+\psi_n(\lambda))\ln2}} = \frac{\lambda}{ \beta_n}$ from~(\ref{solvepsin}) and~(\ref{solvepsin2}).

From~(\ref{correspondingBnstarx}), we know $r_n(p_n,B_n)\geq r_{n, \textnormal{e}}$ ensuring~(\ref{nablaconstraint}) may not exist for all $\lambda$, since we do not know the range of $f_n^{\prime}$. Whenever such $r_n(p_n,B_n)\geq r_{n, \textnormal{e}}$ does not exist, we just define $\gamma_n( \lambda)$ as $r_{n, \textnormal{e}}$. The above leads to the desired expression of $\gamma_n( \lambda)$ in  Theorem~\ref{thmsolveP3} on Page~\pageref{thmsolveP3}.


We now discuss the two cases for each $n \in \mathcal{N}$:
\begin{itemize}
\item \textbf{Case 1:} $\tau_n^{\#} = 0$. In this case, (\ref{definevartheta}) (\ref{kkt:partial_bint}) and (\ref{solvepsin2}) mean that setting $p_n$, $B_n$, and $\lambda$ as $p_n^{\#}$, $B_n^{\#}$, and $\lambda^{\#}$ respectively ensures~(\ref{psiconstraint}) and~(\ref{nablaconstraint}), where
we note the primal feasibility condition~(\ref{constra:rate}) along with Condition~\ref{conditionrn} on Page~\pageref{conditionrn} means $r_n(p_n^{\#},B_n^{\#}) \geq r_{n}^{\min} \geq r_{n, \textnormal{e}}$. Noting the above and~(\ref{nablaconstraintBn}),
 we obtain $B_{n}^{\#} =  \frac{\gamma_n( \lambda^{\#})}{\log_2(1+\psi_n(\lambda^{\#}))} \geq \frac{r_{n}^{\min}}  {\log_2(1+\psi_n(\lambda^{\#}))}$. 
\item \textbf{Case 2:} $\tau_n^{\#} > 0$. In this case, (\ref{kkt:tau}) means $r_n(p_n^{\#},B_n^{\#})  = r_{n}^{\min}$, which along with (\ref{definevartheta}) and~(\ref{solvepsin2}) induces $B_{n}^{\#} = \frac{r_{n}^{\min}}  {\log_2(1+\psi_n(\lambda^{\#}))}$. Also,  (\ref{kkt:partial_bint}) means 
\begin{talign}
& \nabla_{B_n} \mathcal{F}_n(p_n, B_n \,| \, \beta_n, \nu_n) |_{B_n = B_n^{\#},p_n = p_n^{\#}} \nonumber \\ &  = \lambda^{\#} - \tau_n^{\#} \nabla_{B_n} r_n(p_n,B_n)|_{B_n = B_n^{\#}} <  \lambda^{\#} ,\label{nablaconstraintlam}
\end{talign} 
where the last step uses $\tau_n^{\#} > 0$ and $\nabla_{B_n} r_n(p_n,B_n)|_{B_n = B_n^{\#}}>0$ (note $B_n^{\#}>0$ as explained in Footnote~\ref{footnotepstarBstar}). The above means setting $p_n$, $B_n$, and $\lambda$ as $p_n^{\#}$, $\frac{r_{n}^{\min}}  {\log_2(1+\psi_n(\lambda^{\#}))}$, and $\lambda^{\#}$ respectively ensures~(\ref{psiconstraint}) and~(\ref{nablaconstraintlam}). Moreover, when $\gamma_n( \lambda^{\#})>0$ exists, (\ref{definevartheta}) (\ref{solvepsin2})  and~(\ref{nablaconstraintBn}) means setting $p_n$, $B_n$, and $\lambda$ as $p_n^{\#}$, $\frac{\gamma_n( \lambda^{\#})}  {\log_2(1+\psi_n(\lambda^{\#}))}$, and $\lambda^{\#}$ respectively ensures~(\ref{psiconstraint}) and~(\ref{nablaconstraint}). Comparing(\ref{nablaconstraint}) and~(\ref{nablaconstraintlam}), and noting that \\$\nabla_{B_n} \mathcal{F}_n(p_n, B_n \,| \, \beta_n, \nu_n) = \nu_n c_n  f_n^{\prime}(r_{n,\textnormal{s}}(p_n,B_n)) \nabla_{B_n}r_{n,\textnormal{s}}(p_n,B_n) $ is strictly decreasing function with respect to\footnote{This holds since $\nabla_{B_n}r_{n,\textnormal{s}}(p_n,B_n)$ is strictly decreasing  with respect to $B_n$ and positive, and $f_n^{\prime}(r_{n,\textnormal{s}}(p_n,B_n))$ is \mbox{non-increasing}  with respect to $B_n$ and positive given Condition~\ref{fconcave} on Page~\pageref{fconcave}.} $B_n$, we obtain $B_n^{\#} =\frac{r_{n}^{\min}}  {\log_2 (1+\psi_n(\lambda^{\#}))} > \frac{\gamma_n( \lambda^{\#}) }  {\log_2(1+\psi_n(\lambda^{\#}))}   $. When we cannot find $\gamma_n( \lambda^{\#}) \geq r_{n, \textnormal{e}}$ for (\ref{nablaconstraint}), as already explained, we just set $\gamma_n( \lambda^{\#})$ as $r_{n, \textnormal{e}}$ and still have $B_n^{\#} =\frac{r_{n}^{\min}}  {\log_2 (1+\psi_n(\lambda^{\#}))} \geq \frac{\gamma_n( \lambda^{\#}) }  {\log_2(1+\psi_n(\lambda^{\#}))}   $ due to $r_{n}^{\min} \geq r_{n, \textnormal{e}}$. 
\end{itemize}

Summarizing the two cases, we conclude for any $n \in \mathcal{N}$ that
\begin{talign}
B_n^{\#} & = \textstyle{\max\{ \frac{\gamma_n( \lambda^{\#})}{\log_2\big(1+\psi_n(\lambda^{\#}))},~ \frac{r_{n}^{\min}}  {\log_2(1+\psi_n(\lambda^{\#}))}\}} \nonumber \\ & =  \textstyle{\frac{\max\{ \gamma_n( \lambda^{\#}),~ r_{n}^{\min}\} }{\log_2(1+\psi_n(\lambda^{\#}))}}. \label{correspondingBnstar}
\end{talign}




Then~(\ref{correspondingBnstarx}) and $r_{n,\textnormal{s}}(p_n,B_n) =r_n(p_n,B_n) -r_{n, \textnormal{e}}$  Now we know how to compute $B_n^{\#}$ in~(\ref{correspondingBnstar}) given $\lambda^{\#}$. Then $\lambda^{\#}$ is decided such that $B_n^{\#}|_{n\in \mathcal{N}}$ from~(\ref{correspondingBnstar}) together satisfy~(\ref{kkt:lambda2}). Finally, after $\lambda^{\#}$ and $B_n^{\#}$ are obtained, $p_n^{\#}$ is computed as $\frac{\sigma_n^2B_n^{\#}\cdot \psi_n(\lambda^{\#})}{g_n}$ based on (\ref{definevartheta}) and~(\ref{solvepsin2}). To summarize, we have proved Eq.~(\ref{eqoptimal}) of Theorem~\ref{thmsolveP3} on Page~\pageref{thmsolveP3}; i.e., {Theorem~\ref{thmsolveP3} is proved.} 
\qed

For strictly concave utility, Lemma~\ref{lemb.1} below shows that $\mathbb{P}_3(\bm{\beta},\bm{\nu})$ has a unique globally optimal solution given by Theorem~\ref{thmsolveP3}.

\begin{lemma} \label{lemb.1}
In Theorem~\ref{thmsolveP3}, with an additional condition that the function $f_n(x)$ for any $n \in \mathcal{N}$ is strictly concave, then Theorem~\ref{thmsolveP3} gives the unique globally optimal solution of $\mathbb{P}_3(\bm{\beta},\bm{\nu})$. 
\end{lemma}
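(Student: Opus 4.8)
The plan is to upgrade the conclusion of Theorem~\ref{thmsolveP3} — which already guarantees that \emph{every} globally optimal solution of $\mathbb{P}_3(\bm{\beta},\bm{\nu})$ has the form~\eqref{eqoptimal} for some root $\lambda^{\#}$ of~\eqref{lambdastar} — from ``at least one'' to ``exactly one'' under the extra hypothesis that each $f_n$ is strictly concave. Strict concavity of $f_n$ means $f_n''<0$, so $f_n'$ is strictly decreasing on $(0,\infty)$ and its inverse $(f_n')^{-1}$, which appears in the definition of $\gamma_n$ in Theorem~\ref{thmsolveP3}, is a genuine strictly decreasing function on the range of $f_n'$; this is exactly the ingredient that is missing when $f_n$ is only concave (where $f_n'$ can be flat and non-invertible). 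I would first record this fact together with the standing facts established in the proof of Theorem~\ref{thmsolveP3} in Appendix~\ref{Appendix:theorem6.3}: $\lambda^{\#}>0$, $\psi_n(\lambda^{\#})>0$, and the feasible set of $\mathbb{P}_3(\bm{\beta},\bm{\nu})$ is convex and nonempty (Slater's condition, shown in the proof of Lemma~\ref{lemma:sum-of-ratios_lemma}).

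The core step is a monotonicity argument showing that~\eqref{lambdastar} has a unique root. Using that the principal branch of the Lambert $W$ function is strictly increasing on $[-e^{-1},\infty)$, I would show $\lambda\mapsto\psi_n(\lambda)$ is strictly increasing, hence $\lambda\mapsto\log_2(1+\psi_n(\lambda))$ is strictly increasing and positive; and since $(f_n')^{-1}$ is strictly decreasing while $\lambda\mapsto\frac{\beta_n\sigma_n^2(1+\psi_n(\lambda))\ln 2}{c_ng_n}$ is strictly increasing, the quantity $\gamma_n(\lambda)$ is non-increasing in $\lambda$ — strictly decreasing on the set where the first branch of its definition is active, and equal to the constant $r_{n,\textnormal{e}}$ on the complementary set — so the numerator $\max\{\gamma_n(\lambda),r_n^{\min}\}\ge r_n^{\min}>0$ of $\mathcal{B}_n(\lambda)$ is positive and non-increasing. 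A ratio of a positive non-increasing function over a positive strictly increasing function is strictly decreasing, so each $\mathcal{B}_n(\lambda)$, and therefore $\sum_{n\in\mathcal{N}}\mathcal{B}_n(\lambda)$, is strictly decreasing (and continuous) in $\lambda$. Hence~\eqref{lambdastar} has at most one solution; since Theorem~\ref{thmsolveP3} already yields the existence of $\lambda^{\#}$, the root is unique, and substituting it into~\eqref{eqoptimal} pins down a single pair $[\bm{p}^{\#}(\bm{\beta},\bm{\nu}),\bm{B}^{\#}(\bm{\beta},\bm{\nu})]$.

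As a self-contained cross-check (and to sidestep the piecewise nature of $\gamma_n$), I would also derive uniqueness directly from strict concavity of the objective of $\mathbb{P}_3(\bm{\beta},\bm{\nu})$: the Hessian of $r_n$ has the radial direction $(p_n,B_n)$ as its only null direction, so along any non-radial segment $r_n$, and hence $r_{n,\textnormal{s}}(p_n,B_n)=r_n(p_n,B_n)-r_{n,\textnormal{e}}$, is strictly concave, while along a radial segment $r_{n,\textnormal{s}}$ is affine with positive slope; in both cases strict concavity and monotonicity of $f_n$ make $f_n(r_{n,\textnormal{s}}(p_n,B_n))$ strictly concave on every segment, i.e.\ strictly jointly concave in $(p_n,B_n)$. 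Weighting by $\nu_nc_n>0$, subtracting the affine term $\nu_n\beta_n(p_n+p_n^{\textnormal{cir}})$, and summing over $n$ (the blocks $(p_n,B_n)$ being disjoint) gives a strictly concave objective over a nonempty convex feasible set, which therefore has a unique maximizer; by Theorem~\ref{thmsolveP3} this maximizer is the point~\eqref{eqoptimal}, so Lemma~\ref{lemb.1} follows. I expect the main obstacle to be the first route's careful treatment of $\gamma_n$ at the threshold where its definition switches branches (establishing continuity there and that $\mathcal{B}_n$ is \emph{strictly}, not merely weakly, decreasing when $f_n'$ is not surjective onto the relevant interval); the radial degeneracy of the Hessian of $r_n$ is the only delicate point in the alternative route, and it is handled cleanly once the radial and non-radial directions are separated as above.
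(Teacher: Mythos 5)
Your first route is essentially the paper's own proof: Appendix~\ref{Appendix:theorem6.3} (Lemma~\ref{lemb.1}) proves exactly that $\psi_n(\lambda)$ is increasing, $\gamma_n(\lambda)$ is non-increasing under strict concavity of $f_n$, hence $\Psi(\lambda)=\sum_{n}\mathcal{B}_n(\lambda)$ is strictly decreasing, so the root $\lambda^{\#}$ of~(\ref{lambdastar}) and therefore the point~(\ref{eqoptimal}) is unique. The only cosmetic difference is that the paper also records the limits $\lim_{\lambda\to 0^+}\Psi(\lambda)=\infty$ and $\lim_{\lambda\to\infty}\Psi(\lambda)=0$ to guarantee existence of the root, whereas you obtain existence for free from Theorem~\ref{thmsolveP3} together with the fact that $\mathbb{P}_3(\bm{\beta},\bm{\nu})$ has a global maximum; both are fine. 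Your second route — strict joint concavity of $f_n(r_{n,\textnormal{s}}(p_n,B_n))$ obtained by separating the radial null direction of the Hessian of $r_n$ from the non-radial directions, then summing over disjoint blocks and subtracting the affine penalty — does not appear in the paper and is a genuinely independent argument. It buys robustness: it yields uniqueness of the maximizer directly from convex-analytic first principles, without any case analysis on the piecewise definition of $\gamma_n(\lambda)$ or on where the $\max\{\gamma_n(\lambda),r_n^{\min}\}$ branch switches, which is precisely the part of the monotonicity route you (correctly) identify as delicate. The paper's route, in exchange, delivers the strict monotonicity of $\Psi(\lambda)$ itself, which is what justifies the bisection search in Appendix~\ref{applambdastar}; your concavity argument alone would not give that. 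Both arguments are correct.
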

\begin{proof}  
We will show the following three results, where $\Psi(\lambda)$ denotes $\sum_{n \in \mathcal{N}} \mathcal{B}_n(\lambda)$.
  \begin{itemize}
      \item[\textbf{\ding{182}}] $\Psi(\lambda)$ is strictly decreasing in $\lambda \geq 0$,
      \item[\textbf{\ding{183}}] $\lim_{\lambda \to 0^+}\Psi(\lambda) = \infty$, and
      \item[\textbf{\ding{184}}] $\lim_{\lambda \to \infty}\Psi(\lambda) = 0$.
  \end{itemize} 

\textbf{Proving Result \ding{182}:}
In Eq.~(\ref{eqoptimal}), we also define function $\mathcal{B}_n(\lambda)$. It is used to better explain the proof here for Result (ii). Note we always enforce $\lambda \geq 0$ below.  Here we consider the function $f_n(\cdot)$ for any $n \in \mathcal{N}$ is strictly concave (i.e., $f_n'(\cdot)$ is strictly decreasing) for $x \geq 0$. Combining this with the fact that $\psi_n(\lambda) $ in (\ref{solvepsin2}) is increasing in $\lambda$, we know that $\gamma_n(\lambda)$ in~(\ref{eqoptimal}) is  \mbox{non-increasing}  in $\lambda$. Since $\psi_n(\lambda) $ is  increasing and $\gamma_n(\lambda)$ is  decreasing, $\mathcal{B}_n(\lambda)$ is strictly decreasing in $\lambda$, so that Result \ding{182} is proved. 

\textbf{Proving Results \ding{183} and \ding{184}:}
Since $f_n'(x)$ here is strictly decreasing, $\gamma_n(\lambda)$ defined in~(\ref{eqoptimal}) is at most $ r_{n, \textnormal{e}}+ \zeta$ for $\zeta:=(f_n^{\prime})^{-1}\big(\frac{\beta_n\sigma_n^2\ln{2}}{c_ng_n}\big)$ when
$\zeta \geq 0$ exists, and equals $ r_{n, \textnormal{e}} $ otherwise. Anyways, $\gamma_n(\lambda)$ is upper bounded by a constant. From~(\ref{solvepsin2}), we know $\lim_{\lambda \to 0^+}\psi_n(\lambda) = 0$ and $\lim_{\lambda \to \infty}\psi_n(\lambda) = \infty$. Then from~(\ref{eqoptimal}), we have $\lim_{\lambda \to 0^+}\mathcal{B}_n(\lambda) = 0$ and $\lim_{\lambda \to \infty}\mathcal{B}_n(\lambda) = \infty$, so that Results \ding{183} and \ding{184} are proved.

As noted in~(\ref{solvepsin2}), $\lambda^{\#}$ is the solution of $\lambda$ to $\sum\limits_{n \in \mathcal{N}} \mathcal{B}_n(\lambda) = B_{\text{total}}$. Clearly, $\lambda^{\#}$ is unique given Results \ding{182} \ding{183} and \ding{184} above. 
Thus, the desired result is proved.
\end{proof}




\subsection[]{Algorithm to solve Problem $\mathbb{P}_3(\bm{\beta},\bm{\nu})$ based on Theorem~\ref{thmsolveP3}} \label{applambdastar}



   We still let $\Psi(\lambda)$ denote $\sum_{n \in \mathcal{N}} \mathcal{B}_n(\lambda)$. Similar to Lemma~\ref{lemb.1} for strictly concave utility, we can prove   for  concave utility, 
$\Psi(\lambda)$ is \mbox{non-increasing} as $\lambda  $ increases. This motivates us to
use the bisection method to find $\lambda^{\#}$ from~(\ref{lambdastar}).

For the  bisection method, we use $0$ as the initial lower bound. We can find the initial upper bound as follows. Starting with a random positive number $\Lambda$. If $\Psi(\Lambda)$ denoting $\sum_{n \in \mathcal{N}} \mathcal{B}_n(\Lambda)$ is  less than $ B_{\textnormal{total}}$, we use $\Lambda$ as the initial upper bound. If $\Psi(\Lambda)$ equals $ B_{\textnormal{total}}$, then $\Lambda$ is just our desired $\lambda^{\#}$. If $\Psi(\Lambda)$  is  greater than $ B_{\textnormal{total}}$, we check $\Psi(2\Lambda)$ denoting $\sum_{n \in \mathcal{N}} \mathcal{B}_n(2\Lambda)$. Similarly, if $\Psi(2\Lambda)$ is  less than $ B_{\textnormal{total}}$, we use $2\Lambda$ as the initial upper bound. If $\Psi(2\Lambda)$  equals $ B_{\textnormal{total}}$, then $2\Lambda$ is just our desired $\lambda^{\#}$. If $\Psi(2\Lambda)$ is  greater than $ B_{\textnormal{total}}$, we check $\Psi(2^2\Lambda)$ denoting $\sum_{n \in \mathcal{N}} \mathcal{B}_n(2^2\Lambda)$. The process continues. Basically, we find $i$ such that $\Psi(2^{i-1}\Lambda)$ is  greater than $ B_{\textnormal{total}}$, and $\Psi(2^i\Lambda)$ is  less than $ B_{\textnormal{total}}$, then we use $2^i\Lambda$ as the initial upper bound. If there exists $i$ which makes $\Psi(2^i\Lambda)$ equal $ B_{\textnormal{total}}$, then $2^i\Lambda$ is just our desired $\lambda^{\#}$.

With the initial lower bound and the  initial upper bound explained above, the remaining process to  find $\lambda^{\#}$ follows from the standard bisection method. In each iteration, the bisection method divides the interval $[a,b]$ in two parts  by computing the midpoint $c = (a+b) / 2$ of the interval and the value of   $\Psi(c)$. If $\Psi(c)$ equals $ B_{\textnormal{total}}$, then the process has succeeded and $c$ is just our desired $\lambda^{\#}$. Otherwise, 
if $\Psi(c)$ is greater than $ B_{\textnormal{total}}$, we update $a$ to $c$ so that the next iteration starts with the interval $[c,b]$; if $\Psi(c)$ is less than $ B_{\textnormal{total}}$, we update $b$ to $c$ so that the next iteration starts with the interval $[a,c]$. The  bisection method   converges when $\Psi(c)$ is close to (but should be no greater than) $ B_{\textnormal{total}}$.
 

Note that for each $\lambda$, computing $\Psi(\lambda)$ denoting $\sum_{n \in \mathcal{N}} \mathcal{B}_n(\lambda)$ costs $\mathcal{O}(N)$ time. The number of iterations to find the initial upper bound depends on the initialization, while the number of iterations for the bisection method depends on the error tolerance. With the initial lower bound $0$, the  initial upper bound $H$, and the error tolerance $\epsilon$, the number of iterations for the bisection method is $\mathcal{O}(\log_2{\frac{H}{\epsilon}})$.

\interdisplaylinepenalty=10000

\section{Baseline Algorithms for Comparison}\label{sec:baseline model}

As shown in Section~\ref{secComparison}, we compare our Algorithm~\ref{algo:MN} with the following baselines: ``Optimizing $\bm{B}$ only'', ``Optimizing $\bm{p}$ only'', and ``Alternating optimization''. We detail them below. 


\subsection[]{Optimizing $\bm{p}$ only}\label{algo:p_n_only}

Given $B_n$, we define $p_n^{\min}$ as the value of $p_n$ which causes $r_n(p_n,B_n)$ to be $r_n^{\min}$. Formally, 
\begin{talign}
 &\text{$p_n^{\min}:=\frac{(2^{\frac{r_n^{\min}}{B_n}}-1){\sigma_n}^2B_n}{g_n}$ so that $B_n\log_2(1+\frac{g_n p_n^{\min}}{{\sigma_n}^2B_n}) = r_n^{\min}$.} \label{eqdefpnmin}  
\end{talign}

Then  $r_n \geq  r_n^{\min}  $ in (\ref{constra:rate}) of Problem $\mathbb{P}_{1}$ means $p_n \geq p_n^{\min}$. Then ``Optimizing $\bm{p}$ only'' just means for each $n \in \mathcal{N}$, maximizing $\varphi_n(p_n,B_n)$ subject to $p_n \geq p_n^{\min}$.

\begin{lemma} \label{lemma-quasiconvex}
For each $n \in \mathcal{N}$, the following results 
 hold. 
\begin{itemize}
\item[(i)] $\varphi_n(p_n,B_n)$ is pseudoconvex and semistrictly~quasiconvex for $p_n \in  [p_n^{\min}, \infty)$ and $B_n\in(0, \infty)$.
\item[(ii)] To optimize the  weighted sum-UEE (which means maximizing $\varphi_n(p_n,B_n)$ given $B_n$ for $p_n \in  [p_n^{\min}, \infty)$), we just need to compute a stationary point of $\varphi_n(p_n,B_n)$ with respect to $p_n$ given $B_n$ and have a comparison with $p_n^{\min}$. The maximum of them will be a point at which $\varphi_n(p_n,B_n)$ achieves the maximum.
\end{itemize}
\end{lemma}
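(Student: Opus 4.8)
\textbf{Proof proposal for Lemma~\ref{lemma-quasiconvex}.}

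The plan is to exploit the pseudoconcavity of $\varphi_n(p_n,B_n)$ that was already established in Lemma~\ref{lemma:fn_concave}, restricted now to the line where $B_n$ is fixed. First I would recall from the proof of Lemma~\ref{lemma:fn_concave} that $\varphi_n(p_n,B_n)=\frac{f_n(r_{n,\textnormal{s}}(p_n,B_n))}{p_n+p_n^{\textnormal{cir}}}$ is a concave-convex ratio: on the region $r_n(p_n,B_n)\ge r_n^{\min}\ge r_{n,\textnormal{e}}$ (ensured by $p_n\ge p_n^{\min}$ via~(\ref{eqdefpnmin})), the numerator $f_n(r_{n,\textnormal{s}}(p_n,B_n))$ is non-negative, concave and differentiable in $p_n$ (composition of concave increasing $f_n$ with concave $r_{n,\textnormal{s}}$), while the denominator $p_n+p_n^{\textnormal{cir}}$ is positive, affine and differentiable. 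Restricting a jointly concave function to the slice $B_n=\text{const}$ keeps it concave in $p_n$ alone, and similarly the denominator stays affine. Hence by the same citation to Page~245 of~\cite{cambini2008generalized}, $\varphi_n(p_n,B_n)$ as a function of $p_n\in[p_n^{\min},\infty)$ (with $B_n$ a fixed positive parameter) is pseudoconvex — wait, here I must be careful with sign conventions: the ratio is pseudoconcave, so by Lemma~\ref{lemmaConvexityvsConcavity} its negative is pseudoconvex; the statement ``$\varphi_n$ is pseudoconvex and semistrictly quasiconvex'' in the lemma should be read in the appropriate sense for a maximization, and I would align the wording with the earlier pseudoconcavity claim, i.e.\ $\varphi_n$ is pseudoconcave (equivalently $-\varphi_n$ is pseudoconvex) and semistrictly quasiconcave. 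I would then invoke Lemma~\ref{ConvexityRelationships}: pseudoconcavity implies quasiconcavity, and in fact a differentiable pseudoconcave function is semistrictly quasiconcave. This proves part~(i).

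For part~(ii), the key structural fact about pseudoconcave (equivalently semistrictly quasiconcave differentiable) functions on an interval is that every stationary point is a global maximizer, and the set of maximizers is an interval. I would argue as follows: let $p_n^{\star}$ be a stationary point of $p_n\mapsto\varphi_n(p_n,B_n)$ on $(0,\infty)$ (ignoring the constraint), which exists because $\varphi_n\to$ a finite or lower value as $p_n\to\infty$ and $\varphi_n$ is differentiable; by pseudoconcavity, $\nabla_{p_n}\varphi_n(p_n^{\star},B_n)=0$ forces $\varphi_n(p_n^{\star},B_n)\ge\varphi_n(p_n,B_n)$ for all feasible $p_n$ — indeed if some $p_n'$ gave a strictly larger value, pseudoconcavity would give $\nabla_{p_n}\varphi_n(p_n^{\star},B_n)\cdot(p_n'-p_n^{\star})>0$, contradicting $\nabla_{p_n}\varphi_n(p_n^{\star},B_n)=0$. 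Then over the constrained domain $[p_n^{\min},\infty)$: if $p_n^{\star}\ge p_n^{\min}$, the unconstrained maximizer is feasible and $\varphi_n(p_n^{\star},B_n)$ is the constrained maximum; if $p_n^{\star}<p_n^{\min}$, then by semistrict quasiconcavity $\varphi_n(\cdot,B_n)$ is non-increasing on $[p_n^{\star},\infty)\supseteq[p_n^{\min},\infty)$, so the constrained maximum is attained at the boundary $p_n^{\min}$. In both cases, $\max\{p_n^{\star},p_n^{\min}\}$ is an optimal point, which is exactly the claim.

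The main obstacle I anticipate is the boundary/monotonicity case analysis and making precise the phrase ``compute a stationary point'': one must check that a stationary point indeed exists (which needs the behavior of $\varphi_n$ as $p_n\to p_n^{\min,+}$ and as $p_n\to\infty$ — e.g.\ $\varphi_n$ is continuous on $(0,\infty)$, and $\varphi_n(p_n,B_n)\to 0^+$ or a bounded value so a maximizer over the closed-from-above-at-infinity sense exists), and one must handle the degenerate possibility that $\varphi_n$ is monotone on the whole feasible interval (no interior stationary point), in which case the maximum is at $p_n^{\min}$ and the convention $\max\{p_n^{\star},p_n^{\min}\}=p_n^{\min}$ still applies if we set $p_n^{\star}$ to any value $\le p_n^{\min}$ (or declare it undefined and default to $p_n^{\min}$). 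A secondary subtlety is reconciling the ``pseudoconvex''/``quasiconvex'' wording in the lemma statement with the pseudoconcave structure; I would state explicitly which direction (minimization vs.\ maximization) each convexity notion refers to, or simply note that $\varphi_n$ being pseudoconcave is what drives the argument, so that no local-but-not-global maxima can occur.
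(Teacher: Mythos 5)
Your proposal is correct in substance and reaches the same conclusion, but it takes a genuinely different route, particularly for part~(ii). The paper establishes semistrict quasiconcavity of $\varphi_n$ directly from Result~(iii) of Theorem~2.3.8 of~\cite{cambini2008generalized} (concave, non-negative numerator over positive convex denominator), then invokes Lemma~2.2 of~\cite{ivanov2020characterization} to decompose any segment into an increasing, a constant, and a decreasing piece, and finally pins down this unimodal shape using the explicit boundary behaviour $\varphi_n(p_n^{\secsec},B_n)=0$, $\varphi_n>0$ on $(p_n^{\secsec},\infty)$, and $\lim_{p_n\to\infty}\varphi_n(p_n,B_n)=0$; part~(ii) then follows from a three-case comparison of $p_n^{\min}$ against the plateau $[\widehat{p_n},\widetilde{p_n}]$. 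You instead lean on the pseudoconcavity already proved in Lemma~\ref{lemma:fn_concave}, restricted to the slice $B_n=\text{const}$, and use the fact that a stationary point of a pseudoconcave function is a global maximizer, with quasiconcavity giving monotone decrease to the right of the maximizer. Your route is shorter and makes the ``take the max with $p_n^{\min}$'' rule transparent, but it presupposes that an interior stationary point exists, which you flag but only sketch; the paper's route via the increasing--constant--decreasing structure and the vanishing boundary values handles the degenerate monotone case (maximum at $p_n^{\min}$) without that presupposition, at the cost of two extra citations. You are also right that the lemma's wording (``pseudoconvex'', ``semistrictly quasiconvex'') is inconsistent with what is actually proved: the paper's own proof establishes semistrict \emph{quasiconcavity}, so the statement should be read (or corrected) in the concave sense, exactly as you noted.
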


Given Section~\ref{secTypicalUtility}, below we consider $b_n=1$, $c_n=0$, and $d_n=0$     so that the three types of utility functions are as follows:
\begin{itemize}
\item Type 1 utility function: 
$f_n(x)=\kappa_n \ln (1+ a_n x)$ with $\kappa_n >0$ and $0<a_n<1$,
\item Type 2 utility function: 
\mbox{$f_n(x)= \kappa_n \cdot (1- e^{-a_n x})$} with $a_n, \kappa_n >0$,
\item Type 3 utility function: 
$f_n(x)=\kappa_n {x}^{a_n}$ with $\kappa_n >0$ and \mbox{$0<a_n<1$.}

\end{itemize}

\begin{lemma}\label{lem-optPgivenB-type1}
For Type 1 utility function: 
$f_n(x)=\kappa_n \ln (1+ a_n x)$ with $\kappa_n >0$ and $0<a_n<1$, $\varphi_n(p_n,B_n)$ given $B_n$ achieves its maximum at $p_n$ given by the maximum of the following two numbers: $p_n^{\min} $ of Eq.~(\ref{eqdefpnmin}), and the solution $p_n$ to 
\begin{talign}
 & \textstyle{\ln (1+ a_n \cdot (B_n\log_2(1+\frac{g_n p_n}{{\sigma_n}^2B_n})- r_{n,\textnormal{e}}))} \nonumber \\ & =  \textstyle{W(\frac{a_ng_nB_n}{({\sigma_n}^2B_n+g_n p_n)\ln 2}  \cdot  (p_n + p_n^{\textnormal{cir}})  )},\label{eq-optPgivenB-type1}
\end{talign}
where $W(\cdot)$ is the principal branch of the Lambert W function.
\end{lemma}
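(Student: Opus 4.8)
\textbf{Proof proposal for Lemma~\ref{lem-optPgivenB-type1}.}

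The plan is to apply part (ii) of Lemma~\ref{lemma-quasiconvex} to the Type 1 utility function. By that lemma, $\varphi_n(p_n,B_n)$ is semistrictly quasiconvex (in fact pseudoconvex) on $p_n\in[p_n^{\min},\infty)$ given $B_n$, so its maximum over $[p_n^{\min},\infty)$ is attained either at the boundary point $p_n^{\min}$ or at an interior stationary point of $\varphi_n$ with respect to $p_n$. Hence it suffices to (a) write down the stationarity equation $\nabla_{p_n}\varphi_n(p_n,B_n)=0$ for the Type 1 case, (b) algebraically manipulate it into the form~(\ref{eq-optPgivenB-type1}) involving the Lambert $W$ function, and (c) conclude that the maximizer is the larger of $p_n^{\min}$ and this stationary point.

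First I would compute the derivative. Writing $r_{n,\textnormal{s}}=r_n(p_n,B_n)-r_{n,\textnormal{e}}$ with $r_n(p_n,B_n)=B_n\log_2(1+\frac{g_np_n}{\sigma_n^2B_n})$, the quotient rule gives
\begin{align}
\nabla_{p_n}\varphi_n(p_n,B_n)=\frac{f_n'(r_{n,\textnormal{s}})\cdot\big(\nabla_{p_n}r_n\big)\cdot(p_n+p_n^{\textnormal{cir}})-f_n(r_{n,\textnormal{s}})}{(p_n+p_n^{\textnormal{cir}})^2},\notag
\end{align}
so a stationary point satisfies $f_n'(r_{n,\textnormal{s}})\cdot(\nabla_{p_n}r_n)\cdot(p_n+p_n^{\textnormal{cir}})=f_n(r_{n,\textnormal{s}})$. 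For Type 1, $f_n(x)=\kappa_n\ln(1+a_nx)$ and $f_n'(x)=\frac{\kappa_na_n}{1+a_nx}$, while $\nabla_{p_n}r_n=\frac{g_n}{(\sigma_n^2B_n+g_np_n)\ln2}$. Substituting and cancelling $\kappa_n$ yields
\begin{align}
\frac{a_n}{1+a_nr_{n,\textnormal{s}}}\cdot\frac{g_n}{(\sigma_n^2B_n+g_np_n)\ln2}\cdot(p_n+p_n^{\textnormal{cir}})=\ln(1+a_nr_{n,\textnormal{s}}).\notag
\end{align}
Setting $u:=\ln(1+a_nr_{n,\textnormal{s}})$ (so $1+a_nr_{n,\textnormal{s}}=e^{u}$ and $\frac{a_n}{1+a_nr_{n,\textnormal{s}}}=a_ne^{-u}$), the equation becomes $a_ne^{-u}\cdot\frac{g_n(p_n+p_n^{\textnormal{cir}})}{(\sigma_n^2B_n+g_np_n)\ln2}=u$, i.e. $u e^{u}=\frac{a_ng_nB_n}{(\sigma_n^2B_n+g_np_n)\ln2}\cdot(p_n+p_n^{\textnormal{cir}})\cdot\frac{1}{B_n}$. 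I must double-check the bookkeeping of the $B_n$ factor against the stated right-hand side of~(\ref{eq-optPgivenB-type1}); modulo that constant, the defining property $xe^{x}=z\Rightarrow x=W(z)$ of the principal Lambert branch gives $u=W\!\big(\tfrac{a_ng_nB_n}{(\sigma_n^2B_n+g_np_n)\ln2}(p_n+p_n^{\textnormal{cir}})\big)$, which is exactly~(\ref{eq-optPgivenB-type1}) after substituting back $u=\ln(1+a_nr_{n,\textnormal{s}})$ and $r_{n,\textnormal{s}}=B_n\log_2(1+\frac{g_np_n}{\sigma_n^2B_n})-r_{n,\textnormal{e}}$. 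Finally, since $\varphi_n$ is semistrictly quasiconvex on $[p_n^{\min},\infty)$, Lemma~\ref{lemma-quasiconvex}(ii) tells us the constrained maximizer is $\max\{p_n^{\min},\ p_n^{\star}\}$ where $p_n^{\star}$ solves~(\ref{eq-optPgivenB-type1}); I would also note that the argument of $W$ here is nonnegative so the principal branch is well-defined, and that $a_n<1$ (together with $a_n>0$) keeps $f_n$ within the scope of Condition~\ref{fconcave}.

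The main obstacle I anticipate is purely bookkeeping rather than conceptual: getting the constant multiplying $(p_n+p_n^{\textnormal{cir}})$ inside $W(\cdot)$ exactly right, in particular tracking the $B_n$ from $r_n=B_n\log_2(\cdots)$ versus the $1/\ln2$ from converting $\log_2$ to $\ln$ and the $1/(\sigma_n^2B_n+g_np_n)$ from $\nabla_{p_n}r_n$. A secondary point worth a sentence is justifying that it is legitimate to pass from ``stationary point of $\varphi_n$'' to ``unique solution of~(\ref{eq-optPgivenB-type1})'' — i.e. that~(\ref{eq-optPgivenB-type1}) indeed has a solution and that it coincides with the stationary point — which follows because the Lambert $W$ step is an equivalence (the principal branch is the inverse of $xe^x$ on $x\ge-1$, and here $u=\ln(1+a_nr_{n,\textnormal{s}})>0\ge-1$), so no spurious or lost roots are introduced. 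Everything else is a direct substitution into Lemma~\ref{lemma-quasiconvex}.
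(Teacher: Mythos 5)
Your proposal is correct and follows essentially the same route as the paper: invoke Lemma~\ref{lemma-quasiconvex}(ii), set $f_n'(r_{n,\textnormal{s}})\cdot(\nabla_{p_n}r_n)\cdot(p_n+p_n^{\textnormal{cir}})=f_n(r_{n,\textnormal{s}})$, and invert $ue^{u}=z$ via the principal Lambert branch. The one bookkeeping point you flag resolves in favor of the stated formula: $\nabla_{p_n}r_n=\frac{g_nB_n}{(\sigma_n^2B_n+g_np_n)\ln 2}$ already carries the factor $B_n$ (you dropped it when differentiating $B_n\log_2(1+\tfrac{g_np_n}{\sigma_n^2B_n})$), so the spurious $\tfrac{1}{B_n}$ in your $ue^{u}$ equation disappears and the argument of $W$ matches~(\ref{eq-optPgivenB-type1}) exactly.
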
 



\begin{lemma} \label{lem-optPgivenB-type2}
For Type 2 utility function: 
\mbox{$f_n(x)= \kappa_n \cdot (1- e^{-a_n x})$} with $a_n, \kappa_n >0$, $\varphi_n(p_n,B_n)$ given $B_n$ achieves its maximum at $p_n$ given by  $$  \textstyle{\max\{p_n^{\min} \textnormal{ of Eq.~(\ref{eqdefpnmin})},~ (\chi_n-1)\frac{{\sigma_n}^2B_n}{g_n}\}},$$ where $\chi_n$ satisfies 
\begin{talign}
\textstyle{  {\chi_n}^{a_n B_n/(\ln2)}e^{- a_n r_{n,\textnormal{e}}} + \frac{  {\sigma_n}^2B_n -  g_n p_n^{\textnormal{cir}}}{\chi_n{\sigma_n}^2\ln 2} =  \frac{a_nB_n}{\ln 2}    + 1  }.\label{eq-optPgivenB-type2}
\end{talign}
\end{lemma}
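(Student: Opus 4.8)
The plan is to follow exactly the same recipe that Lemma~\ref{lemma-quasiconvex} prescribes for optimizing $\varphi_n(p_n,B_n)$ over $p_n$ given a fixed $B_n$: since $\varphi_n$ is pseudoconvex (hence semistrictly quasiconvex) on $p_n\in[p_n^{\min},\infty)$, its maximum over that interval is attained either at the left endpoint $p_n^{\min}$ or at an interior stationary point, and we simply take whichever of the two gives the larger value. So the only real work is to locate the interior stationary point, i.e.\ to solve $\nabla_{p_n}\varphi_n(p_n,B_n)=0$ for the Type 2 form $f_n(x)=\kappa_n(1-e^{-a_nx})$, and then to recast that stationarity equation into the clean implicit form~(\ref{eq-optPgivenB-type2}) in terms of the auxiliary variable $\chi_n := 1+\tfrac{g_np_n}{\sigma_n^2B_n}$ (so that $r_n(p_n,B_n)=B_n\log_2\chi_n$ and $p_n=(\chi_n-1)\tfrac{\sigma_n^2B_n}{g_n}$).

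Concretely, first I would compute $\nabla_{p_n}\varphi_n$. Writing $r_{n,\textnormal{s}}=r_n(p_n,B_n)-r_{n,\textnormal{e}}$ and using the quotient rule,
\begin{align}
\nabla_{p_n}\varphi_n(p_n,B_n)
= \frac{f_n'(r_{n,\textnormal{s}})\,\bigl(\nabla_{p_n}r_n\bigr)\,(p_n+p_n^{\textnormal{cir}}) - f_n(r_{n,\textnormal{s}})}{(p_n+p_n^{\textnormal{cir}})^2},\nonumber
\end{align}
with $\nabla_{p_n}r_n = \tfrac{g_n}{(\sigma_n^2 B_n + g_n p_n)\ln 2}$. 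Setting the numerator to zero, substituting $f_n(x)=\kappa_n(1-e^{-a_n x})$ and $f_n'(x)=\kappa_n a_n e^{-a_n x}$, and cancelling the common factor $\kappa_n$, gives
\begin{align}
a_n e^{-a_n r_{n,\textnormal{s}}}\,\frac{g_n(p_n+p_n^{\textnormal{cir}})}{(\sigma_n^2B_n+g_np_n)\ln 2} = 1 - e^{-a_n r_{n,\textnormal{s}}}.\nonumber
\end{align}
Now I would change variables to $\chi_n$: note $\sigma_n^2 B_n + g_n p_n = \chi_n\sigma_n^2 B_n$, and $p_n = (\chi_n-1)\tfrac{\sigma_n^2 B_n}{g_n}$, so $g_n(p_n+p_n^{\textnormal{cir}}) = (\chi_n-1)\sigma_n^2 B_n + g_n p_n^{\textnormal{cir}}$. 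Also $r_{n,\textnormal{s}} = B_n\log_2\chi_n - r_{n,\textnormal{e}}$, hence $e^{-a_n r_{n,\textnormal{s}}} = \chi_n^{-a_n B_n/\ln 2}\,e^{a_n r_{n,\textnormal{e}}}$ (using $e^{-a_n B_n \log_2\chi_n} = \chi_n^{-a_n B_n/\ln 2}$). Dividing the stationarity equation through by $e^{-a_n r_{n,\textnormal{s}}}$ and rearranging, the $1 - e^{-a_n r_{n,\textnormal{s}}}$ on the right becomes $\chi_n^{a_n B_n/\ln 2}e^{-a_n r_{n,\textnormal{e}}} - 1$, and a short manipulation of the left-hand coefficient $\tfrac{(\chi_n-1)\sigma_n^2 B_n + g_n p_n^{\textnormal{cir}}}{\chi_n\sigma_n^2 B_n}\cdot\tfrac{a_n}{\ln 2}$ — splitting off the $\chi_n-1$ piece as $\tfrac{a_n B_n}{\chi_n\ln 2}(\chi_n-1)/B_n$ type terms — should collapse to exactly $\tfrac{a_n B_n}{\ln 2} + 1 - \tfrac{\sigma_n^2 B_n - g_n p_n^{\textnormal{cir}}}{\chi_n\sigma_n^2\ln 2}$, which is~(\ref{eq-optPgivenB-type2}) after moving the last fraction to the other side. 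Finally, translating $\chi_n$ back via $p_n = (\chi_n-1)\tfrac{\sigma_n^2B_n}{g_n}$ yields the stated candidate, and comparing with $p_n^{\min}$ (as in Lemma~\ref{lemma-quasiconvex}(ii)) completes the proof.

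The main obstacle I anticipate is purely algebraic bookkeeping: getting the coefficient $\tfrac{(\chi_n-1)\sigma_n^2B_n + g_n p_n^{\textnormal{cir}}}{\chi_n\sigma_n^2B_n}\cdot\tfrac{a_n}{\ln2}$ to reorganize exactly into the form $\tfrac{a_nB_n}{\ln2}+1-\tfrac{\sigma_n^2B_n-g_np_n^{\textnormal{cir}}}{\chi_n\sigma_n^2\ln2}$ requires careful handling of the $1/\chi_n$ terms and the $\ln 2$ factors, and there is some risk the target equation~(\ref{eq-optPgivenB-type2}) as printed implicitly assumes a particular normalization (e.g.\ the factor $B_n$ appearing without an accompanying $a_n$ inside $\tfrac{a_nB_n}{\ln2}$ suggests a cancellation of $\sigma_n^2B_n$ in numerator and denominator that I should verify is dimensionally consistent). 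A secondary, easier point is justifying that the stationary point is genuinely interior to $[p_n^{\min},\infty)$ and is a maximizer rather than a minimizer — but this is immediate from pseudoconvexity (Lemma~\ref{lemma-quasiconvex}(i)), since a pseudoconvex function on an interval has no interior local maximum unless it is constant, so the relevant stationary point together with the endpoint comparison of Lemma~\ref{lemma-quasiconvex}(ii) is exactly what is needed; I would cite Lemma~\ref{lemma-quasiconvex} for this rather than reprove it.
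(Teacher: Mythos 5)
Your approach is the same as the paper's: invoke Lemma~\ref{lemma-quasiconvex}(ii) to reduce the problem to computing the interior stationary point of $\varphi_n(\cdot,B_n)$ and taking the maximum with $p_n^{\min}$ of Eq.~(\ref{eqdefpnmin}), then obtain the stationarity equation from the quotient rule and rewrite it in the variable $\chi_n=1+\frac{g_np_n}{\sigma_n^2B_n}$. One concrete error needs fixing before the bookkeeping closes: you wrote $\nabla_{p_n}r_n=\frac{g_n}{(\sigma_n^2B_n+g_np_n)\ln 2}$, but the correct value is $\nabla_{p_n}r_n=\frac{g_nB_n}{(\sigma_n^2B_n+g_np_n)\ln 2}=\frac{g_n}{\chi_n\sigma_n^2\ln 2}$ (only one of the two factors of $B_n$ cancels). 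Hence your left-hand coefficient should read $\frac{(\chi_n-1)\sigma_n^2B_n+g_np_n^{\textnormal{cir}}}{\chi_n\sigma_n^2\ln 2}\cdot\frac{a_n}{\ln 2^{\,0}}$, i.e.\ without the extra $B_n$ in the denominator; with that extra $B_n$ you would land on $\frac{a_n}{\ln 2}$ rather than the required $\frac{a_nB_n}{\ln 2}$ and the claimed form could never emerge. With the corrected derivative the coefficient splits as $\frac{B_n}{\ln 2}+\frac{g_np_n^{\textnormal{cir}}-\sigma_n^2B_n}{\chi_n\sigma_n^2\ln 2}$ (exactly the paper's intermediate step), and multiplying by $a_n$ and rearranging against $e^{a_nr_{n,\textnormal{s}}}-1=\chi_n^{a_nB_n/\ln 2}e^{-a_nr_{n,\textnormal{e}}}-1$ yields
\[
\chi_n^{a_nB_n/\ln 2}e^{-a_nr_{n,\textnormal{e}}}+\frac{a_n\left(\sigma_n^2B_n-g_np_n^{\textnormal{cir}}\right)}{\chi_n\sigma_n^2\ln 2}=\frac{a_nB_n}{\ln 2}+1.
\]
Your instinct that something is off about the printed Eq.~(\ref{eq-optPgivenB-type2}) is justified, though for a different reason than the one you give: the second left-hand term there is missing the factor $a_n$, a slip the paper's own appendix makes in its final rearrangement, so the equation displayed above is the one your (corrected) derivation should target. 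The endpoint-versus-stationary-point argument is handled, as you propose, by citing Lemma~\ref{lemma-quasiconvex} rather than reproving it, which is also what the paper does.
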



\begin{lemma} \label{lem-optPgivenB-type3}
For Type 3 utility function: 
$f_n(x)=\kappa_n {x}^{a_n}$ with $\kappa_n >0$ and $0<a_n<1$, $\varphi_n(p_n,B_n)$ given $B_n$ achieves its maximum at $p_n$ being
\begin{talign}
&\textstyle{\max\hspace{-3pt}\bigg\{\hspace{-5pt}\begin{array}{l}p_n^{\min} \textnormal{}\\ \textnormal{of Eq.~(\ref{eqdefpnmin})},\end{array}\hspace{-2pt}\bigg[\hspace{-4pt}\textstyle{\begin{array}{l}
\exp\hspace{-2pt}\big(a_n  \hspace{-2pt} +\hspace{-2pt} \frac{r_{n,\textnormal{e}} \ln 2 }{B_n} + \\
W((\frac{a_ng_np_n^{\textnormal{cir}} }{{\sigma_n}^2B_n}\hspace{-2pt}-\hspace{-2pt}a_n)e^{ -a_n   - \frac{r_{n,\textnormal{e}} \ln 2 }{B_n} })\big)\hspace{-2pt} -\hspace{-2pt}1
\end{array}}\hspace{-5pt}\bigg]\hspace{-2pt}\frac{{\sigma_n}^2B_n}{g_n}}\hspace{-3pt}\bigg\}.   
\end{talign}
\end{lemma}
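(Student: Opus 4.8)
The plan is to reduce this to a one-dimensional problem in $p_n$ and then solve the resulting transcendental stationarity equation in closed form via the Lambert $W$ function, following the same template as the proofs of Lemmas~\ref{lem-optPgivenB-type1} and~\ref{lem-optPgivenB-type2}. By Lemma~\ref{lemma-quasiconvex}(ii), maximizing $\varphi_n(\cdot,B_n)$ over $[p_n^{\min},\infty)$ amounts to computing a stationary point of $\varphi_n(\cdot,B_n)$ with respect to $p_n$ and then taking the larger of that point and $p_n^{\min}$ from~(\ref{eqdefpnmin}). So the core task is to evaluate this stationary point for $f_n(x)=\kappa_n x^{a_n}$ with $0<a_n<1$. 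Differentiating $\varphi_n=\frac{f_n(r_{n,\textnormal{s}})}{p_n+p_n^{\textnormal{cir}}}$ from~(\ref{eqdefvarphi}) with $r_{n,\textnormal{s}}=r_n(p_n,B_n)-r_{n,\textnormal{e}}$, setting the numerator to zero, and dividing out the positive factor $\kappa_n r_{n,\textnormal{s}}^{a_n-1}$ (valid since $r_{n,\textnormal{s}}>0$ on the interior of $[p_n^{\min},\infty)$ by Condition~\ref{conditionrn}) gives the first-order condition $a_n\,\frac{\partial r_n}{\partial p_n}\,(p_n+p_n^{\textnormal{cir}})=r_n(p_n,B_n)-r_{n,\textnormal{e}}$, where $\frac{\partial r_n}{\partial p_n}=\frac{g_nB_n}{(\sigma_n^2B_n+g_np_n)\ln 2}$.

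Next I would introduce the substitution $u:=1+\frac{g_np_n}{\sigma_n^2B_n}$, so that $p_n=(u-1)\frac{\sigma_n^2B_n}{g_n}$, $\sigma_n^2B_n+g_np_n=u\,\sigma_n^2B_n$, and $r_n(p_n,B_n)=\frac{B_n}{\ln 2}\ln u$. Substituting into the first-order condition, multiplying through by $\frac{u\ln 2}{B_n}$, and collecting terms reduces the condition to $u\big(a_n+\frac{r_{n,\textnormal{e}}\ln 2}{B_n}-\ln u\big)=a_n-\frac{a_ng_np_n^{\textnormal{cir}}}{\sigma_n^2B_n}$. Writing $v:=\ln u$ and then $m:=v-a_n-\frac{r_{n,\textnormal{e}}\ln 2}{B_n}$ turns this into $m\,e^{m}=\big(\frac{a_ng_np_n^{\textnormal{cir}}}{\sigma_n^2B_n}-a_n\big)e^{-a_n-\frac{r_{n,\textnormal{e}}\ln 2}{B_n}}$, whose solution is $m=W\big((\frac{a_ng_np_n^{\textnormal{cir}}}{\sigma_n^2B_n}-a_n)e^{-a_n-\frac{r_{n,\textnormal{e}}\ln 2}{B_n}}\big)$ with $W$ the principal branch. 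Back-substituting $v=m+a_n+\frac{r_{n,\textnormal{e}}\ln 2}{B_n}$ and $p_n=(e^{v}-1)\frac{\sigma_n^2B_n}{g_n}$ produces exactly the bracketed expression in the statement, and taking its maximum with $p_n^{\min}$ via Lemma~\ref{lemma-quasiconvex}(ii) completes the proof.

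The step I expect to be the main obstacle is not the algebra but the justification that the closed-form expression is indeed the right stationary point: one must verify that the argument of $W$ lies in $[-e^{-1},\infty)$ so that the principal branch is real-valued, and that among the possible roots of $m\,e^{m}=(\cdot)$ the principal branch $m\ge-1$ is the one corresponding to the interior maximizer rather than a spurious root. I would handle this using the one-dimensional structure behind Lemma~\ref{lemma-quasiconvex}: on $[p_n^{\min},\infty)$ the function $\varphi_n(\cdot,B_n)$ is nonnegative, continuous, and tends to $0$ as $p_n\to\infty$ (since $f_n$ grows only like a power of $\log p_n$ while the denominator grows linearly), so it attains its maximum either at a unique interior critical point or at $p_n^{\min}$; tracking signs through the monotone change of variables $p_n\leftrightarrow u=e^{v}$ then pins down the branch, and in the degenerate case where no admissible interior critical point exists the formula returns a value at most $p_n^{\min}$, so the $\max$ in the statement correctly selects $p_n^{\min}$.
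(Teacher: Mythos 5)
Your proposal is correct and follows essentially the same route as the paper's proof: set $\partial_{p_n}\varphi_n=0$, cancel the factor $\kappa_n r_{n,\textnormal{s}}^{a_n-1}$ to get $a_n(\nabla_{p_n}r_{n,\textnormal{s}})(p_n+p_n^{\textnormal{cir}})=r_{n,\textnormal{s}}$, substitute $\chi_n=1+\tfrac{g_np_n}{\sigma_n^2B_n}$, solve the resulting equation via the Lambert $W$ function, and invoke Lemma~\ref{lemma-quasiconvex}(ii) to take the maximum with $p_n^{\min}$. Your extra attention to the domain and branch of $W$ is a welcome addition the paper omits (one can check the argument of $W$ equals $-a_n(1-\tfrac{g_np_n^{\textnormal{cir}}}{\sigma_n^2B_n})e^{-a_n-r_{n,\textnormal{e}}\ln 2/B_n}\ge -a_ne^{-a_n}>-e^{-1}$, so the principal branch is always real), but it does not change the substance of the argument.
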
 

The proofs of Lemmas~\ref{lemma-quasiconvex},~\ref{lem-optPgivenB-type1},~\ref{lem-optPgivenB-type2},~\ref{lem-optPgivenB-type3} are provided in
Section~\ref{appsec} below.



\subsection[]{Optimizing $\bm{B}$ only}\label{algo:B_n_only}


Given $p_n$, we define $B_n^{\min}$ as the value of $B_n$ which causes $r_n(p_n,B_n)$ to be $r_n^{\min}$; i.e., $B_n^{\min}\log_2(1+\frac{g_n p_n}{{\sigma_n}^2B_n^{\min}}) = r_n^{\min}$. 

Then we have the following optimization.
\begin{subequations} 
\begin{talign} 
 &\max_{B_n} ~  \sum_{n \in \mathcal{N}} c_n \varphi_n(p_n,B_n)
  \\ 
    \textnormal{subject to: }   & \sum_{n \in \mathcal{N}} B_n \le B_{\textnormal{total}}, \\ 
    &
    B_n \geq B_n^{\min}, \text{for any } n \in \mathcal{N}.
\end{talign}
\end{subequations}
From the first result of Lemma~\ref{lemma:fn_concave}, $f_n(r_{n,\textnormal{s}}(p_n,B_n))$ is concave with respect to $B_n$ given $p_n$. Hence, $\varphi_n(p_n,B_n)$ is concave with respect to $B_n$ given $p_n$. Then the above problem belongs to convex optimization.
The Lagrange function of the problem is as follows:
\begin{talign}\label{lagrange function:B_n}
    & L(p_n,B_n,\alpha_n,\zeta) \nonumber \\
    & = - \sum_{n \in \mathcal{N}} c_n\varphi_n(p_n,B_n) + \sum_{n \in \mathcal{N}} \alpha_n  \cdot( B_n^{\min}-B_n) \nonumber\\
    &+ \zeta \cdot (\sum_{n \in \mathcal{N}} B_n - B_{\textnormal{total}})
\end{talign}

After applying KKT conditions to (\ref{lagrange function:B_n}), we get: 
\begin{talign}
    &\frac{\partial L}{\partial p_n} = -c_n\nabla_{p_n} \varphi_n(p_n,B_n)= 0
     \\
    &\frac{\partial L}{\partial B_n} = -c_n\nabla_{B_n}
    \varphi_n(p_n,B_n) -\alpha_n + \zeta = 0 \label{lagrange function:B_n partial B_n}
    \\
    &\alpha_n \cdot ( B_n^{\min}-B_n) =0
     \\
    &\zeta \cdot (\sum_{n \in \mathcal{N}} B_n - B_{\textnormal{total}}) =0 \label{equa:zeta B_n}
\end{talign}







Since $\nabla_{B_n} \varphi_n(p_n,B_n) >0$ and $\alpha_n \geq 0$, only $\zeta > 0$ could make (\ref{lagrange function:B_n partial B_n}) hold. Hence from (\ref{equa:zeta B_n}), we could tell that $\sum_{n \in \mathcal{N}} B_n =B_{\textnormal{total}}$. Let $\widehat{B_n}(\zeta)$ be the solution of $B_n$ to the following equation:
\begin{align}\label{equa:diff B_n zeta}
c_n\nabla_{B_n} \varphi_n(p_n,B_n) = \zeta.
\end{align}
It is straightforward to derive the expression of $\widehat{B_n}(\zeta)$ based on (\ref{equa:diff B_n zeta}). 
For any $x>0$, we can prove $\log_2(1 + x) - \frac{x}{(1+x)\ln{2}} > 0$. Then we can prove that $\nabla_{B_n} \varphi_n(p_n,B_n) $ is decreasing as $B_n$ increases.
For each $n \in \mathcal{N}$, there are two possible cases:
\begin{itemize}
\item If $\alpha_n=0$, then $B_n \geq B_n^{\min}$, $c_n\nabla_{B_n} \varphi_n(p_n,B_n) = \zeta$ so that $B_n = \widehat{B_n}(\zeta) \geq B_n^{\min}$.
\item If $\alpha_n>0$, then $B_n = B_n^{\min}$, $c_n\nabla_{B_n} \varphi_n(p_n,B_n)  = \zeta - \alpha_n <  \zeta $.  Thus  $B_n = B_n^{\min} > \widehat{B_n}(\zeta)$.
\end{itemize}
Summarizing both cases and we can derive $B_n$ as follows: 
\begin{talign}\label{Eqn:widehat_B_n}
B_n = \max \{\widehat{B_n}(\zeta),~B_n^{\min}\}
\end{talign}
and $\zeta$  could be derived from:
\begin{talign}  
\sum_{n \in \mathcal{N}}\max \{\widehat{B_n}(\zeta),B_n^{\min}\} = B_{\text{total}}. \label{bnzeta}
\end{talign}

As $\nabla_{B_n} \varphi_n(p_n,B_n) $ is decreasing as $B_n$ increases. Then $\widehat{B_n}(\zeta)$ decreases as $\zeta$ increases. Let $\zeta^{\#}$ be the solution of  $\zeta$ to~(\ref{bnzeta}). We use the bisection search to find $\zeta^{\#}$. The following discussion is similar to that of Appendix~\ref{applambdastar}. We   let $F(\zeta)$ denote $\sum_{n \in \mathcal{N}}\max \{\widehat{B_n}(\zeta),B_n^{\min}\}$. Then $F(\zeta)$ is \mbox{non-increasing} as $\zeta$ increases. 

 For the  bisection method, we use $0$ as the initial lower bound. We can find the initial upper bound as follows. Starting with a random positive number $\theta$. If $F(\theta)$  is  less than $ B_{\textnormal{total}}$, we use $\theta$ as the initial upper bound. If $F(\theta)$ equals $ B_{\textnormal{total}}$, then $\theta$ is just our desired $\zeta^{\#}$. If $F(\theta)$  is  greater than $ B_{\textnormal{total}}$, we check $F(2\theta)$. Similarly, if $F(2\theta)$ is  less than $ B_{\textnormal{total}}$, we use $2\theta$ as the initial upper bound. If $F(2\theta)$  equals $ B_{\textnormal{total}}$, then $2\theta$ is just our desired $\zeta^{\#}$. If $F(2\theta)$ is  greater than $ B_{\textnormal{total}}$, we check $F(2^2\theta)$. The process continues. Basically, we find $i$ such that $F(2^{i-1}\theta)$ is  greater than $ B_{\textnormal{total}}$, and $F(2^i\theta)$ is  less than $ B_{\textnormal{total}}$, then we use $2^i\theta$ as the initial upper bound. If there exists $i$ which makes $F(2^i\theta)$ equal $ B_{\textnormal{total}}$, then $2^i\theta$ is just our desired $\zeta^{\#}$.

With the initial lower bound and the  initial upper bound explained above, the remaining process to  find $\zeta^{\#}$ follows from the standard bisection method. In each iteration, the bisection method divides the interval $[a,b]$ in two parts  by computing the midpoint $c = (a+b) / 2$ of the interval and the value of   $F(c)$. If $F(c)$ equals $ B_{\textnormal{total}}$, then the process has succeeded and $c$ is just our desired $\zeta^{\#}$. Otherwise, 
if $F(c)$ is greater than $ B_{\textnormal{total}}$, we update $a$ to $c$ so that the next iteration starts with the interval $[c,b]$; if $F(c)$ is less than $ B_{\textnormal{total}}$, we update $b$ to $c$ so that the next iteration starts with the interval $[a,c]$. The  bisection method   converges when $F(c)$ is close to (but should be no greater than) $ B_{\textnormal{total}}$.

After using the bisection method to find $\zeta^{\#}$, we compute $B_n$ as 
 $\max \{\widehat{B_n}(\zeta^{\#}),~B_n^{\min}\}$ for each $n \in \mathcal{N}$.

\subsection{Alternating optimization}

The algorithm for alternating optimization is to combine the algorithms of ``optimizing $\bm{p}$ only'' in \ref{algo:p_n_only} and ``optimizing $\bm{B}$ only'' in \ref{algo:B_n_only} to perform optimization in an alternating manner. 

Specifically, we treat ``optimizing $p_n$ only'' first and then ``optimizing $B_n$ only'' as a round. After each round, we will compare the new solution with that of the last round. If the relative difference between them is less than our pre-determined threshold, we consider that alternating optimization of $\bm{p}$ and $\bm{B}$ has converged.

\section{Proof of Lemmas for Appendix \ref{sec:baseline model}}\label{appsec}

\subsection{Proof of Lemma~\ref{lemma-quasiconvex}}


We first have the following properties of $\varphi_n(p_n,B_n)$.
\begin{itemize}
\item From Lemma~\ref{lemma:fn_concave}, the numerator $f_n(r_{n,\textnormal{s}}(p_n,B_n))$ of $\varphi_n(p_n,B_n)$ is jointly concave in $p_n$ and $B_n$. Then from Result (iii) of Theorem 2.3.8 in\footnote{Theorem 2.3.8 of~\cite{cambini2008generalized} and Lemma 2.2 of~\cite{ivanov2020characterization} are about ``semistrictly~quasiconvex'', but can easily be extended to ``semistrictly~quasiconcave'' since a function $g(\cdot)$ is semistrictly~quasiconvex  if and only if $-g(\cdot)$ is semistrictly~quasiconcave. Specifically, a function $g(\cdot)$ is semistrictly~quasiconvex (resp., semistrictly~quasiconcave) if and only if for any $x,y$, the result $g(y)<g(x)$ (resp., $g(y)>g(x)$) implies that $g(x+t(y-x))$ is smaller (resp., greater) than $g(x)$ for any $t\in(0,1)$. \label{quasiconcave} }~\cite{cambini2008generalized}, $\varphi_n(p_n,B_n)$ is semistrictly~quasiconcave which means that Result (i) of Lemma~\ref{lemma-quasiconvex} is  proved, where the definition of ``semistrictly~quasiconcave'' is provided in Footnote~\ref{quasiconcave}.
\item From Lemma 2.2 of~\cite{ivanov2020characterization}, a scalar function $g(\cdot)$ over a convex set $\mathcal{X}$ is semistrictly~quasiconcave if and only if any closed segment $\mathcal{S} \subset \mathcal{X}$ can be split into three segments $\mathcal{S}_1, \mathcal{S}_2, \mathcal{S}_3$ such that $g(\cdot)$ is increasing in $\mathcal{S}_1$,
constant in $\mathcal{S}_2$, and decreasing  in $\mathcal{S}_3$. Note that $\mathcal{S}_1, \mathcal{S}_2, \mathcal{S}_3$ can be $\emptyset$.
\end{itemize}
Then we define $p_n^{\secsec}$ as the transmission power as:
\begin{talign}
\text{$p_n^{\secsec}:=\frac{(2^{\frac{r_{n,\textnormal{e}}}{B_n}}-1){\sigma_n}^2B_n}{g_n}$ so that $B_n\log_2(1+\frac{g_n p_n^{\secsec}}{{\sigma_n}^2B_n}) = r_{n,\textnormal{e}}$.}
\end{talign}
Combining the above with Eq.~(\ref{eqdefvarphi}), we could derive that: 
$\varphi_n(p_n^{\secsec},B_n)=0$, $\lim_{p_n \to \infty}\varphi_n(p_n,B_n)=0$, and $\varphi_n(p_n,B_n)>0$ for $p_n \in (p_n^{\secsec},\infty)$, we know that\footnote{For the specific types of utility functions $f_n(\cdot)$ used in Section \ref{secTypicalUtility} of this paper, we can show $\widehat{p_n} =\widetilde{p_n}$ in principle, but it is not the focus of our paper and does not impact the validness of our results.} there exists $\widehat{p_n}$ and $\widetilde{p_n}$ (possibly the same) such that $\varphi_n(p_n,B_n)$ is increasing  in $[p_n^{\secsec},\widehat{p_n}]$,
constant in $[\widehat{p_n},\widetilde{p_n}]$, and  decreasing in $[\widetilde{p_n},\infty)$. 
 
The condition $r_n^{\min} \geq r_{n,\textnormal{e}}$ means $p_n^{\min} \geq p_n^{\secsec}$. Hence, the above analysis induces the following  cases.
\begin{itemize}
\item If $p_n^{\min} < \widehat{p_n}$, then $\varphi_n(p_n,B_n)$ is increasing in $[p_n^{\min},\widehat{p_n}]$,
constant in $[\widehat{p_n},\widetilde{p_n}]$, and  decreasing in $[\widetilde{p_n},\infty)$. 
\item If $\widehat{p_n} \leq p_n^{\min} \leq \widetilde{p_n}$, then $\varphi_n(p_n,B_n)$ is 
constant in $[\widehat{p_n},\widetilde{p_n}]$, and  decreasing in $[\widetilde{p_n},\infty)$.
\item If $\widetilde{p_n} < p_n^{\min}$, then $\varphi_n(p_n,B_n)$ is decreasing in $[p_n^{\min},\infty)$.
\end{itemize}
Based on the above, Result (ii) of Lemma~\ref{lemma-quasiconvex} is also proved. \qed



\subsection{Proof of Lemma~\ref{lem-optPgivenB-type1}}

Given Section~\ref{secTypicalUtility},   we consider $b_n=1$ so that Type 1 utility function is
$f_n(x)=\kappa_n \ln (1+ a_n x)$ with $\kappa_n >0$ and $0<a_n<1$.

We write $r_{n,\textnormal{s}}(p_n,B_n)$ as $r_{n,\textnormal{s}}$ for simplicity below.
Given $B_n$, by letting the derivative of $\varphi_n(p_n,B_n)$ with respect to $p_n$ be zero, we obtain
\begin{align}
\frac{\kappa_n a_n }{1+ a_n r_{n,\textnormal{s}}} \cdot (\nabla_{p_n} r_{n,\textnormal{s}}) \cdot (p_n + p_n^{\textnormal{cir}}) = \kappa_n \ln (1+ a_n r_{n,\textnormal{s}}) ,
\end{align}
which induces
\begin{align}
 a_n  \cdot (\nabla_{p_n} r_{n,\textnormal{s}}) \cdot (p_n + p_n^{\textnormal{cir}}) = (1+ a_n r_{n,\textnormal{s}})  \ln (1+ a_n r_{n,\textnormal{s}}) .
\end{align}
This means
\begin{align}
 \ln (1+ a_n r_{n,\textnormal{s}}) =  W(a_n  \cdot (\nabla_{p_n} r_{n,\textnormal{s}}) \cdot (p_n + p_n^{\textnormal{cir}}) ),
\end{align}
for $W(\cdot)$ being the principal branch of the Lambert W function; i.e., $W(z)$ for $z \geq -e^{-1}$ is the solution of $x \geq -1$ to the equation $x e^{x} = z$.

From $ r_{n,\textnormal{s}}(p_n,B_n):=B_n\log_2(1+\frac{g_n p_n}{{\sigma_n}^2B_n}) - r_{n,\textnormal{e}}  $, we have
\begin{align}
& (\nabla_{p_n} r_{n,\textnormal{s}}) \cdot (p_n + p_n^{\textnormal{cir}})\nonumber \\ & = \frac{\frac{g_n }{{\sigma_n}^2}}{(1+\frac{g_n p_n}{{\sigma_n}^2B_n})\ln 2}  \cdot  (p_n + p_n^{\textnormal{cir}}) \nonumber \\ & =  \frac{g_nB_n}{({\sigma_n}^2B_n+g_n p_n)\ln 2}  \cdot  (p_n + p_n^{\textnormal{cir}}). 
\end{align}
Then we can solve
\begin{align}
 &\ln (1+ a_n \cdot (B_n\log_2(1+\frac{g_n p_n}{{\sigma_n}^2B_n})- r_{n,\textnormal{e}})) \nonumber \\ & =  W(\frac{a_ng_nB_n}{({\sigma_n}^2B_n+g_n p_n)\ln 2}  \cdot  (p_n + p_n^{\textnormal{cir}})  ). \label{abd}
\end{align}
Using Result (ii) of Lemma~\ref{lemma-quasiconvex}, the optimal $p_n$ given $B_n$  for the  weighted sum-UEE optimization is given by the maximum of the following two numbers: $p_n^{\min} $ of Eq.~(\ref{eqdefpnmin}), and the solution $p_n$ to the above Eq.~(\ref{abd}). \qed

\subsection{Proof of Lemma~\ref{lem-optPgivenB-type2}}

 Given Section~\ref{secTypicalUtility},   we consider  $c_n=0$  so that   Type 2 utility function is
\mbox{$f_n(x)= \kappa_n \cdot (1- e^{-a_n x})$} with $a_n, \kappa_n >0$.

We write $r_{n,\textnormal{s}}(p_n,B_n)$ as $r_{n,\textnormal{s}}$ for simplicity below.
Given $B_n$, by letting the derivative of $\varphi_n(p_n,B_n)$ with respect to $p_n$ be zero, we obtain
\begin{align}
\kappa_na_ne^{-a_n r_{n,\textnormal{s}}} (\nabla_{p_n} r_{n,\textnormal{s}}) \cdot (p_n + p_n^{\textnormal{cir}}) = \kappa_n \cdot  (1-e^{-a_n r_{n,\textnormal{s}}} ),
\end{align}
which induces
\begin{align}
a_n\cdot (\nabla_{p_n} r_{n,\textnormal{s}}) \cdot (p_n + p_n^{\textnormal{cir}}) = e^{a_n r_{n,\textnormal{s}}}-1 .
\end{align}

With
$\chi_n: = 1+\frac{g_n p_n}{{\sigma_n}^2B_n}$,
we further obtain
\begin{align}
& (\nabla_{p_n} r_{n,\textnormal{s}}) \cdot (p_n + p_n^{\textnormal{cir}})\nonumber \\ & = \frac{\frac{g_n }{{\sigma_n}^2}}{(1+\frac{g_n p_n}{{\sigma_n}^2B_n})\ln 2}  \cdot  (p_n + p_n^{\textnormal{cir}}) \nonumber \\ & = \frac{ (\chi_n-1){\sigma_n}^2B_n + g_n p_n^{\textnormal{cir}}}{\chi_n{\sigma_n}^2\ln 2}  \nonumber \\ & = \frac{B_n}{\ln 2} +  \frac{    g_n p_n^{\textnormal{cir}}-{\sigma_n}^2B_n}{\chi_n{\sigma_n}^2\ln 2} .
\end{align}
Then
\begin{align}
&   a_n\cdot (\frac{B_n}{\ln 2} +  \frac{    g_n p_n^{\textnormal{cir}}-{\sigma_n}^2B_n}{\chi_n{\sigma_n}^2\ln 2} )  \nonumber \\ & = e^{a_n B_n\log_2 \chi_n} e^{- a_n r_{n,\textnormal{e}}}- 1\nonumber \\ & = {\chi_n}^{a_n B_n/(\ln2)}e^{- a_n r_{n,\textnormal{e}}}-1.
\end{align}
We further get
\begin{align}
&    {\chi_n}^{a_n B_n/(\ln2)}e^{- a_n r_{n,\textnormal{e}}} + \frac{  {\sigma_n}^2B_n -  g_n p_n^{\textnormal{cir}}}{\chi_n{\sigma_n}^2\ln 2} = a_n\frac{B_n}{\ln 2}    + 1.  \label{abd2} 
\end{align}
Using Result (ii) of Lemma~\ref{lemma-quasiconvex}, the optimal $p_n$ given $B_n$  for the  weighted sum-UEE optimization is given by the maximum of the following two numbers: $p_n^{\min} $ of Eq.~(\ref{eqdefpnmin}), and $(\chi_n-1)\frac{{\sigma_n}^2B_n}{g_n}$, where $\chi_n$ is the solution  to the above Eq.~(\ref{abd2}). \qed

\subsection{Proof of Lemma~\ref{lem-optPgivenB-type3}}

Given Section~\ref{secTypicalUtility},   we consider  $d_n=0$     so that  Type 3 utility function is
$f_n(x)=\kappa_n {x}^{a_n}$ with $\kappa_n >0$ and \mbox{$0<a_n<1$.}

We write $r_{n,\textnormal{s}}(p_n,B_n)$ as $r_{n,\textnormal{s}}$ for simplicity below.
Given $B_n$, by letting the derivative of $\varphi_n(p_n,B_n)$ with respect to $p_n$ be zero, we obtain
\begin{align}
\kappa_n a_n {r_{n,\textnormal{s}}}^{a_n-1}\cdot (\nabla_{p_n} r_{n,\textnormal{s}}) \cdot (p_n + p_n^{\textnormal{cir}}) = \kappa_n {r_{n,\textnormal{s}}}^{a_n} .  
\end{align}
Then we have
\begin{align}
 a_n  \cdot (\nabla_{p_n} r_{n,\textnormal{s}}) \cdot (p_n + p_n^{\textnormal{cir}}) = {r_{n,\textnormal{s}}} . \label{sbsb}
\end{align}

From $ r_{n,\textnormal{s}}(p_n,B_n):=B_n\log_2(1+\frac{g_n p_n}{{\sigma_n}^2B_n}) - r_{n,\textnormal{e}}  $, we have
    \begin{align}
  & \nabla_{p_n} r_{n,\textnormal{s}}   =  \frac{\frac{g_n }{{\sigma_n}^2}}{(1+\frac{g_n p_n}{{\sigma_n}^2B_n})\ln 2},   
    \end{align}
which is used in~(\ref{sbsb}) so that 
\begin{align}
 a_n \cdot \frac{\frac{g_n }{{\sigma_n}^2}}{(1+\frac{g_n p_n}{{\sigma_n}^2B_n})\ln 2}  \cdot  (p_n + p_n^{\textnormal{cir}}) = B_n\log_2(1+\frac{g_n p_n}{{\sigma_n}^2B_n})- r_{n,\textnormal{e}}.
\end{align}


With
$\chi_n: = 1+\frac{g_n p_n}{{\sigma_n}^2B_n}$,
we further obtain
\begin{align}
 a_n \cdot \frac{\frac{g_n }{{\sigma_n}^2}}{\chi_n}  ((\chi_n-1)\frac{{\sigma_n}^2B_n}{g_n} + p_n^{\textnormal{cir}}) + r_{n,\textnormal{e}} \ln 2 = B_n\ln\chi_n, 
\end{align}
which is simplified as
\begin{align}
 a_n   + \frac{r_{n,\textnormal{e}} \ln 2 }{B_n} + (\frac{a_ng_np_n^{\textnormal{cir}} }{{\sigma_n}^2B_n}-a_n )\frac{1}{\chi_n}   =  \ln\chi_n .
\end{align}
Then it follows that
\begin{align}
\chi_n   = \textstyle{\exp\big(a_n   + \frac{r_{n,\textnormal{e}} \ln 2 }{B_n} + (\frac{a_ng_np_n^{\textnormal{cir}} }{{\sigma_n}^2B_n}-a_n)\exp( -a_n   - \frac{r_{n,\textnormal{e}} \ln 2 }{B_n} )\big)}  . \label{abd3}
\end{align} 
Using Result (ii) of Lemma~\ref{lemma-quasiconvex}, the optimal $p_n$ given $B_n$  for the  weighted sum-UEE optimization is given by the maximum of the following two numbers: $p_n^{\min} $ of Eq.~(\ref{eqdefpnmin}), and $(\chi_n-1)\frac{{\sigma_n}^2B_n}{g_n}$, where $\chi_n$ is given by the above Eq.~(\ref{abd3}). \qed


\section{Additional Simulation Results} \label{appAdditional}

We provide more simulation results here. Note that our theoretical analysis and simulation results  apply to uplink communications as well as downlink communications. 
For downlink communications, the circuit power $p_n^{\textnormal{cir}}$ in the denominator $p_n + p_n^{\textnormal{cir}}$ of (\ref{eqdefvarphi}) is the additional power that the server consumes to transmit signals with power $p_n$ to user $U_n$, and it is possible that $p_n^{\textnormal{cir}}$ can be the same for different $n$ for such downlink communications. As noted in Section \ref{experiment:parameter}, the simulations set the circuit power $p_n^{cir}$ as 2\,dBm (i.e., 1.6\,milliwatts) for each $n$.




\subsection{Impact by the number of users}\label{appImpactnumberusers}
We  compare the UEE under different numbers of users $N$: 10, 20, 30, 40, and 50 under three different user scenarios (i.e., three utility functions in Fig. \ref{fig:real_simul}(a)). Fig. \ref{fig:N}(a) shows how the sum-UEE changes with the increase of the number of users $N$. It could be seen that as $N$ becomes larger, the sum-UEE also increases. They are positively correlated. In contrast, the average UEE tends to decrease as the number of users $N$ increases, which could be seen in Fig.~\ref{fig:N}(b). That is due to a reduction in the bandwidth  allocated to each user. 

\begin{figure}[ht!]
    \centering
    \includegraphics[width=0.9\linewidth]{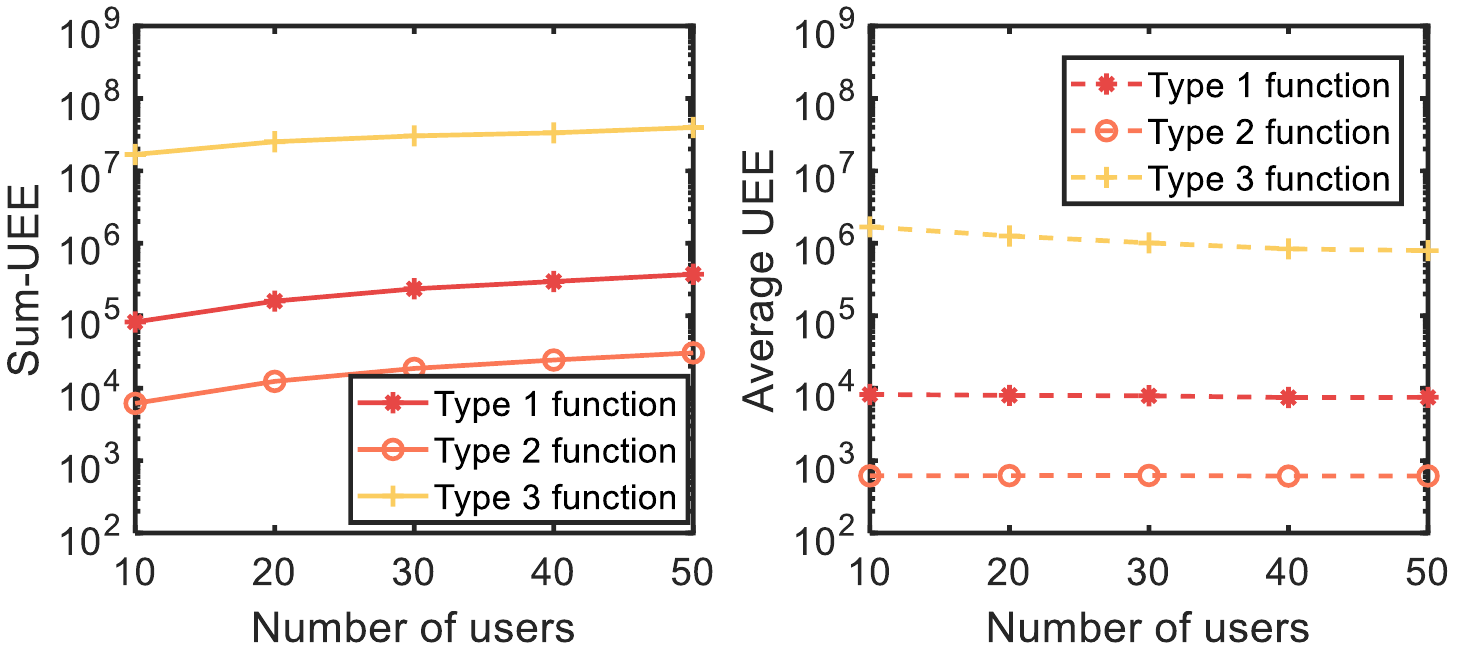}
    \caption{(a). Sum-UEE under different $N$. 
    (b). Average UEE under different $N$.
    }
    \vspace{-10pt}
    \label{fig:N}
\end{figure}

\begin{figure}[ht!]
    \centering
    \includegraphics[width=0.3\textwidth]{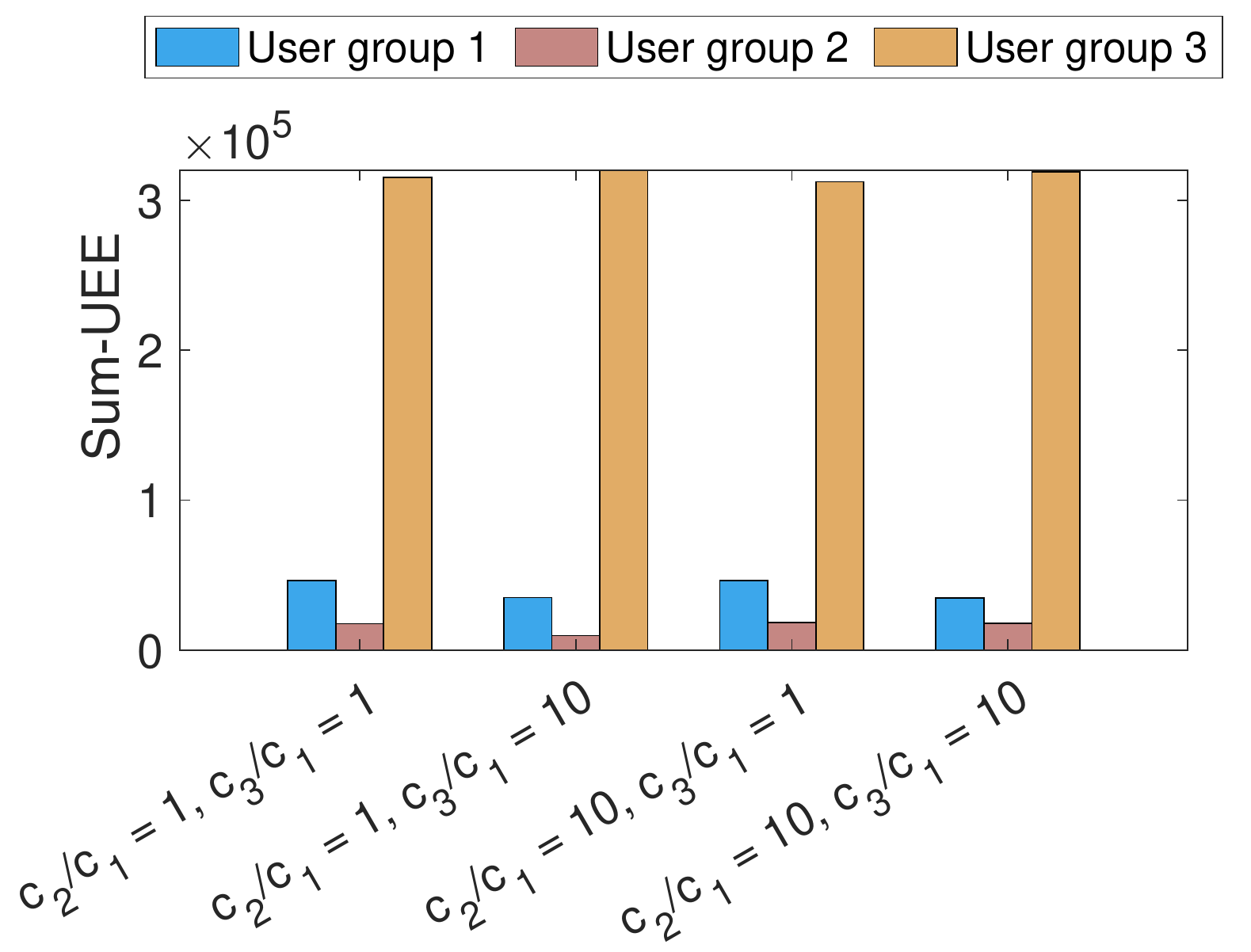}
    \caption{Sum-UEE of each user group with different priorities and utility functions. The utility functions are obtained from the table results in Section~\ref{realdata} with the SSV360  dataset~\cite{ssv360}. User group 1 uses $f_n(r_{n,\textnormal{s}})= 0.5424 \ln (1+ 37.2965 r_{n,\textnormal{s}})$. User group 2 uses $f_n(r_{n,\textnormal{s}})= 2.9351 (1- e^{-2.1224 r_{n,\textnormal{s}}})$. User group~3 uses $f_n(r_{n,\textnormal{s}})= 3.2956 {(r_{n,\textnormal{s}}/15.94)}^{0.2733}$. }
    \label{fig:uee_diff_types}
\end{figure}

\subsection{Heterogeneous types of utility functions among the users} \label{secHeterogeneous}

We use Figure~\ref{fig:uee_diff_types} to show that our studied system and proposed algorithm  allow heterogeneous types of utility functions among the users. We consider that $30$ users are evenly classified into three priority levels, corresponding to different weights $c_n$. Larger $c_n$ means more weight in our studied optimization.  
For example, the legend ``$c_2/c_1 = a, c_3/c_1 =b$'' in Fig.~\ref{fig:uee_diff_types} means  $10$ users in Group 1 with weight $c_1$, $10$ users in Group 2 with weight $a c_1$, and $10$ users in Group 3 with weight $b c_1$. 
 In Figure~\ref{fig:uee_diff_types}, we can see that with a fixed $c_2/c_1 $, increasing $c_3/c_1$ will improve  Group 3's sum-UEE, and reduce Group 1's and 2's sum-UEE; with a fixed $c_3/c_1 $, enlarging $c_2/c_1$ will enhance  Group 2's sum-UEE, and shrink Group 1's and~3's sum-UEE. The above simulation results are consistent with the intuition.

\section{Using our technique for global optimization of fractional programming} \label{secfractionalprogramming}

 Two recent papers~\cite{shen2018fractional,shen2018fractional2} by Shen and Yu are well-cited and have been considered breakthroughs in fractional programming. However, their proposed technique finds neither locally nor globally optimal solution. In contrast, with our technique of Section~\ref{Insights}, a globally optimal solution can be found.  The following problem is considered by Shen and Yu~\cite{shen2018fractional,shen2018fractional2}.

The following problem $\mathbb{P}_4$ is considered by Shen and Yu~\cite{shen2018fractional,shen2018fractional2}, where $A_n(\cdot)$, $B_n(\cdot)$, $C(\cdot)$, $g_m(\cdot)$, and $h_{\ell}(\cdot)$ are functions, with $A_n(\boldsymbol{x})>0$ and $B_n(\boldsymbol{x})>0$ for all $n=1,2,\ldots,N$.
\begin{subequations}\label{problem:4}
\begin{align}
   \textnormal{Problem $\mathbb{P}_4$:}  ~ &\min_{\boldsymbol{x}\in\mathbb{R}^J}~C(\boldsymbol{x}) +\sum_{n=1}^N\frac{A_n(\boldsymbol{x})}{B_n(\boldsymbol{x})}
    \tag{\ref{problem:4}}\\ 
   &  \textnormal{subject to: }   g_m(\boldsymbol{x}) \leq 0,\textnormal{ for $m=1,2,\ldots,M$},  \label{constra:gm} \\ &\hspace{55pt} h_{\ell}(\boldsymbol{x}) = 0,\textnormal{ for $\ell=1,2,\ldots,L$}.\label{constra:hell}
\end{align}
\end{subequations}

We introduce an auxiliary variable $\alpha_n$ to transform Problem $\mathbb{P}_{4}$ into the epigraph form. Let $\alpha_n \ge \frac{A_n(\boldsymbol{x})}{B_n(\boldsymbol{x})}$ and $\mathbb{P}_{4}$ can be transformed to the following equivalent form as $\mathbb{P}_5$:
\begin{subequations}\label{problem:5}
\begin{align}
   \textnormal{Problem $\mathbb{P}_5$:}  ~ &\min_{\boldsymbol{x}\in\mathbb{R}^J,\, \boldsymbol{\alpha}\in\mathbb{R}^N}~C(\boldsymbol{x}) +\sum_{n=1}^N\alpha_n
    \tag{\ref{problem:5}}\\ 
   &  \textnormal{subject to: }   A_n(\boldsymbol{x}) -\alpha_n B_n(\boldsymbol{x}) \leq  0,\textnormal{ for   $n=1,2,\ldots,N$},  \\ 
   & \hspace{55pt}g_m(\boldsymbol{x}) \leq 0,\textnormal{ for $m=1,2,\ldots,M$},  \label{constra:gm} \\ &\hspace{55pt} h_{\ell}(\boldsymbol{x}) = 0,\textnormal{ for $\ell=1,2,\ldots,L$},\label{constra:hell}
\end{align}
\end{subequations}
Similar to how we connect $\mathbb{P}_{2}$ and $\mathbb{P}_3(\bm{\beta},\bm{\nu})$ in Section~\ref{secTransforming}, we can connect $\mathbb{P}_{5}$ and $\mathbb{P}_6(\boldsymbol{\alpha},\boldsymbol{\beta})$ defined as follows:
\begin{subequations}\label{problem:6}
\begin{align}
  & \textnormal{Problem $\mathbb{P}_6(\boldsymbol{\alpha},\boldsymbol{\beta})$:} \nonumber \\ 
   &  \min_{\boldsymbol{x}\in\mathbb{R}^J}~C(\boldsymbol{x}) +\sum_{n=1}^N \alpha_n  + \sum_{n=1}^N\beta_{n} \cdot (A_n(\boldsymbol{x}) -\alpha_n B_n(\boldsymbol{x}) )
    \tag{\ref{problem:6}}\\ 
   &  \textnormal{subject to: }    g_m(\boldsymbol{x}) \leq 0,\textnormal{ for $m=1,2,\ldots,M$},  \label{constra:gm} \\ &\hspace{55pt} h_{\ell}(\boldsymbol{x}) = 0,\textnormal{ for $\ell=1,2,\ldots,L$},\label{constra:hell}
\end{align}
\end{subequations}

If $A_n(\cdot)$, $C(\cdot)$, $g_m(\cdot)$ are convex, $B_n(\cdot)$ is concave, and $h_{\ell}$ is affine, then $\mathbb{P}_6(\boldsymbol{\alpha},\boldsymbol{\beta})$ belongs to convex optimization.

The solving process of $\mathbb{P}_5$ (i.e., $\mathbb{P}_4$) is transformed into solving a series of parametric convex optimization $\mathbb{P}_6(\boldsymbol{\alpha},\boldsymbol{\beta})$ where $[\boldsymbol{\alpha},\boldsymbol{\beta}]$ is given so that there is no \mbox{non-convex} product term $\alpha_n B_n(\boldsymbol{x})$. The solving of each $\mathbb{P}_6$ is used to update $[\boldsymbol{\alpha},\boldsymbol{\beta}]$ under which $\mathbb{P}_6$ is solved again with the new  
$[\boldsymbol{\alpha},\boldsymbol{\beta}]$, where the update of $[\boldsymbol{\alpha},\boldsymbol{\beta}]$ is based on the KKT conditions of $\mathbb{P}_5$. The process is similar to what we have presented in Section~\ref{secTransforming} for $\mathbb{P}_{2}$ and $\mathbb{P}_3(\bm{\beta},\bm{\nu})$. 

\end{appendix}

\end{document}